\providecommand{\eg}{e.\,g.\xspace}
\providecommand{\ie}{i.\,e.\xspace}
\DeclareRobustCommand{\todo}[2][\empty]{
  \ifthenelse{\equal{#1}{\empty}}
    {\textsf{\textbf{\color{red}Todo: #2}}}
    {\textsf{\color{red}(#1)\textbf{ Todo: #2}}}
}
\DeclareRobustCommand{\comment}[2][\empty]{
  \ifthenelse{\equal{#1}{\empty}}
    {\textsf{\textbf{\color{green}Comment: #2}}}
    {\textsf{\color{green}(#1)\textbf{ Comment: #2}}}
}
\newcommand{\rref}[2][]{\prettyref{#2}}
\newcommand{\chimp}{\chi_{\text{m}}}
\newcommand{\chimpp}{\chi_{\tilde{\text{m}}}}
\newcommand{\fchig}{F(x,x^+)}
\newcommand{\fchisp}{\fchig}
\newcommand{\hpvars}{V}
\newcommand{\curvebot}{\ensuremath\textit{flight}}
\newcommand{\init}{\ensuremath\textit{A}}
\newcommand{\safe}{\ensuremath\textit{S}}
\newcommand{\fallback}{\ensuremath\beta}
\newcommand{\ctrlPrg}{\ensuremath\textit{ctrl}}
\newcommand{\ctrlOut}{\ensuremath u}
\newcommand{\plantPrg}{\ensuremath\textit{plant}}
\newcommand{\ctrl}[2][u]{\ensuremath{#1}{:}{\in}\textit{ctrl}({#2})}
\newcommand{\senseCtrl}{\ensuremath{\ctrl{\hat{y}}}}
\newcommand{\plant}[2][x]{\ensuremath \D{#1}{=}f({#1},{#2})}
\newcommand{\plantQ}[2][x]{\ensuremath \D{#1}{=}f({#1},{#2})~\&~\ivr}
\newcommand{\plantDur}[3][x]{\ensuremath \plant[#1]{#2}@#3}
\newcommand{\nsimilar}[3]{\ensuremath{#1{\in}\neighborhood{#2}{#3}}}
\newcommand{\prandiv}[3]{\ensuremath{#1{:}{\in}\neighborhood{#2}{#3}}}
\newcommand{\measure}[2]{\ensuremath{\prandiv{\hat{#1}}{#1}{#2}}}
\newcommand{\recallState}[1]{\ensuremath{\humod{#1_0}{#1};\humod{\hat{#1}_0}{\hat{#1}}}}
\newcommand{\estimatorname}{\ensuremath{e}}
\newcommand{\updateEstimator}[4]{\ensuremath{\humod{[#3,#4]}{\estimatorname(\hat{#1}_0,\hat{#1},#1-#1_0,#2,[#3_0,#4_0])}}}
\newcommand{\neighborhood}[2]{\ensuremath{\mathcal{B}_{#2}(#1)}}
\newcommand{\disturbed}[1]{\ensuremath{\tilde{#1}}}
\newcommand{\disturb}[2]{\ensuremath{\prandiv{\disturbed{#1}}{#1}{#2}}}
\newcommand{\marginProp}[4]{\ensuremath{\forall #1{\in}\neighborhood{#2}{#3}#4}}
\newcommand{\violationbound}{\varepsilon}
\newcommand{\Ii}{\dLint[const=I,state=\mu]}
\newcommand{\tIi}{\dLint[const=I,state=\tilde{\mu}]}
\newcommand{\tI}{\dLint[const=I,state=\tilde{\omega}]}
\newcommand{\tIt}{\dLint[const=I,state=\tilde{\nu}]}
\newcommand*{\Iw}[1][]{\dLint[state={#1}]}
\newcommand{\stold}{\iget[state]{\I}}
\newcommand{\stintermediate}{\iget[state]{\Ii}}
\newcommand{\stnew}{\iget[state]{\It}}
\newcommand{\irelmodels}[3]{(\iget[state]{#1},\iget[state]{#2}) \models #3}
\newcommand{\vownship}{\ensuremath{v_o}}
\newcommand{\vintruder}{\ensuremath{v_i}}
\newcommand{\wangular}{\ensuremath{w}}
\newcommand{\wownship}{\ensuremath{\wangular}}
\newcommand{\wintruder}{\ensuremath{\wangular_i}}
\newcommand{\costheta}{\ensuremath{\cos\theta}}
\newcommand{\sintheta}{\ensuremath{\sin\theta}}
\newcommand{\xrel}{\ensuremath{x}}
\newcommand{\yrel}{\ensuremath{y}}
\newcommand{\wDisturbance}{\ensuremath{\Delta}}
\newcommand{\straightinv}{\ensuremath{\mathcal{I}}}
\newcommand{\straightinvlhs}{\ensuremath{\mathcal{S}}}
\newcommand{\straightinvlhsdef}{\ensuremath{\vintruder\sintheta\xrel - (\vintruder\costheta-\vownship)\yrel}}
\newcommand{\straightinvdef}{\ensuremath{\straightinvlhsdef > \vownship+\vintruder}}
\newcommand{\flightinv}[1]{\ensuremath{\mathcal{J}(#1)}}
\newcommand{\flightinvlhs}{\ensuremath{\mathcal{T}}}
\newcommand{\flightinvlhsdef}[1]{\ensuremath{\vintruder #1 \sintheta\xrel - \vintruder #1 \costheta\yrel + \vownship\vintruder\costheta}}
\newcommand{\flightinvdef}[1]{\ensuremath{\flightinvlhsdef{#1} > \vownship\vintruder + \vintruder #1}}
\newcommand{\pilot}{\ensuremath{w_p}}
\newcommand{\curvebotctrl}{\ensuremath{
\pchoice
{\bigl(\humod{\wownship}{0};~\ptest{\straightinv}\bigr)}
{\bigl(\humod{\wownship}{1};~\ptest{\flightinv{\wownship}}\bigr)}}}
\newcommand{\flightUpsilon}{\ensuremath{\xrel^+{=}\xrel \land \yrel^+{=}\yrel \land \theta^+{=}\theta \land \wownship^+{=}\wownship \land \xrel_0^+{=}\xrel_0 \land \yrel_0^+{=}\yrel_0 \land \theta_0^+{=}\theta_0}}
\newcommand{\curvebotplant}{\ensuremath{\{ \D{\xrel} = \vintruder\costheta - \vownship + \wownship\yrel \syssep \D{\yrel} = \vintruder \sintheta - \wownship\xrel \syssep 
      \D{\theta} = -\wownship \}}}
\definecolor{semblue}{rgb}{0,0,0.7}
\definecolor{vgreen}{rgb}{.1,.5,0}
\definecolor{vred}{rgb}{.7,0,0}
\definecolor{vblue}{rgb}{.1,.15,.62}
\tikzset{
    state/.style={
    		circle,
           draw=black, thick
           },
    plant/.style={
           rectangle,
           rounded corners,
           draw=black, thick,
           minimum height=2em,
           inner sep=2pt,
           text centered,
           },
    every pin/.style={fill=gray!20!white,rectangle,rounded corners=3pt},
    small dot/.style={fill=black,circle,scale=0.3}
}
\tikzstyle{transitionsystem}=[->,>=stealth',shorten >=1pt,auto,semithick,initial text={},
\tikzstyle{accepting}=[accepting by arrow]
\definecolor{lsblue}{HTML}{16303A}
\definecolor{lslightblue}{HTML}{2E6579}
\definecolor{lsverylightblue}{HTML}{4699B9}
\definecolor{lsgreen}{HTML}{5ECEF9}
\definecolor{lslightgreen}{HTML}{54B9DF}
\definecolor{lsrealgreen}{HTML}{2FCC49}
\definecolor{lsred}{HTML}{CC2F4F}
\definecolor{lsochre}{HTML}{B26817}
\definecolor{lsyellow}{HTML}{FFD91C}
\begin{document}
\title{Verified Runtime Model Validation for\\ Partially Observable Hybrid Systems\thanks{
This material is based upon work supported by the National Science Foundation under NSF CAREER Award CNS-1054246 and by AFOSR under grant number FA9550-16-1-0288 and by Defense Advanced Research Projects Agency (DARPA) under grant number FA8750-18-C-0092.
The views and conclusions contained in this document are those of the authors and should not be interpreted as representing the official policies, either expressed or implied, of any sponsoring institution, the U.S. government or any other entity.
}
}
\titlerunning{Verified Runtime Validation for Partially Observable Hybrid Systems}

\author{Stefan Mitsch \and Andr{\'e} Platzer}

\authorrunning{Stefan Mitsch, Andr{\'e} Platzer} 
\institute{Computer Science Department\\Carnegie Mellon University, Pittsburgh PA 15213, USA\\
\email{\{smitsch,aplatzer\}@cs.cmu.edu}
}
\maketitle

\begin{abstract}
Formal verification provides strong safety guarantees but only for \emph{models} of cyber-physical systems.
\emph{Hybrid system models} describe the required interplay of computation and physical dynamics, which is crucial to guarantee what computations lead to safe physical behavior (e.g., cars should not collide).
Control computations that affect physical dynamics must \emph{act in advance} to avoid possibly unsafe future circumstances.
Formal verification then ensures that the controllers correctly identify and provably avoid unsafe future situations \emph{under a certain model of physics}.
But any model of physics necessarily deviates from reality and, moreover, any observation with real sensors and manipulation with real actuators is subject to uncertainty.
This makes runtime validation a crucial step to monitor whether the model assumptions hold for the real system implementation.

The key question is what property needs to be runtime-monitored and what a satisfied runtime monitor entails about the safety of the system: the observations of a runtime monitor only relate back to the safety of the system if they are themselves accompanied by a proof of correctness!
For an unbroken chain of correctness guarantees, we, thus, \emph{synthesize runtime monitors in a provably correct way} from provably safe hybrid system models.
This paper addresses the inevitable challenge of making the synthesized monitoring conditions robust to \emph{partial observability} of sensor uncertainty and \emph{partial controllability} due to actuator disturbance.
We show that the monitoring conditions result in provable safety guarantees with fallback controllers that react to monitor violation at runtime.
\end{abstract}

\section{Introduction}
\label{sec:intro}

Correctness arguments for cyber-physical systems (CPSs) crucially depend on models in order to enable predictions about their future behavior.
Absent any models, neither tests nor verification provide any correctness results except for the limited amount of concrete (test) cases, because they cannot provide predictions for other cases.
Models of physics are crucial in CPSs, but necessarily deviate from reality.
Even cyber components may come with surprises when their detailed implementation is more complex than their verification model.
Linking cyber and physical components with sensors and actuators adds another layer of uncertainty.
These discrepancies are inevitable and call into question how safety analysis results about models can ever transfer to CPS implementations.
Not using models, however, would invalidate all predictions \cite{DBLP:conf/hybrid/PlatzerC07}.

Even though rigorous correct-by-construction approaches promise provably correct implementation of the software portion of the model, their correctness is still predicated on meeting the assumptions of the model about the physical environment and the physical effects of actuators.\footnote{Some implementations deliberately choose to implement the model in a liberal way to interact with unverified components (\eg, use machine learning), or to allow for adaptation at runtime. These implementations require monitoring of the implementation for compliance with its model.}
Whether or not these environment and actuator effects are faithfully represented in the model can only be checked from actual measurements \cite{DBLP:journals/fmsd/MitschP16} during system operation at runtime.

\begin{figure}[tb]
\centering
\pgfdeclarelayer{background}
\pgfdeclarelayer{foreground}
\pgfsetlayers{background,main,foreground}

\begin{subfigure}[b]{.4\linewidth}
\centering
\begin{tikzpicture}
\tikzstyle{snake it} = [decoration={snake,amplitude=.4mm,segment length=2mm,post length=1mm}];
\node[draw,fill=lsyellow,rectangle,rounded corners,double distance=2pt,minimum width=3cm,minimum height=.5cm,text width=3cm] (unsafe-1) {\centering $\didia{\textit{car}}\textit{collision}$\\\begin{scriptsize}(\eg, $d{<}1\text{ft}$ at $v{>}5\text{mph}$)\end{scriptsize}};
\draw[double distance=3pt,rounded corners] ($(unsafe-1.east)+(0,.2cm)$) -- ($(unsafe-1.east)+(1.1cm,.2cm)$) -- ($(unsafe-1.east)+(1.1cm,-.2cm)$) -- ($(unsafe-1.east)+(0,-.2cm)$);
\node[draw,fill=lsred!40,rectangle,rounded corners,inner sep=2pt,right=-5pt of unsafe-1] (unsafe) {$\textit{collision}$};
\draw[->,decorate,snake it] ($(unsafe.west)+(-12pt,5pt)$) -- ($(unsafe.west)+(2pt,0pt)$);
\draw[->,decorate,snake it] ($(unsafe-1.west)+(5pt,0pt)$) -- ($(unsafe-1.west)+(17pt,5pt)$);

\begin{pgfonlayer}{background}
	\node[draw,rectangle,rounded corners,fill=lsrealgreen!40,minimum width=5cm,minimum height=2cm,anchor=east,fit=(unsafe)(unsafe-1)] {};
  \node[above=0pt of unsafe-1] (safetyproof) {$\dbox{\textit{car}}\lnot\textit{collision}$};
  \draw[->,decorate,snake it] ($(safetyproof.north)+(-50pt,-8pt)$) -- ($(safetyproof.north)+(-30pt,-5pt)$);  
  \draw[->,decorate,snake it] ($(safetyproof.north)+(30pt,-5pt)$) -- ($(safetyproof.north)+(45pt,-5pt)$);  
   \draw[->,decorate,snake it] ($(safetyproof.north)+(47pt,-5pt)$) -- ($(safetyproof.north)+(65pt,-10pt)$);  
   \draw[->,decorate,snake it] ($(safetyproof.north)+(67pt,-10pt)$) -- ($(safetyproof.north)+(80pt,-15pt)$);  
   \draw[->,decorate,snake it] ($(safetyproof.north)+(-50pt,-45pt)$) -- ($(safetyproof.north)+(-20pt,-50pt)$);  
   \draw[->,decorate,snake it] ($(safetyproof.north)+(-18pt,-50pt)$) -- ($(safetyproof.north)+(40pt,-50pt)$);  
   \draw[->,decorate,snake it] ($(safetyproof.north)+(42pt,-50pt)$) -- ($(safetyproof.north)+(80pt,-40pt)$);  
\end{pgfonlayer}
\end{tikzpicture}
\caption{A proof of $\dbox{\textit{car}}\lnot\textit{collision}$ shows that all runs are safe, which in particular means the model $\textit{car}$ avoids paths to collision ($\didia{\textit{car}}\textit{collision}$)}\label{fig:overviewsafety}
\end{subfigure}%
\qquad
\begin{subfigure}[b]{.5\linewidth}
\centering
\begin{tikzpicture}
\tikzstyle{snake it} = [decoration={snake,amplitude=.4mm,segment length=2mm,post length=1mm}];
\node[draw,fill=lsrealgreen!40,rectangle,rounded corners,minimum width=1.5cm,minimum height=.5cm] (unsafe-3) {};
\node[draw,fill=lsrealgreen!40,rectangle,rounded corners,minimum width=1.5cm,minimum height=.5cm,right=-5pt of unsafe-3] (unsafe-2) {};
\node[draw,fill=lsyellow,rectangle,rounded corners,inner sep=2pt,minimum width=1.2cm,minimum height=.5cm,above right=5pt and 25pt of unsafe-2] (unsafe-1) {};
\node[draw,fill=lsrealgreen!40,rectangle,rounded corners,inner sep=2pt,minimum width=1.2cm,minimum height=.5cm,below right=5pt and 25pt of unsafe-2] (safe-1) {};
\node[draw,fill=lsred!40,rectangle,rounded corners,inner sep=2pt,right=-5pt of unsafe-1] (unsafe) {$\textit{collision}$};
\draw[->,decorate,snake it] ($(unsafe-3.south west)+(0,2pt)$) -- ($(unsafe-2.north west)+(0,-2pt)$);
\draw[->,decorate,snake it] ($(unsafe-2.north west)+(0,-2pt)$) -- (unsafe-2.east);
\draw[->,decorate,snake it] (unsafe-1.west) -- (unsafe.west);
\draw[->,decorate,snake it] (safe-1.west) -- (safe-1.east);
\node[fill=lsrealgreen!40] at (unsafe-3) {$\textit{car}$};
\node[below=-5pt of unsafe-3] {$\pmb{\checkmark}$};
\node[fill=lsrealgreen!40] at (unsafe-2) {$\textit{car}$};
\node[below=-5pt of unsafe-2] {$\pmb{\checkmark}$};
\node[fill=lsyellow] at (unsafe-1) {$a$};
\node[above=0pt of unsafe-1] (unsafe-1-explain) {$\didia{a}\textit{collision}$ possible};
\node[fill=lsrealgreen!40] at (safe-1) {$b$};
\node[below=0pt of safe-1] (safe-1-explain) {$\dibox{b}\lnot\textit{collision}$ guaranteed};

\node[right=0pt of unsafe-2] (decision) {$a \cup b$};
\draw[->,thick] (decision) |- node[left] {monitor: $a$ is like $\textit{car}~\textbf{\Lightning}$} (unsafe-1);
\draw[->,thick] (decision) |- node[left] {monitor: $b$ is like $\textit{car}~\pmb{\checkmark}$} (safe-1);

\end{tikzpicture}
\caption{Monitors detect when a real system is about to enter paths to the unsafe states}\label{fig:overviewmonitor}
\end{subfigure}

\caption{Logical characterization of safety and monitors}
\label{fig:overview}
\vspace{-\baselineskip}
\end{figure}

The key question, however, is what property needs to be \emph{runtime-monitored on sensor measurements} and what a satisfied monitor implies about the safety of the system in terms of the \emph{possible true values}.
There is a fundamental asymmetry of the power of runtime monitoring in CPS compared to purely discrete software systems.
In pure software it may suffice to monitor the critical property itself and suspend the software upon violation, raising an exception to propagate the violation to surrounding code for mitigation.
\emph{In cyber-physical systems, any such attempt would be fundamentally flawed, because there is no way of reverting time and trying something else!}

A desired property of a self-driving car is to be collision-free (stay outside the red region in \rref{fig:overviewsafety}), which, \eg, might be expressed as always keeping at least 1ft distance to other cars.
Even though this property states a good high-level goal, monitoring whether it will be maintained requires checking membership in a much smaller region that makes it possible to predict that \emph{all} future behavior also respects the property, not just the present state (where a collision may be inevitable on violation).
In our approach,
\begin{itemize}
\item offline safety proofs ensure that a controller avoids all safety violations with respect to an explicit model of physics, sensor uncertainty, and actuator disturbance (see green region in \rref{fig:overviewsafety}, formula $\dbox{\textit{car}}\lnot\textit{collision}$ says that all runs of the car model avoid collision)---this reduces system correctness to the validation question ``did we build the right model?'',
\item formal characterizations of monitor conditions unambiguously and provably correctly describe how to validate a model at runtime and detect discrepancies between the model and the true system from sensor measurements,
\item offline synthesis proofs turn formal monitor characterizations into runtime monitors that act \emph{in advance based on sensor measurements} with provable safety guarantees about the true CPS (see \rref{fig:overviewmonitor}: monitor $\checkmark$ means the system behaves like the model $\textit{car}$ and so we know $\lnot\textit{collision}$ is true in that state from the safety proof; if a monitor determines that a runtime behavior $a$ deviates from the model $\textit{car}$ then we cannot conclude $\dibox{a}\lnot\textit{collision}$, so $\didia{a}\textit{collision}$ is possible and fallback control engaged in a Simplex-style architecture \cite{DBLP:journals/software/Sha01,DBLP:conf/iccps/BakMMC11}), and
\item correctness theorems justify that \emph{all} future model behavior will be safe if the monitor is satisfied at runtime (see \rref{fig:overviewmonitor}: monitor determines that runtime behavior $b$ is like the model $\textit{car}$ whose future behavior is guaranteed collision-free from the safety proof $\dibox{\textit{car}}\lnot\textit{collision}$), and recoverability theorems establish the effectiveness of fallback control on monitor violation.
\end{itemize}

These proofs, formal characterizations, and theorems simultaneously justify model correctness and system safety through a verified link between offline verification and runtime monitoring, and ensure that no assumption is missed that needs to be checked for system correctness.
Such a provably safe monitoring approach crucially requires models of discrete and continuous dynamics and logical foundations for analyzing both necessity $\dbox{\alpha}{\textit{safe}}$ and possibility $\ddiamond{\alpha}{\lnot\textit{safe}}$ in the same framework, which is why we chose differential dynamic logic \cite{Platzer18,DBLP:journals/jar/Platzer17}.
For example, the specification ``keep at least 1ft distance to other cars'' in our approach considers the dynamics and results in a monitor ``always keep stopping distance at least as a specific function of velocity etc.'' that acts ahead of time as opposed to after a collision is already inevitable.
Verified machine code of runtime monitors could be obtained with VeriPhy \cite{DBLP:conf/pldi/BohrerTMMP18}.

Based on ModelPlex \cite{DBLP:journals/fmsd/MitschP16}, this paper addresses the fundamental challenge how such an approach can systematically handle the inevitable complications of sensor uncertainty and actuator disturbance in \emph{partially observable hybrid systems}.
What makes monitoring fundamentally more complex in partial observability cases is that the very state variables that are responsible for the accuracy of the model cannot be measured! The monitor, thus, needs to settle for drawing indirect conclusions about the unobservable state from observations about the evolution of other state variables over time.


\section{Preliminaries: Differential Dynamic Logic by Example}
\label{sec:dl}

This section recalls \emph{differential dynamic logic \dL} \cite{Platzer18,DBLP:journals/jar/Platzer17}, which we use to syntactically characterize the semantic conditions required for correctness of the ModelPlex runtime monitoring approach.
We exploit the proof calculus of \dL \cite{Platzer18,DBLP:journals/jar/Platzer17} to guarantee correctness of the monitors produced for concrete CPS models.
This section also introduces our running example of a simple flight collision avoidance protocol.

\paragraph{Syntax summary.}

\begin{table}[b]
  \newcommand{\foform}{F\xspace}
  \centering
  \caption{Hybrid program (HP) representations of hybrid systems}  
  \begin{tabular}{l@{~~}l@{~}}
    \toprule
       \multicolumn{1}{l@{~~}}{Statement} 
    & \multicolumn{1}{c}{Effect}
    \\
    \midrule 
    $\alpha;\beta$ & sequential composition, first run hybrid program~$\alpha$, then hybrid program~$\beta$  \\
    $\pchoice{\alpha}{\beta}$ 
    	& nondeterministic choice, following either hybrid program~$\alpha$ or $\beta$\\
    $\prepeat{\alpha}$ & nondeterministic repetition, repeats $n\geq 0$ times hybrid program~$\alpha$ \\
    $\pupdate{\umod{x}}{\theta}$ & assign value of term $\theta$ to variable $x$ (discrete jump)\\
    $\pupdate{\umod{x}}{*}$ & assign arbitrary real number to variable $x$\\
    $\ptest{\foform}$ & check that formula $F$ holds in current state, and abort if it does not\\
    $\bigl(\D{x_1}=\theta_1,\dots,$ & evolve $x_i$ along differential equation system $\D{x_i} = \theta_i$\\ 
    \quad{$\D{x_n}=\theta_n ~\&~ \foform\bigr)$} & for any duration within evolution domain~$\foform$\\
    \bottomrule
  \end{tabular}
  \label{tab:hybridprograms}
\end{table}

Differential dynamic logic uses hybrid programs as a notation for hybrid systems (\rref{tab:hybridprograms}).
The set of \dL~formulas is generated by the following grammar (\m{{\sim}\in\{<,\leq,=,\geq,>\}} and $\theta_1,\theta_2$ are arithmetic expressions in~\m{+,-,\cdot,/} over the reals):
\[
\phi ::= \theta_1 \sim \theta_2 \mid \neg \phi \mid \phi \wedge \psi \mid \phi \vee \psi \mid \phi \rightarrow \psi \mid \forall x \phi \mid \exists x \phi \mid \dibox{\alpha}{\phi} \mid \didia{\alpha} \phi
\]

\dL allows us to make statements that we want to be true for all runs of a hybrid program ($\dibox{\alpha}\phi$) or for at least one run ($\didia{\alpha}\phi$).
Both constructs are necessary to derive safe monitors.
We need proofs of $\dibox{\alpha}\phi$ so that we can be sure all behavior of a model are safe. We need proofs of $\didia{\alpha}\phi$ to identify when a system execution can fit to the verified model.
We use \dL's verification technique to prove such correctness properties of hybrid programs \cite{Platzer18,DBLP:journals/jar/Platzer17} as implemented in the \KeYmaeraX prover~\cite{DBLP:conf/cade/FultonMQVP15} .

\paragraph{Example: Horizontal flight collision avoidance.}
We model a simple horizontal collision avoidance protocol for two constant-speed airplanes \cite{Tomlin1998,DBLP:journals/jais/GhorbalJZPGC14}: our controlled ownship takes angular velocity $\wownship$ as pilot commands and can fly straight ($\pumod{\wownship}{0}$) or enter a circular wait pattern ($\pumod{\wownship}{1}$) to avoid collision with a straight-path intruder airplane.

\begin{wrapfigure}[8]{r}[0pt]{3.2cm}
\vspace{-2\baselineskip}
\begin{tikzpicture}
\node (ownship) at (0,0) {\includegraphics[width=1cm]{./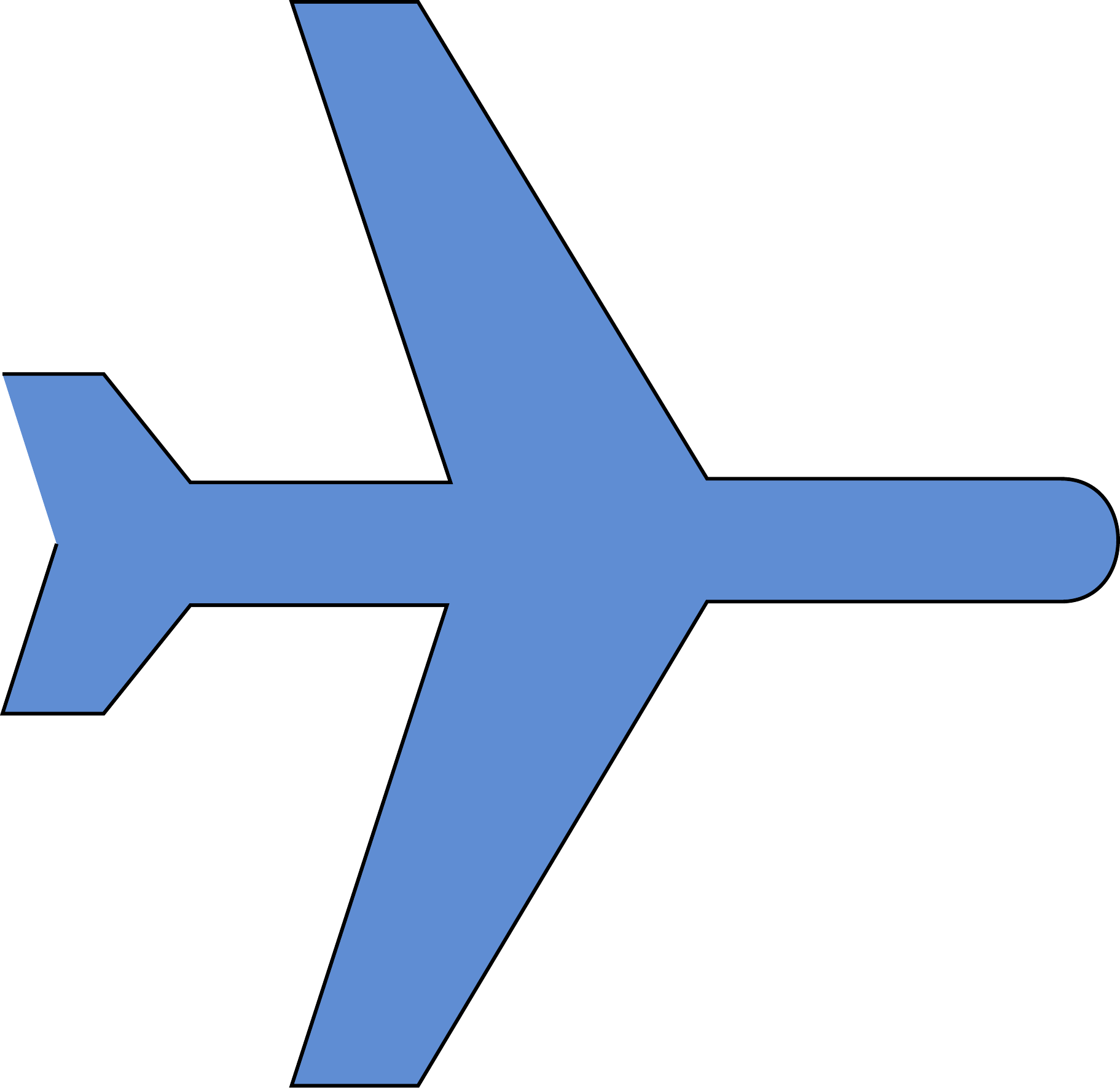}};
\coordinate (intruder) at (1.5,-1.2);
\draw[fill=lsred!20] (intruder.center) -- ($(intruder.center)+(0.7,0)$) arc (0:120:0.7) node[midway,anchor=south west] {$\theta$} -- cycle;
\node at (intruder.center) {\includegraphics[width=1cm,angle=120,origin=c]{./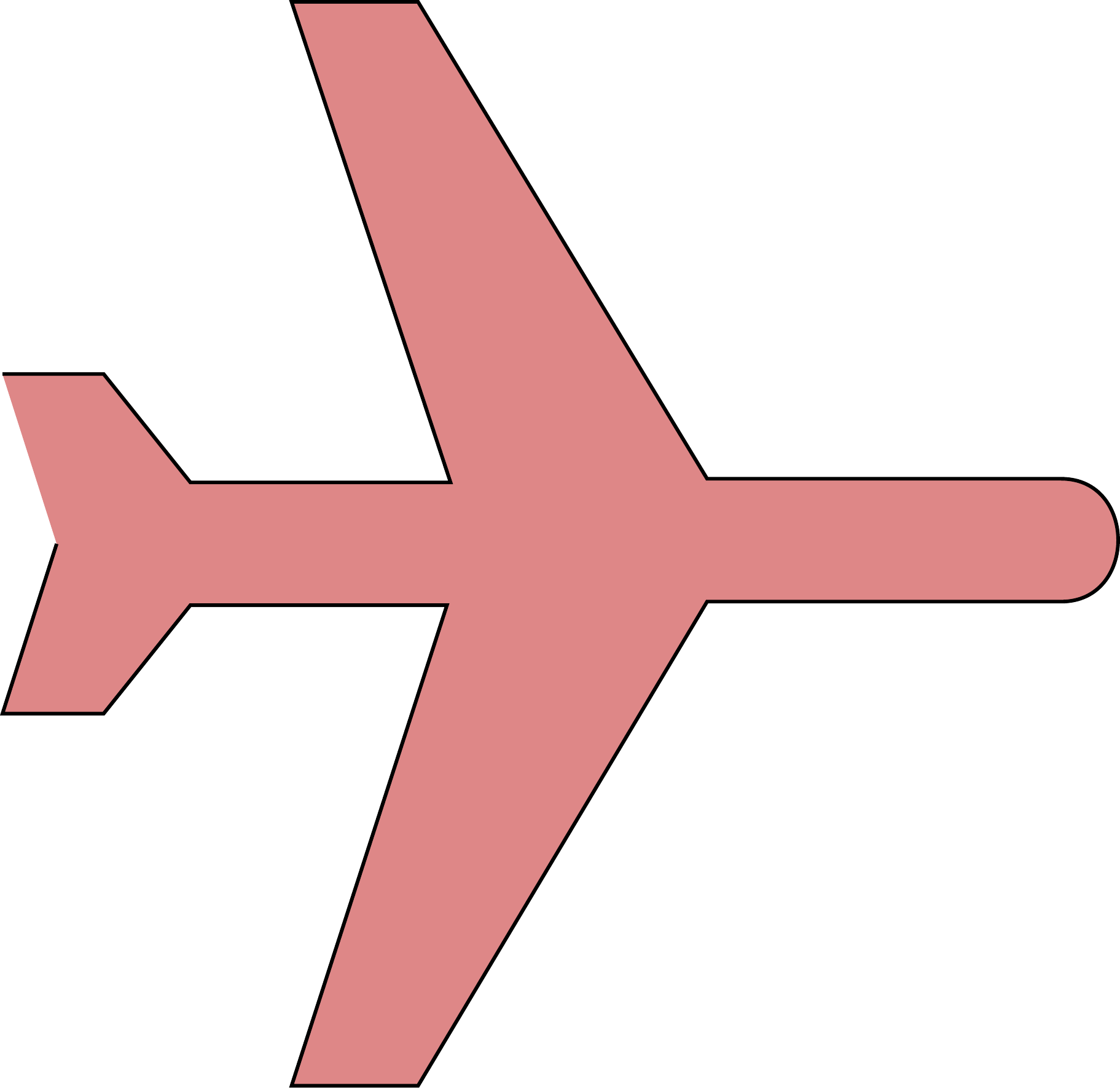}};
\draw[->,very thick] (ownship.center) -- node[anchor=south] {$\vownship$} ($(ownship.center)+(1,0)$);
\draw[->,very thick] (intruder.center) -- node[anchor=east] {$\vintruder$} ($(intruder.center)+(-0.6,1)$);
\draw[->] (ownship.center) -- ($(ownship.center)+(2.5cm,0)$);
\draw[->] (ownship.center) -- ($(ownship.center)+(0,-2cm)$);
\draw[dashed] (intruder.center) -- ($(intruder.center)+(-1.5,0)$);
\draw[dashed] (intruder.center) -- ($(intruder.center)+(0,1.2)$);
\node[anchor=east] at (0,-1.2) {$\yrel$};
\node[anchor=south] at (1.5,0) {$\xrel$};
\end{tikzpicture}
\end{wrapfigure}

The linear and angular velocities of ownship and intruder are independently controlled, but for position and orientation we use a reference frame relative to the ownship centered at $(0,0,0)$ while the intruder is at $(x,y,\theta)$.
The ownship moves with constant linear velocity $\vownship$ and pilot-controlled angular velocity $\wownship$ along a straight line or circle, the intruder with constant linear velocity $\vintruder$ on a straight path.
\begin{align}
\curvebot & \equiv \prepeat{\left(\ctrlPrg;\plantPrg\right)}\label{eq:curvebot:loop}\\
\ctrlPrg &\equiv \curvebotctrl\label{eq:curvebot:ctrl}\\
\straightinv &\equiv \straightinvdef\\
\flightinv{\wangular} &\equiv \flightinvdef{\wangular}\\
\plantPrg &\equiv \curvebotplant\label{eq:curvebot:plant}
\end{align}
The HP $\curvebot$ describes the pilot and  collision avoidance controller $\ctrlPrg$ of the ownship and flight dynamics $\plantPrg$ of both airplanes.
Controller and flight dynamics are repeated nondeterministically often, indicated by $\prepeat{}$ in \eqref{eq:curvebot:loop}.
The pilot has two control choices:
The pilot may choose a straight path $\humod{\wownship}{0}$ if $\ptest{\straightinv}$ indicates that it is safe, or a circular evasion maneuver $\humod{\wownship}{1}$ if allowed by $\ptest{\flightinv{\wownship}}$ in \eqref{eq:curvebot:ctrl}. 
The flight dynamics \eqref{eq:curvebot:plant} keep the (moving) ownship at the origin by combining both ownship and intruder motion in the relative position $(\xrel,\yrel)$; the differential equation $\D{\theta}=-\wownship$ rotates the reference frame.

Formula \eqref{eq:curvebot:safe} specifies safety of the flight protocol $\curvebot$: all runs that start in states satisfying the assumptions $\init$ must stay in states satisfying the safety condition $\safe$.
\begin{equation}\label{eq:curvebot:safe}
\underbrace{\vownship=1 \land \vintruder=1 \land \xrel^2+\yrel^2>0}_{\mathclap{\text{assumptions}~\init}} \limply \dbox{\curvebot}\underbrace{\xrel^2+\yrel^2>0}_{\mathclap{\text{safety}~\safe}}
\end{equation}

\paragraph{Semantics.}
The semantics of \dL \cite{DBLP:journals/jar/Platzer17,DBLP:conf/lics/Platzer12a} is a Kripke semantics in which the states of the Kripke model are the states of the hybrid system, which are maps \m{{\iget[state]{\I}}:{\allvars\to\reals}} assigning a real value $\iget[state]{\I}(x)$ to each variable $x\in\allvars$.
We write $\iget[state]{\I}_x^r$ for the state $\tilde{\iget[state]{\I}}$ that equals $\iget[state]{\I}$ except that $\tilde{\iget[state]{\I}}(x)=r$.
We write~\m{\imodels{\I}{\phi}} if formula~$\phi$ is true in~$\iname[state]{\I}$~$\iget[state]{\I}$.
The semantics of a hybrid program \(\alpha\) is a relation \(\ivaluation{}{\alpha}\) between initial and final states.
For example \(\imodels{\I}{\dbox{\alpha}\phi}\) iff \(\imodels{\It}{\phi}\) for all 
\(\iaccessible[\alpha]{\I}{\It}\), so all runs of $\alpha$ from $\iget[state]{\I}$ are safe.

We use the following notation to refer to variables of hybrid programs and \dL formulas \cite{DBLP:journals/jar/Platzer17}: $\freevars{\alpha}$ and $\freevars{\phi}$ are the free variables, $\boundvars{\alpha}$ and $\boundvars{\phi}$ are the bound variables of a program or formula, respectively, and the complement is $\scomplement{\boundvars{\alpha}}$.
We use $\vars{\alpha} = \freevars{\alpha} \cup \boundvars{\alpha}$ to denote the set of all variables of $\alpha$.

\vspace{-.5\baselineskip}
\paragraph{Notation.}

To concisely handle intervals that we will need for tolerances in partially observable hybrid systems, we use shortcut notation.
We use $\nsimilar{x}{y}{{[l,u]}}$ to say that $x$ is in the interval $[l,u]$ around $y$ (so $y+l \leq x \leq y+u$) and $\nsimilar{x}{y}{\Delta}$ to say $\nsimilar{x}{y}{{[-\Delta,\Delta]}}$.
We use $\prandiv{x}{y}{\Delta}$ to refer to a program that nondeterministically picks any value from the interval $[y-\Delta,y+\Delta]$ for $x$, which is the hybrid program $\prandom{x};~\ptest{y-\Delta \leq x \land x \leq y+\Delta}$. The notation $x{:}{\in}\alpha$ is synonymous with $\alpha$ but emphasizes that the vector of variables $x$ are the only $\boundvars{\alpha}$.
In monitors, we use $x$ to denote the vector of present state variables and $x^+$ for the vector of variables in the next state.
We use $\plantDur{u}{\varepsilon}$ to refer to an ODE that runs for time $\varepsilon$, which is the hybrid program $\humod{t}{0};~\{\plant{u},\D{t}=1 ~\&~t\leq\varepsilon\};~\ptest{t{=}\varepsilon}$.
We abbreviate ODEs to the \emph{interpolated plant effect} $z{-}z_0$ where $z$ means $\stnew(x)$ and $z_0$ means $\stold(x)$ for 
$\iaccessible[\plantDur{\ctrlOut}{\varepsilon}]{\I}{\It}$.

\section{Framework: Monitor Synthesis for Verified Runtime Validation}
\label{sec:approach}

CPS are almost impossible to get right without rigorous safety analysis, for instance by formal verification.
Performed offline, these approaches result in a verified model of a CPS, \ie formula \eqref{eq:generic-safetymodel} is proved valid, for example using the differential dynamic logic proof calculus \cite{DBLP:journals/jar/Platzer17} implemented in  \KeYmaeraX~\cite{DBLP:conf/cade/FultonMQVP15}:
\begin{equation}\label{eq:generic-safetymodel}
\init \limply \dibox{\prepeat{\alpha}}\safe
\end{equation}

The model $\prepeat{\alpha}$ is a hybrid program as in \rref{sec:dl}, and so it describes the discrete control actions of the controllers in the system as well as the continuous physics of the plant and the system's environment.
A formula $\inv$ is an inductive invariant of program $\alpha$ for \eqref{eq:generic-safetymodel} if $A\limply\inv$ and $\inv \limply \dbox{\alpha}\inv$ and $\inv\limply S$ are valid.

Whether or not the control choices, actuator and environment effects are faithfully represented in the model can only be checked from measurements \cite{DBLP:journals/fmsd/MitschP16}. 
Intuitively, such checks compare the values of variables $x$ in state $\nu_{i-1}$ to the values in state $\nu_i$ (stored in $x^+$) taken at successive sample times to check compatibility of the unknown system behavior $\gamma_{i-1}$ with model $\prepeat{\alpha}$ (\rref{fig:approach}).

\begin{wrapfigure}[9]{r}[0pt]{.45\textwidth}
\vspace{-5ex}
\begin{tikzpicture}[transitionsystem,
ctrlblock/.style={
	draw,dashed,rounded corners,fill=gray!40}
]
  \node[initial,state] (initial)                      {};
  \node[state,minimum size=1cm]         	(n0) [right of=initial,xshift=-0.5cm] {$\nu_{i-1}$};
  \node[state,minimum size=1cm]         	(n1) [right of=n0,xshift=0.5cm] {$\nu_{i}$};
  
  \draw (initial) -- node[above] {$\gamma_{i-2}$} node[below] {$\subseteq \prepeat{\alpha}$} (n0);
  \draw (n0) -- node [above] {$\gamma_{i-1}$} node [below] {$\stackrel{?}{\subseteq}\prepeat{\alpha}$}(n1);  
  
  \draw[-,decorate,decoration={brace,amplitude=3pt,mirror}] 
    ($(n0.south)+(0.1,-0.2)$) -- node [below=0.1,text width=5.2cm,align=center,xshift=-0.8cm] {\textbf{Model monitor:} measurements and control decisions agree with model?}
    ($(n1.south)+(-0.1,-0.2)$); 	
\end{tikzpicture}
\caption{System transition $\gamma_i$ checked for validation against model transition $\prepeat{\alpha}$.}
\label{fig:approach}
\end{wrapfigure}

For example, measurements $x{=}2$ and $x^+{=}3$ are compatible with the repeated differential equation $\prepeat{(\D{x}{=}x^2+x)}$, because starting at $x$ can produce $x^+$, as witnessed by a proof of the \dL formula $x{=}2 \land x^+{=}3 \limply \didia{\prepeat{(\D{x}{=}x^2+x)}}(x^+{=}x)$. 
No $x^+ < x$ is compatible with the program, since the program can reach only states where $x^+\geq x$.

ModelPlex~\cite{DBLP:journals/fmsd/MitschP16} provides the basis for obtaining such runtime checks from hybrid system models both \emph{automatically} and in a \emph{provably correct manner} (\rref{app:arithmeticalform} shows a basic example how a ModelPlex proof synthesizes runtime checks). 
ModelPlex simplifies analysis in a verified manner to the loop body $\alpha$ of a hybrid program $\prepeat{\alpha}$, starting from monitor conditions in \dL of the form $\didia{\alpha}\Upsilon^+$, where $\Upsilon^+$ is shortcut notation for $x^+=x$ for all $x \in \boundvars{\alpha}$ to collect the effect of executing $\alpha$ (with a focus on preserving proof-relevant dynamics, not necessarily the exact trajectories as in \cite{DBLP:conf/cav/SankaranarayananT11}).
For a previous state $\stold$ and new state $\stnew$ does the satisfaction relation $\irelmodels{\I}{\It}{\phi}$ hold iff monitor condition $\phi$ holds with its $x$ values coming from $\stold$ and $x^+$ values from $\stnew$ \cite{DBLP:journals/fmsd/MitschP16}:

\begin{definition}[Transition satisfaction relation]
\label{def:monitorevaluation}
The satisfaction relation $\irelmodels{\I}{\It}{\phi}$ of \dL formula $\phi$ for a pair of states $(\stold,\stnew)$ evaluates $\phi$ in the state resulting from state $\stold$ by interpreting variable $x^+$ as $\stnew(x)$ for all $x \in \allvars$, \ie,
$\irelmodels{\I}{\It}{\phi}$ iff $\imodels{\Iw[\stold_{x^+}^{\stnew(x)}]}{\phi}$.
Otherwise we write $(\stold,\stnew) \not\models \phi$.
\end{definition}

The central correctness result about ModelPlex \cite[Theorems 1 and 2]{DBLP:journals/fmsd/MitschP16} guarantees that its resulting monitoring formula preserves safety, \ie, if the monitoring formula is true with current sensor measurements and control choices, then the system is safe.
Here, we complement \cite[Theorems 1 and 2]{DBLP:journals/fmsd/MitschP16} with recoverability guarantees provided by fallback control per \rref{def:fallbackcontrol} when the monitors are violated.
Recoverability guarantees are phrased in terms of the inductive invariant $\inv$ used in the safety proof of $A \limply \dbox{\prepeat{\alpha}}S$, since $\inv$ implies present safety ($\inv \limply S$) and is maintained in the future ($\inv \limply \dibox{\alpha}\inv$).

\newcommand{\ctrlmonitordl}{\ensuremath{\didia{\ctrl{x}}\Upsilon^+}}
\begin{definition}[Fallback control]\label{def:fallbackcontrol}
A fallback control is any hybrid program $\fallback$ such that for all states $\stold$,$\stnew$ with $\iaccessible[\fallback]{\I}{\It}$ we have $\irelmodels{\I}{\It}{\ctrlmonitordl}$.
\end{definition}

If the actions that led to monitor violation do not have permanent physical effect (\eg, they can be discarded or undone instantaneously, such as wrong control decisions before they are handed to actuators), then monitor violation is recoverable by fallback:

\begin{theorem}[Control violation recoverability]\label{thm:controlviolationrecoverability}
Let $\inv$ be an inductive invariant for program $\prepeat{(\ctrl{x};\plant{\ctrlOut})}$.
Let $\imodels{\I}{\inv}$ with state $\stintermediate$ have a monitor violation\linebreak $(\stold,\stintermediate) \not\models \ctrlmonitordl$.
Let $\tilde{\stintermediate}$ be a state recovered from monitor violation by fallback control $\fallback$, \ie, $\iaccessible[\fallback]{\I}{\Iw[\tilde{\stintermediate}]}$.
Then $\imodels{\It}{\inv}$ for all states $\stnew$ with $\iaccessible[\plant{\ctrlOut}]{\Iw[\tilde{\stintermediate}]}{\It}$.
\end{theorem}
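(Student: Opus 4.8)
The plan is to fold the fallback transition and the subsequent plant transition into a single execution of the loop body $\ctrl{x};\plant{\ctrlOut}$ starting in $\stold$, and then invoke inductivity of $\inv$ for that loop body. The only step that is not pure bookkeeping is upgrading the \emph{monitor satisfaction} $(\stold,\tilde{\stintermediate})\models\ctrlmonitordl$ that \rref{def:fallbackcontrol} provides into an actual \dL transition $\iaccessible[\ctrl{x}]{\I}{\Iw[\tilde{\stintermediate}]}$; once that is in hand, composition plus one application of the inductive step finishes the argument. The discarded state $\stintermediate$ produced by the rejected controller plays no role: recoverability is obtained entirely by replaying from $\stold$.

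Concretely, I would proceed as follows. (1) Since $\iaccessible[\fallback]{\I}{\Iw[\tilde{\stintermediate}]}$, \rref{def:fallbackcontrol} gives $(\stold,\tilde{\stintermediate})\models\ctrlmonitordl$, i.e. $(\stold,\tilde{\stintermediate})\models\didia{\ctrl{x}}\Upsilon^+$. (2) Unfolding \rref{def:monitorevaluation}, evaluating $\didia{\ctrl{x}}\Upsilon^+$ in $\stold$ with every primed variable $x^+$ set to $\tilde{\stintermediate}(x)$ produces a state $\omega$ with $\iaccessible[\ctrl{x}]{\I}{\Iw[\omega]}$ in which $\Upsilon^+$ holds; since $\Upsilon^+$ is $\bigwedge_{v\in\boundvars{\ctrl{x}}} v^+{=}v$, the primed copies are fresh, and $\ctrl{x}$ writes only to $\boundvars{\ctrl{x}}$ (bound-effect and coincidence properties of \dL, \cite{DBLP:journals/jar/Platzer17}), the state $\omega$ must agree with $\tilde{\stintermediate}$ on $\boundvars{\ctrl{x}}$ and with $\stold$ on $\scomplement{\boundvars{\ctrl{x}}}$. (3) Because the fallback discards only the rejected control decision before it reaches the actuators and has no permanent physical effect, $\boundvars{\fallback}\subseteq\boundvars{\ctrl{x}}$, so $\tilde{\stintermediate}$ also agrees with $\stold$ on $\scomplement{\boundvars{\ctrl{x}}}$; hence $\omega=\tilde{\stintermediate}$ and $\iaccessible[\ctrl{x}]{\I}{\Iw[\tilde{\stintermediate}]}$. (4) Composing with the given $\iaccessible[\plant{\ctrlOut}]{\Iw[\tilde{\stintermediate}]}{\It}$ yields $\iaccessible[\ctrl{x};\plant{\ctrlOut}]{\I}{\It}$. (5) Since $\inv$ is an inductive invariant of $\prepeat{(\ctrl{x};\plant{\ctrlOut})}$, the formula $\inv\limply\dbox{\ctrl{x};\plant{\ctrlOut}}\inv$ is valid, so $\imodels{\I}{\inv}$ gives $\imodels{\I}{\dbox{\ctrl{x};\plant{\ctrlOut}}\inv}$, and applying the transition from (4) yields $\imodels{\It}{\inv}$.

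The main obstacle is Steps (2)--(3): turning the monitor's existential claim into a concrete transition and then identifying its witness $\omega$ with $\tilde{\stintermediate}$. This rests entirely on the variable-occurrence discipline of \dL — the primed copies $x^+$ are fresh, and both $\ctrl{x}$ and $\fallback$ touch only control variables, so two states that agree on $\boundvars{\ctrl{x}}$ and on its complement must coincide. In particular it relies on the hypothesis, implicit in the ``no permanent physical effect'' discussion preceding the theorem, that $\boundvars{\fallback}\subseteq\boundvars{\ctrl{x}}$; a fallback that perturbed physical state would break the composition in Step (4), because then $\plant{\ctrlOut}$ would have to be replayed from $\omega$ rather than from $\tilde{\stintermediate}$. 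Everything after Step (3) is a routine application of the inductive step for $\inv$.
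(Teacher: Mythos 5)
Your proof is correct and follows essentially the same route as the paper's: obtain monitor satisfaction $\irelmodels{\I}{\Iw[\tilde{\stintermediate}]}{\ctrlmonitordl}$ from \rref{def:fallbackcontrol}, upgrade it to an actual transition $\iaccessible[\ctrl{x}]{\I}{\Iw[\tilde{\stintermediate}]}$, compose with the given plant run, and apply the inductive step $\inv\limply\dbox{\ctrl{x};\plant{\ctrlOut}}\inv$. The only difference is that where the paper invokes \cite[Thm.~2]{DBLP:journals/fmsd/MitschP16} (cf.\ \rref{lem:logicalrelation}) as a black box for the upgrade step, you re-derive it inline via bound-effect and coincidence, which usefully makes explicit the side condition that $\stold$ and $\tilde{\stintermediate}$ must agree on $\scomplement{\boundvars{\ctrl{x}}}$ (i.e., that the fallback $\fallback$ writes only control variables) --- a hypothesis the paper's one-line citation leaves implicit.
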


\begin{proof}
See \rref{app:proofs}.
\end{proof}
\begin{proofatend}
From $\iaccessible[\fallback]{\I}{\Iw[\tilde{\stintermediate}]}$ and so $\irelmodels{\I}{\Iw[\tilde{\stintermediate}]}{\ctrlmonitordl}$ by assumption, we get $\iaccessible[\ctrl{x}]{\I}{\Iw[\tilde{\stintermediate}]}$
by \cite[Thm. 2]{DBLP:journals/fmsd/MitschP16}. 
Now $\iaccessible[\ctrl{x}]{\I}{\Iw[\tilde{\stintermediate}]}$ and in turn $\iaccessible[\ctrl{x};\plant{\ctrlOut}]{\I}{\It}$ from the semantics of sequential composition with assumption $\iaccessible[\plant{\ctrlOut}]{\Iw[\tilde{\stintermediate}]}{\It}$.
Hence we conclude $\imodels{\It}{\inv}$ by assumption $\inv \limply \dbox{\ctrl{x};\plant{\ctrlOut}}\inv$ with assumption $\imodels{\I}{\inv}$.
\end{proofatend}

By \rref{thm:controlviolationrecoverability}, control violations are recoverable by replacing unsafe actions with fallback before they take effect.
Model violations, which are observed on the physical effects, are detected at the earliest point in time, so would be recoverable in the model if the fallback were to interfere at the start of the violation, see \rref{thm:modelviolationrecovarability}.

\newcommand{\modelmonitordl}{\ensuremath{\didia{\alpha}\Upsilon^+}}
\begin{theorem}[Model violation recoverability]\label{thm:modelviolationrecovarability}
Let $\inv$ be an inductive invariant for program $\alpha \equiv \ctrl{x};\plant{\ctrlOut}$.
Let $\imodels{\I}{\inv}$ satisfy monitor $\irelmodels{\I}{\Iw[\stintermediate]}{\modelmonitordl}$, but $(\stintermediate,\stnew) \not\models \modelmonitordl$ violate it.
Let $\tilde{\stintermediate}$ be a state recovered from monitor violation by fallback control $\fallback$ from $\stintermediate$, \ie, $\iaccessible[\fallback]{\Iw[\stintermediate]}{\Iw[\tilde{\stintermediate}]}$.
Then $\imodels{\Iw[\tilde{\stnew}]}{\inv}$ for all states $\tilde{\stnew}$ with $\iaccessible[\plant{\ctrlOut}]{\Iw[\tilde{\stintermediate}]}{\Iw[\tilde{\stnew}]}$.
\end{theorem}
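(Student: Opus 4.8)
The plan is to reuse the reasoning pattern of \rref{thm:controlviolationrecoverability} almost verbatim, with one preparatory step forced by the fact that the fallback now engages from the intermediate state $\stintermediate$ rather than from $\I$: before anything else I must establish $\imodels{\Iw[\stintermediate]}{\inv}$. The lever for this is the surviving hypothesis $\irelmodels{\I}{\Iw[\stintermediate]}{\modelmonitordl}$, which says the model monitor was still satisfied one cycle before the violation was observed.

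Concretely, I would proceed in four steps. \textbf{(1)} From $\irelmodels{\I}{\Iw[\stintermediate]}{\modelmonitordl}$ and ModelPlex correctness \cite[Thm.~1]{DBLP:journals/fmsd/MitschP16}, obtain a genuine model transition $\iaccessible[\alpha]{\I}{\Iw[\stintermediate]}$; together with $\imodels{\I}{\inv}$ and the inductive-invariant property $\inv\limply\dbox{\alpha}\inv$ this gives $\imodels{\Iw[\stintermediate]}{\inv}$. \textbf{(2)} Since $\fallback$ is a fallback control and $\iaccessible[\fallback]{\Iw[\stintermediate]}{\Iw[\tilde{\stintermediate}]}$, \rref{def:fallbackcontrol} yields $\irelmodels{\Iw[\stintermediate]}{\Iw[\tilde{\stintermediate}]}{\ctrlmonitordl}$, which \cite[Thm.~2]{DBLP:journals/fmsd/MitschP16} converts into the control transition $\iaccessible[\ctrl{x}]{\Iw[\stintermediate]}{\Iw[\tilde{\stintermediate}]}$. \textbf{(3)} Prepending this control transition to the assumed plant run $\iaccessible[\plant{\ctrlOut}]{\Iw[\tilde{\stintermediate}]}{\Iw[\tilde{\stnew}]}$ via the semantics of sequential composition gives $\iaccessible[\ctrl{x};\plant{\ctrlOut}]{\Iw[\stintermediate]}{\Iw[\tilde{\stnew}]}$, i.e.\ $\iaccessible[\alpha]{\Iw[\stintermediate]}{\Iw[\tilde{\stnew}]}$. \textbf{(4)} Finally $\imodels{\Iw[\stintermediate]}{\inv}$ from step~(1) together with $\inv\limply\dbox{\alpha}\inv$ yields $\imodels{\Iw[\tilde{\stnew}]}{\inv}$, the desired conclusion.

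I expect the only genuinely subtle point to be step~(1), and it is exactly where the intuition ``model violations are detected at the earliest point in time'' is cashed in: it is legitimate to treat $\stintermediate$ as a safe model state---and hence to let the fallback intervene there---only because $\stintermediate$ is the last sampled state still certified by the model monitor. The remaining technical caveat, already present in \rref{thm:controlviolationrecoverability}, is that $\Upsilon^+$ constrains only the variables in $\boundvars{\alpha}$, so turning the two monitor satisfactions of steps~(1) and (2) into honest transitions relies on the full statement of \cite[Thms.~1 and 2]{DBLP:journals/fmsd/MitschP16}, which additionally records that variables outside $\boundvars{\alpha}$ remain unchanged. Once that is invoked, everything else is routine chaining of transition-semantics facts with the inductive invariant.
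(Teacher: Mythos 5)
Your proposal is correct and follows essentially the same route as the paper's proof: establish $\imodels{\Iw[\stintermediate]}{\inv}$ from the last satisfied model monitor, convert the fallback run into a control transition via \rref{def:fallbackcontrol} and \cite[Thm.~2]{DBLP:journals/fmsd/MitschP16}, prepend it to the plant run by sequential composition, and conclude by inductivity of $\inv$. The only cosmetic difference is that you unfold step (1) into ``monitor satisfaction $\Rightarrow$ transition $\iaccessible[\alpha]{\I}{\Iw[\stintermediate]}$ $\Rightarrow$ invariant,'' whereas the paper cites \cite[Thm.~1]{DBLP:journals/fmsd/MitschP16} to obtain $\imodels{\Iw[\stintermediate]}{\inv}$ in one step.
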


\begin{proof}
See \rref{app:proofs}.
\end{proof}
\begin{proofatend}
From \cite[Thm. 1]{DBLP:journals/fmsd/MitschP16} with $\irelmodels{\I}{\Iw[\stintermediate]}{\modelmonitordl}$ we get $\imodels{\Iw[\stintermediate]}{\inv}$ as $\imodels{\I}{\inv}$.
Now $\iaccessible[\fallback]{\Iw[\stintermediate]}{\Iw[\tilde{\stintermediate}]}$ by assumption and so $\irelmodels{\Iw[\stintermediate]}{\Iw[\tilde{\stintermediate}]}{\ctrlmonitordl}$ by \rref{def:fallbackcontrol}. 
Hence we get $\iaccessible[\ctrl{x}]{\Iw[\stintermediate]}{\Iw[\tilde{\stintermediate}]}$ by \cite[Thm. 2]{DBLP:journals/fmsd/MitschP16} and in turn\linebreak $\iaccessible[\ctrl{x};\plant{\ctrlOut}]{\Iw[\stintermediate]}{\Iw[\tilde{\stnew}]}$ by the semantics of sequential composition with assumption $\iaccessible[\plant{\ctrlOut}]{\Iw[\tilde{\stintermediate}]}{\Iw[\tilde{\stnew}]}$.
Now $\iaccessible[\ctrl{x};\plant{\ctrlOut}]{\Iw[\stintermediate]}{\Iw[\tilde{\stnew}]}$ and so we conclude $\imodels{\Iw[\tilde{\stnew}]}{\inv}$ from assumption $\inv \limply \dbox{\ctrl{x};\plant{\ctrlOut}}\inv$ with $\imodels{\Iw[\stintermediate]}{\inv}$.
\end{proofatend}

\rref{thm:controlviolationrecoverability} and \ref{thm:modelviolationrecovarability} show how fallback control guarantees safety when monitor conditions are violated at runtime.
The monitor conditions in their \dL representation $\didia{\alpha}\Upsilon^+$ are transformed for execution into a real arithmetic representation by an offline proof with the ModelPlex process~\cite{DBLP:journals/fmsd/MitschP16}, recalled in \rref{app:implementation}, and further into machine code using VeriPhy \cite{DBLP:conf/pldi/BohrerTMMP18}.
Our focus in the following sections is to advance the \dL monitor conditions and their correctness proofs to account for crucial actuator disturbance and sensor uncertainty in \emph{partially observable} systems, reusing the same synthesis process.

\section{Monitors for Partial Controllability with Actuator Disturbance}

\newcommand{\disturbancemonitor}[1]{\ensuremath{\ddiamond{#1}\exists \disturbed{\ctrlOut}^+\,\Upsilon^+}}

Actuator disturbance results in discrepancies between the chosen control decisions $\ctrlOut$ and their physical effect $\disturbed{\ctrlOut}$ (\eg, wheel slip).
\rref{def:disturbancenf} introduces a typical pattern to model piecewise constant actuator disturbance, which chooses a nondeterministic value $\disturbed{\ctrlOut}$ in the $\Delta$-neighborhood around the control choice $\ctrlOut$ as input to the plant. 

\begin{definition}[Disturbance normal form]\label{def:disturbancenf}
A hybrid program $\alpha$ in \emph{$\Delta$-disturbance normal form} has the shape $\ctrl{x};~ \disturb{\ctrlOut}{\Delta};~\plant{\disturbed{\ctrlOut}}$.
\end{definition}

Actuator disturbance is partially observable by monitoring the difference between the intended effect $\plant{\ctrlOut}$ and the actual effect $\plant{\disturbed{\ctrlOut}}$ from observable quantities.
By partial observability, safety and invariant properties do not mention the perturbed actuator effect.
For example, a car controller can estimate deviation from its acceleration choice only after the fact by observing the speed difference.
We therefore adapt $\didia{\alpha}\Upsilon^+$ to conjecture existence of an actuator effect $\disturbed{\ctrlOut}^+$ to explain all other effects of program $\alpha$ collected in $\Upsilon^+$. 
Monitor condition \eqref{eq:actuatormonitor} preserves safety by \rref{thm:mm-disturbance}.
\begin{equation}\label{eq:actuatormonitor}
\disturbancemonitor{\alpha}
\end{equation} 

\begin{theorem}[Monitor with actuator disturbance preserves invariants]
\label{thm:mm-disturbance}
Let $\alpha$ be a hybrid program in $\Delta$-disturbance normal form with an inductive invariant $\inv$ where $\disturbed{\ctrlOut} \not\in \freevars{\inv}$.
Assume the system transitions from state $\stold$ to $\stnew$, which agree on $\scomplement{\boundvars{\alpha}}$, and assume $\imodels{\I}{\inv}$.
If the monitor condition \eqref{eq:actuatormonitor} is satisfied, \ie, $\irelmodels{\I}{\It}{\disturbancemonitor{\alpha}}$,
then the invariant is preserved, \ie, $\imodels{\It}{\inv}$.
\end{theorem}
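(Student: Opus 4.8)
The plan is to turn a satisfied instance of monitor~\eqref{eq:actuatormonitor} into a genuine model transition of $\alpha$ whose final state is indistinguishable from the observed post-state $\stnew$ as far as $\inv$ is concerned, and then to run the inductive-invariant property along that reconstructed transition. Concretely, I would first unfold \rref{def:monitorevaluation}: $\irelmodels{\I}{\It}{\disturbancemonitor{\alpha}}$ means $\ddiamond{\alpha}\exists\disturbed{\ctrlOut}^+\,\Upsilon^+$ is true in $\stold$ once every plus-variable $x^+$ is read off as $\stnew(x)$. Since $\alpha$ contains no plus-variable, the diamond semantics hands me a run $\iaccessible[\alpha]{\I}{\Iw[\mu]}$ on the ordinary state variables, and the body $\exists\disturbed{\ctrlOut}^+\,\Upsilon^+$ holding in the corresponding final state pins down $\mu(x)=\stnew(x)$ for every $x\in\boundvars{\alpha}\setminus\{\disturbed{\ctrlOut}\}$: $\Upsilon^+$ is the conjunction of the equations $x^+=x$ over $\boundvars{\alpha}$, and the existentially quantified copy $\disturbed{\ctrlOut}^+$ occurs only in the single conjunct $\disturbed{\ctrlOut}^+=\disturbed{\ctrlOut}$, so quantifying it away deletes exactly that one constraint and nothing else (its witness being forced to $\mu(\disturbed{\ctrlOut})$, which is no real restriction).

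Next I would locate $\mu$ relative to $\stnew$ on the remaining variables. By the bound-effect lemma for hybrid programs \cite{DBLP:journals/jar/Platzer17}, the run of $\alpha$ leaves $\scomplement{\boundvars{\alpha}}$ untouched, so $\mu$ agrees with $\stold$ there, and $\stold$ agrees with $\stnew$ there by hypothesis; combined with the previous step this gives that $\mu$ and $\stnew$ coincide on \emph{all} variables except possibly $\disturbed{\ctrlOut}$. Then I would invoke inductivity: from $\imodels{\I}{\inv}$ and $\inv\limply\dbox{\alpha}\inv$ applied along $\iaccessible[\alpha]{\I}{\Iw[\mu]}$ (restricting attention to the ordinary variables, the plus-variables being irrelevant to $\inv$) I obtain $\imodels{\Iw[\mu]}{\inv}$. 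Finally, because $\disturbed{\ctrlOut}\notin\freevars{\inv}$ and $\mu$, $\stnew$ differ at most in $\disturbed{\ctrlOut}$, the coincidence lemma for \dL formulas \cite{DBLP:journals/jar/Platzer17} transfers truth of $\inv$ from $\mu$ to $\stnew$, i.e.\ $\imodels{\It}{\inv}$, which is the claim. In effect this is the ModelPlex reconstruction of \cite[Thm.~1]{DBLP:journals/fmsd/MitschP16} run on the weakened postcondition $\exists\disturbed{\ctrlOut}^+\,\Upsilon^+$ instead of on $\Upsilon^+$.

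The main obstacle --- really the only non-bookkeeping point --- is the treatment of the existential over the copy variable in the first step: I need to argue carefully that $\exists\disturbed{\ctrlOut}^+\,\Upsilon^+$ still reconstructs every bound variable of $\alpha$ exactly as $\Upsilon^+$ would, \emph{except} that it lets the model pick whatever disturbed actuator effect $\disturbed{\ctrlOut}$ is needed to explain the observed evolution of the plant, and that this extra freedom is harmless precisely because $\disturbed{\ctrlOut}$ is partially unobservable and hence excluded from $\freevars{\inv}$ by assumption. It is worth double-checking that $\disturbed{\ctrlOut}^+$ occurs nowhere in $\Upsilon^+$ beyond its own defining equation --- which holds because $\disturbed{\ctrlOut}\in\boundvars{\alpha}$ in the $\Delta$-disturbance normal form of \rref{def:disturbancenf} and $\Upsilon^+$ is a plain conjunction of equalities --- and that the coincidence lemma is applied to $\Iw[\mu]$ on its ordinary variables only. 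Everything else (the diamond and box semantics, the bound-effect lemma, sequential composition) is routine.
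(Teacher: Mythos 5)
Your proof is correct and follows essentially the same route as the paper's: both reconstruct from the satisfied monitor an actual $\alpha$-run ending in a state that agrees with $\stnew$ everywhere except possibly on $\disturbed{\ctrlOut}$, apply the inductive invariant along that run, and transfer $\inv$ to $\stnew$ by the coincidence lemma using $\disturbed{\ctrlOut}\notin\freevars{\inv}$. The only difference is presentational: the paper commutes the existential past the diamond via Barcan and then cites the ModelPlex logical state relation lemma as a black box, whereas you unfold the diamond semantics and argue inline that $\exists\disturbed{\ctrlOut}^+\,\Upsilon^+$ pins down all bound variables except $\disturbed{\ctrlOut}$ --- which amounts to reproving the relevant instance of that lemma.
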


\begin{proof}
By logical state relation and coincidence, see \rref{app:proofs}.
\end{proof}
\begin{proofatend}
The monitor condition $\irelmodels{\I}{\It}{\disturbancemonitor{\alpha}}$ is satisfied by assumption, so we get $\irelmodels{\I}{\It}{\exists \disturbed{x}^+ \didia{\alpha}\Upsilon^+}$ by Barcan \cite{DBLP:journals/jar/Platzer17} with $\disturbed{x}^+ \not\in \vars{\alpha}$.
Hence, there exists a state $\stnew_{\disturbed{x}}^r$ with $r \in \mathbb{R}$ for $\disturbed{x}$ such that $\irelmodels{\I}{\Iw[\stnew_{\disturbed{x}}^r]}{\didia{\alpha}\Upsilon^+}$ and so $\iaccessible[\alpha]{\I}{\Iw[\stnew_{\disturbed{x}}^r]}$ by \rref{lem:logicalrelation}.
Now $\imodels{\I}{\inv}$ and $\models \inv \limply \dbox{\alpha}\inv$ by assumption and therefore $\imodels{\I}{\dbox{\alpha}\inv}$ and in turn $\imodels{\Iw[\stnew_{\disturbed{x}}^r]}{\inv}$ by $\iaccessible[\alpha]{\I}{\Iw[\stnew_{\disturbed{x}}^r]}$.
Since $\stnew = \stnew_{\disturbed{x}}^r$ on $\scomplement{\{\disturbed{x}\}}$ we conclude $\imodels{\It}{\inv}$ by \rref{lem:coincidence} with $\disturbed{x} \not\in \freevars{\inv}$.
\end{proofatend}

Monitoring with actuator disturbance preserves invariant conditions of a system by \rref{thm:mm-disturbance}, which, with Theorems \ref{thm:controlviolationrecoverability}+\ref{thm:modelviolationrecovarability}, \emph{guarantees safety at the present moment as well as safety of all future behavior that fits to the model}.

\begin{mdframed}[style=example]
\begin{example}
In the flight example, the $\plantPrg$ follows decisions precisely without disturbance.
Here, we model a pilot decision $\pilot$ which is subject to disturbance $\wDisturbance\geq 0$ before taking effect to analyze safety for imperfectly actuated evasion maneuvers:
\[
\alpha \equiv \bigl(\pchoice{(\humod{\pilot}{0};\ptest{\straightinv})}{(\humod{\pilot}{1};\ptest{\flightinv{\pilot}})}\bigr); \prandom{\wownship};\ptest{(0{\leq}\wownship{\leq}\pilot \wDisturbance)};~\plantPrg
\]
The true acceleration $\wownship$ is unobservable, so the monitor condition $\didia{\alpha}\exists \wownship^+\,(\flightUpsilon \land \pilot^+{=}\pilot)$ existentially quantifies away the unobservable $\wownship^+$.
\end{example}
\end{mdframed}

\section{Monitors for Partial Observability from Sensor Uncertainty}
\label{sec:sensoruncertainty}

Monitor correctness \cite{DBLP:journals/fmsd/MitschP16} requires that all bound variables of a program $\alpha$ be monitored, \ie, $\boundvars{\alpha} \subseteq \freevars{\Upsilon^+}$.
This is the appropriate behavior except, of course, for variables that are \emph{unobservable} in the CPS implementation.
When controllers use sensors to obtain information, only the measurement is known, but not the true value that the sensor is measuring, which would defeat the purpose.
There may still be indirect implications about unobservable quantities when they relate to observable ones, which necessitates monitors that indirectly check the properties of the true quantity from measurements.

\begin{figure}[tb]
\centering
\begin{subfigure}[b]{.48\linewidth}
\centering
\begin{footnotesize}
\begin{tikzpicture}
\begin{axis}[
  compat=newest,x dir=normal,
  width=\columnwidth,height=4cm,
  xlabel={t},ylabel={x},
  xticklabels=\empty,yticklabels=\empty,
  axis lines=middle,
  axis line style={->},
  x label style={at={(current axis.right of origin)},anchor=west},
  y label style={at={(current axis.above origin)},anchor=east},
  enlargelimits=true
]
\addplot+[color=lslightblue, ultra thick, solid, smooth, mark=*,nodes near coords,nodes near coords align={north west},every node near coord/.append style={font=\scriptsize},point meta=explicit symbolic
]
 plot [error bars/.cd, y dir = both, y explicit, error bar style={line width=2pt,solid}]
 table[meta=xlabel, x =t, y =x, y error expr={0.04}]{modelmonitorpairwise.dat};
\end{axis}
\end{tikzpicture}
\end{footnotesize}
\caption{Safety proof: measurements $\hat{x}$ are taken near true $x$ (bars indicate uncertainty) but system model behavior follows true $x$.}
\label{fig:sensoruncertaintycontroller}
\end{subfigure}%
\hfill%
\begin{subfigure}[b]{.48\linewidth}
\centering
\begin{footnotesize}
\begin{tikzpicture}
\begin{axis}[
  compat=newest,x dir=normal,
  width=\columnwidth,height=4cm,
  xlabel={t},ylabel={x},
  xticklabels=\empty,yticklabels=\empty,
  axis lines=middle,
  axis line style={->},
  x label style={at={(current axis.right of origin)},anchor=west},
  y label style={at={(current axis.above origin)},anchor=east},
  enlargelimits=true
]
\addplot+[color=lslightblue, only marks, mark=square,nodes near coords,nodes near coords align={west},every node near coord/.append style={font=\scriptsize},point meta=explicit symbolic
]
 plot [error bars/.cd, y dir = both, y explicit, error bar style={line width=2pt,solid}]
 table[meta=mxlabel, x =t, y =x, y error expr={0.04}]{modelmonitorpairwise.dat};
 \addplot+[name path=trueupper,mark=none,dotted,smooth,color=lsblue,forget plot] table[x =t, y expr = {\thisrow{x} + 0.04}]{modelmonitorpairwise.dat}; 
 \addplot+[name path=truelower,mark=none,dotted,smooth,color=lsblue,forget plot] table[x =t, y expr = {\thisrow{x} - 0.04}]{modelmonitorpairwise.dat}; 
\addplot[lslightblue!50,opacity=0.5] fill between[of=truelower and trueupper];
\begin{scope}[shorten >=2pt]
\draw[->,thick,dashed] (0,0.53) -- (1,0.58);
\draw[->,thick,dashed] (0,0.53) -- (1,0.56);
\draw[->,thick,dashed] (0,0.53) -- (1,0.54);
\draw[->,thick,dashed] (0,0.48) -- (1,0.53);
\draw[->,thick,dashed] (0,0.48) -- (1,0.51);
\draw[->,thick,dashed] (2,0.62) -- (3,0.65);
\draw[->,thick,dashed] (2,0.67) -- (3,0.59);
\draw[->,thick,dashed] (2,0.67) -- (3,0.66);
\end{scope}
\end{axis}
\end{tikzpicture}
\end{footnotesize}
\caption{Monitor: estimates true $x$ from sample measurements $\hat{x}$, considering $\hat{x}$ plausible if model behavior fits to \emph{some} possible true $x$.}
\label{fig:sensoruncertaintymonitor}
\end{subfigure}
\caption{(a) Controllers observe true behavior through sensors. (b) Monitors have to check existence of behavior that fits to the model and explains the measurements.}
\label{fig:sensoruncertaintyproofvsmonitor}
\end{figure}

Unobservability results in a crucial difference between monitoring and control (and its safety proofs). 
Controllers \emph{estimate true behavior} by taking measurements and observing the effect of their decisions in the next measurements that are again taken from the subsequent true values, see \rref{fig:sensoruncertaintycontroller}.
Monitors \emph{have to decide} whether the measurements explain some possible true behavior that fits to the expected model behavior, see \rref{fig:sensoruncertaintymonitor}, which results in a number of challenges that we address in this section: 
\begin{inparaenum}[(i)]
\item check existence of behavior as explanation from a \emph{set of possible true values} into a set of possible next true values,
\item \emph{link explanations} to a full path through sets of possible true values around a history of observations, and
\item \emph{guarantee safety} in both cases.
\end{inparaenum}

We use $y$ to refer to an \emph{unobservable state variable} and $\hat{y}$ to denote a \emph{measurement} of $y$ with some uncertainty $\Delta$, so $\nsimilar{\hat{y}}{y}{\Delta}$.
We assume \emph{non-faulty sensors} that function according to specification ($\imodels{\I}{y{\in}\neighborhood{\hat{y}}{\Delta}}$ in any state $\stold$), \ie, they always report values that deviate from the true values by at most some known uncertainty $\Delta$.\footnote{%
Sensor fusion detects sensor faults and corrects measurement outliers to satisfy this assumption.}
\rref{def:uncertaintysimilarity} captures what it means for states to be similar with respect to measurement uncertainty.

\newcommand{\mstateapprox}[4]{\ensuremath{#1 \hat{\approx}_#3^{#4} #2}}
\newcommand{\sensormonitordef}[1]{\ensuremath{\exists \nsimilar{y}{\hat{y}}{\Delta}\,\didia{#1}\left(\exists y^+\,\Upsilon^+\right)}}

\begin{definition}[Uncertainty similarity]
\label{def:uncertaintysimilarity}
We say that state $\stold$ is \emph{$\Delta$-uncertainty-similar on $y$} to state $\stnew$, denoted \(\mstateapprox{\stold}{\stnew}{y}{\Delta}\), iff $\stold=\stnew$ on $\scomplement{\{y\}}$ and $\nsimilar{\stold(y)}{\stnew(\hat{y})}{\Delta}$.
\end{definition}

\rref{def:uncertaintysimilarity} together with measurement uncertainty $\nsimilar{\hat{y}}{y}{\Delta}$ implies that the possible values $\stold(y)$ are at most $2\Delta$ from the true $\stnew(y)$, so $\nsimilar{\stold(y)}{\stnew(y)}{2\Delta}$.

In the following subsections, we characterize monitors that check whether or not there exist states that are $\Delta$-uncertainty-similar to measured states and connected through a program $\alpha$, \ie, for measured states $\iget[state]{\I}$ and $\iget[state]{\It}$ do there exist uncertainty-similar states $\iget[state]{\tI}$ and $\iget[state]{\tIt}$ such that $\mstateapprox{\iget[state]{\tI}}{\iget[state]{\I}}{y}{\Delta}$, $\mstateapprox{\iget[state]{\tIt}}{\iget[state]{\It}}{y}{\Delta}$ and $\iaccessible[\alpha]{\tI}{\tIt}$.
The theorems will be phrased for a single variable $y$ and measurement $\hat{y}$, but extend to vectors of unobservable variables $\vec{y}$ and their measurements $\vec{\hat{y}}$ in a straightforward way.

\subsection{Model Monitors for Pairwise Consistency of Measurements}

\begin{wrapfigure}[13]{r}[0pt]{.4\columnwidth}
\vspace{-2.7\baselineskip}
\centering
\begin{footnotesize}
\vspace{\baselineskip}
\begin{tikzpicture}
\begin{axis}[
  compat=newest,x dir=normal,
  legend style={at={(0.4,1.1)},anchor=north},
  width=.5\columnwidth,height=4.5cm,
  xlabel={t},ylabel={x},
  xticklabels=\empty,yticklabels=\empty,
  axis lines=middle,
  axis line style={->},
  x label style={at={(current axis.right of origin)},anchor=west},
  y label style={at={(current axis.above origin)},anchor=east},
  enlargelimits=true
]
\addplot+[color=lslightblue, ultra thick, solid, mark=square*,
]
 plot [error bars/.cd, y dir = both, y explicit, error bar style={line width=2pt,solid}]
 table[x =t, y =v, y error =ds]{curvebotmodelmonitorpairwise.dat};
\addplot [color=lsblue, thick, only marks, mark=*
]
 plot [error bars/.cd, y dir = both, y explicit
 ]
 table[x =t, y =snow, y error =ds]{curvebotmodelmonitorpairwise.dat}; 

\addplot+[name path=trueupper,mark=none,dashed,smooth,very thick,color=lsblue,forget plot] table[x =t, y expr = {\thisrow{v} + \thisrow{ds}}]{curvebotmodelmonitorpairwise.dat}; 
\addplot+[name path=truelower,mark=none,dashed,smooth,very thick,color=lsblue,forget plot] table[x =t, y expr = {\thisrow{v} - \thisrow{ds}}]{curvebotmodelmonitorpairwise.dat};
\addplot+[name path=measureupper,mark=none,dotted,smooth,color=lsblue,forget plot] table[x =t, y expr = {\thisrow{snow} + \thisrow{ds}}]{curvebotmodelmonitorpairwise.dat}; 
\addplot+[name path=measurelower,mark=none,dotted,smooth,color=lsblue,forget plot] table[x =t, y expr = {\thisrow{snow} - \thisrow{ds}}]{curvebotmodelmonitorpairwise.dat}; 
\addplot[lsochre!35,opacity=0.5,
] fill between[of=measurelower and measureupper];

\addlegendentry{True $y$};
\addlegendentry{Measured $\hat{y}$};
\end{axis}
\end{tikzpicture}
\vspace{-2\baselineskip}
\end{footnotesize}
\caption{Measurements with sensor uncertainty. Thick blue bars: measurements $\nsimilar{\hat{y}}{y}{\Delta}$; thin black bars: estimated true $\nsimilar{y}{\hat{y}}{\Delta}$.}
\label{fig:sensoruncertaintymodelmonitor}
\end{wrapfigure}%
Monitoring based on measurements requires us to decide whether a true value $y$ fits to a model $\alpha$ by only looking at the measurement $\hat{y}$. 
Intuitively, this can be answered by finding an unobservable prior state $\nsimilar{y}{\hat{y}}{\Delta}$ close to the measurement $\hat{y}$, such that running the model $\alpha$ on this possible $y$ predicts a next unobservable $y^+$ that is within measurement uncertainty $y^+{\in}\neighborhood{\hat{y}^+}{\Delta}$ to the next measurement $\hat{y}^+$.
This intuition is illustrated in \rref{fig:sensoruncertaintymodelmonitor}:
a pair of two consecutive measurements is possible if the set of possible true values of the second measurement overlap with the values predicted by the model $\alpha$ from one of the possible prior $y$ (which is estimated from the previous measurement). 
\rref{def:measurementnf} captures this intuition with a hybrid program that produces control output $\ctrlOut$ from a previous measurement to drive a plant $\plantDur{\ctrlOut}{\varepsilon}$ for time $\varepsilon$ to produce the next measurement.

\begin{definition}[Measurement normal form]\label{def:measurementnf}
A hybrid program $\alpha$ in \emph{$\Delta$-measurement normal form} has the shape $\senseCtrl;
\plantDur{\ctrlOut}{\varepsilon};~\measure{y}{\Delta}$.
\end{definition}

Our goal when formalizing monitoring conditions is to shift proof effort offline in favor of fast runtime computations. 
Therefore we combine offline proofs with online monitoring:
\begin{inparaenum}[(i)]
\item offline we prove that the modeled dynamics ensure that \emph{all possible true values} around the measurements satisfy the invariant (contraction, see \rref{def:contraction}), and
\item online we monitor that \emph{some possible true values} around the successive measurements can be interconnected via the modeled dynamics, see monitor condition \eqref{eq:mm-pairwiseuncertainty}.
\end{inparaenum}

\begin{definition}[Contraction]\label{def:contraction}
The \emph{$[l,u]$-contraction} with margin $[l,u]$, which is\linebreak $\marginProp{y}{\hat{y}}{[l,u]}{\inv}$, ensures $\inv$ for all possible values $y$ in the $[l,u]$-neighborhood of the value $\hat{y}$.
A program $\alpha$ in $\Delta$-measurement normal form is \emph{$[l,u]$-contraction-safe} if\linebreak $(\marginProp{y}{\hat{y}}{[l,u]}{\inv}) \limply \dibox{\alpha}\marginProp{y}{\hat{y}}{[l,u]}{\inv}$ is valid. 
\end{definition}

The $[l,u]$-contraction $\marginProp{y}{\hat{y}}{[l,u]}{\inv}$ requires the controller to consider all possible true values $y$ around the measurement $\hat{y}$ when it computes its control output $\ctrlOut$. 
This makes control decisions robust to measurement uncertainty, because the control output $\ctrlOut$ is required to be chosen such that all possible values around the next measurement again satisfy $\inv$.
Geometrically, $\marginProp{y}{\hat{y}}{[l,u]}{\inv}$ corresponds to $\stold(\hat{y}) \oplus [l,u] \subseteq \iaccess[\inv]{}$ for the Minkowski sum $\oplus$.
If a controller does not account for the difference between the true physical dynamics and the modeled dynamics, it may approach the safety boundary too aggressively and potentially result in unsafe behavior.
\rref{def:contraction} precisely captures controller robustness to uncertainty and perturbation in terms of the physical effect of its control choices, which enables monitor correctness per \rref{thm:mm-pairwisecorrectness} later.
Designing a controller for the robustness criterion of \rref{def:contraction} is an important research field outside the scope of this paper.
Monitor condition \eqref{eq:mm-pairwiseuncertainty} checks two measurements $\hat{y}$ and $\hat{y}^+$ according to a model $\alpha$ in $\Delta$-measurement normal form. 
\begin{equation}\label{eq:mm-pairwiseuncertainty}
\chimp ~\equiv~ \sensormonitordef{\alpha}
\end{equation}

Monitor condition \eqref{eq:mm-pairwiseuncertainty} is satisfied if there exists a possible true value $y$ around measurement $\hat{y}$ and a possible next true value $y^+$ produced by the program $\alpha$ around the next measurement.
Now monitor condition \eqref{eq:mm-pairwiseuncertainty} tells us that there exist true values close to the measurements; in order for these true values to have the desired properties, 
the program $\alpha$ must be $[-\Delta,\Delta]$-contraction-safe, see \rref{thm:mm-pairwisecorrectness}.

\begin{theorem}[Pairwise measurement monitor preserves invariants]
\label{thm:mm-pairwisecorrectness}
Let program $\alpha$ in $\Delta$-measurement normal form be $[-\Delta,\Delta]$-contraction-safe.
Assume the system transitions from $\stold$ to $\stnew$, which agree on $\scomplement{\boundvars{\alpha}}$, with non-faulty sensors. 
If the contraction holds in the beginning ($\imodels{\I}{\marginProp{y}{\hat{y}}{\Delta}{\inv}}$) and the pairwise measurement monitor $\chimp$ with \eqref{eq:mm-pairwiseuncertainty} is satisfied $\irelmodels{\I}{\It}{\chimp}$,
then $\inv$ is preserved, \ie, $\imodels{\It}{\inv}$.
\end{theorem}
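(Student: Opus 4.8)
The plan is to turn the monitor satisfaction $\irelmodels{\I}{\It}{\chimp}$ into an explicit run of $\alpha$ between two \emph{re-estimated} states and then to propagate the contraction invariant along precisely that run; the observed post-state $\stnew$ is recovered at the end by overwriting the single unobservable component. First I would unwind \rref{def:monitorevaluation} together with the semantics of the quantifiers and of $\didia{\alpha}$ (in the style of \rref{lem:logicalrelation}): from $\irelmodels{\I}{\It}{\sensormonitordef{\alpha}}$ there are a value $r$ with $\nsimilar{r}{\stold(\hat{y})}{\Delta}$ and a state $\mu$ with $\iaccessible[\alpha]{\Iw[\stold_y^r]}{\Iw[\mu]}$ such that $\imodels{\Iw[\mu]}{\exists y^+\,\Upsilon^+}$, the $^+$-copies still reading $\stnew$. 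Since $\Upsilon^+$ is $\bigwedge_{x\in\boundvars{\alpha}} x^+{=}x$ and the conjunct for $y$ is rendered vacuous by the extra $\exists y^+$, this says exactly that $\mu(x)=\stnew(x)$ for every $x\in\boundvars{\alpha}\setminus\{y\}$. Because $\alpha$ cannot change variables outside $\boundvars{\alpha}$, and $\stold$, $\stnew$ agree on $\scomplement{\boundvars{\alpha}}$ by hypothesis while $y\in\boundvars{\alpha}$, I obtain that $\mu$ and $\stnew$ agree on all of $\scomplement{\{y\}}$; they can differ only in the unobservable value of $y$.

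Next I would transport the contraction from the initial state to $\mu$. The formula $\marginProp{y}{\hat{y}}{\Delta}{\inv}$ binds $y$ in its leading quantifier, so $y\notin\freevars{\marginProp{y}{\hat{y}}{\Delta}{\inv}}$; since $\Iw[\stold_y^r]$ agrees with $\I$ on $\hat{y}$ and on all other free variables, coincidence (\rref{lem:coincidence}) and the hypothesis $\imodels{\I}{\marginProp{y}{\hat{y}}{\Delta}{\inv}}$ give $\imodels{\Iw[\stold_y^r]}{\marginProp{y}{\hat{y}}{\Delta}{\inv}}$. Since $\alpha$ is $[-\Delta,\Delta]$-contraction-safe, the valid implication $\marginProp{y}{\hat{y}}{\Delta}{\inv}\limply\dibox{\alpha}\marginProp{y}{\hat{y}}{\Delta}{\inv}$ holds in $\Iw[\stold_y^r]$, so $\imodels{\Iw[\stold_y^r]}{\dibox{\alpha}\marginProp{y}{\hat{y}}{\Delta}{\inv}}$, and with $\iaccessible[\alpha]{\Iw[\stold_y^r]}{\Iw[\mu]}$ this yields $\imodels{\Iw[\mu]}{\marginProp{y}{\hat{y}}{\Delta}{\inv}}$.

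Then I would instantiate this universal at the \emph{true} next value $\stnew(y)$. Non-faulty sensors in $\stnew$ give $\nsimilar{\stnew(y)}{\stnew(\hat{y})}{\Delta}$, and $\hat{y}\in\boundvars{\alpha}\setminus\{y\}$ forces $\mu(\hat{y})=\stnew(\hat{y})$, so $\stnew(y)\in\neighborhood{\mu(\hat{y})}{\Delta}$ and the instance applies: $\imodels{\Iw[\mu_y^{\stnew(y)}]}{\inv}$. But $\mu$ and $\stnew$ agree on $\scomplement{\{y\}}$, hence $\mu_y^{\stnew(y)}=\stnew$, which yields $\imodels{\It}{\inv}$, as claimed. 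For a vector $\vec y$ of unobservable variables with measurements $\vec{\hat y}$ the same steps go through component-wise.

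The step I expect to be the main obstacle is the bookkeeping forced by partial observability: keeping the several occurrences of $y$ apart — the re-estimated prior $r$ supplied by the outer $\exists$ versus the predicted final value that the inner $\exists y^+$ discards — and checking that neither re-binding weakens the conclusion. Two facts make it go through: (i) the contraction is stated with an \emph{outer} universal over $y$, so it is insensitive to the monitor choosing an arbitrary plausible prior $r\in\neighborhood{\hat{y}}{\Delta}$ instead of the real one; and (ii) the witness state $\mu$ agrees with the observed post-state $\stnew$ everywhere except at $y$, so re-instantiating the contraction's universal at the actual $\stnew(y)$ — admissible because the real sensor is non-faulty — lands exactly on $\stnew$. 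A minor side point needing only a line is that $\alpha$ leaves the $^+$-copies untouched and that the $y$-conjunct of $\Upsilon^+$ is vacuous under $\exists y^+$; both follow from freshness of the $^+$-variables and from $y\in\boundvars{\alpha}$.
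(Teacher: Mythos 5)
Your proposal is correct and follows essentially the same route as the paper's proof: extract witnesses for the two existentials to obtain an actual $\alpha$-run from a re-estimated prior state, transfer the contraction there by coincidence, propagate it along the run via $[-\Delta,\Delta]$-contraction-safety, and finally use non-faulty sensing to instantiate the universal at the true $\stnew(y)$. The only cosmetic difference is that the paper first moves the whole universally quantified contraction back to $\stnew$ by coincidence and then instantiates, whereas you instantiate at $\stnew(y)$ in the witness state $\mu$ and then identify $\mu_y^{\stnew(y)}$ with $\stnew$ --- logically the same step in the opposite order.
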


\begin{proof}
By contraction-safety, logical state relation, and coincidence (\rref{app:proofs}).
\end{proof}
\begin{proofatend}
The model monitor is satisfied $\irelmodels{\I}{\It}{\sensormonitordef{\alpha}}$ by assumption, so from $y^+ \not\in \vars{\alpha}$ we get $\irelmodels{\I}{\It}{\exists \nsimilar{y}{\hat{y}}{\Delta}\,\exists y^+\,\didia{\alpha}\Upsilon^+}$ by Barcan \cite{DBLP:journals/jar/Platzer17}.
Hence there exist $r\in\mathbb{R}$ for $y$ and $s\in\mathbb{R}$ for $y^+$ as well as states $\stold_y^r$ and $\stnew_y^s$ with $\imodels{\Iw[\stold_y^r]}{y{\in}\neighborhood{\hat{y}}{\Delta}}$ and $\irelmodels{\Iw[\stold_y^r]}{\Iw[\stnew_y^s]}{\didia{\alpha}\Upsilon^+}$ and so $\iaccessible[\alpha]{\Iw[\stold_y^r]}{\Iw[\stnew_y^s]}$ by \rref{lem:logicalrelation}.
We get $\imodels{\Iw[\stold_y^r]}{\marginProp{y}{\hat{y}}{\Delta}{\inv}}$ from $\imodels{\I}{\marginProp{y}{\hat{y}}{\Delta}{\inv}}$ by $\stold_y^r=\stold$ on $\scomplement{\{y\}}$ and $y \not\in \freevars{\marginProp{y}{\hat{y}}{\Delta}{\inv}}$ with \rref{lem:coincidence}.
Since program $\alpha$ is $[-\Delta,\Delta]$-contraction-safe by assumption, we now know $\imodels{\Iw[\stold_y^r]}{\dibox{\alpha}{\marginProp{y}{\hat{y}}{\Delta}{\inv}}}$ by \rref{def:contraction} using $\imodels{\Iw[\stold_y^r]}{\marginProp{y}{\hat{y}}{\Delta}{\inv}}$.
Hence $\imodels{\Iw[\stnew_y^s]}{\marginProp{y}{\hat{y}}{\Delta}{\inv}}$ by $\iaccessible[\alpha]{\Iw[\stold_y^r]}{\Iw[\stnew_y^s]}$.
Since $\stnew=\stnew_y^s$ on $\scomplement{\{y\}}$ we get\linebreak $\imodels{\It}{\marginProp{y}{\hat{y}}{\Delta}{\inv}}$ by \rref{lem:coincidence} since $y \not\in \freevars{\marginProp{y}{\hat{y}}{\Delta}{\inv}}$. 
Therefore from\linebreak $\imodels{\It}{y{\in}\neighborhood{\hat{y}}{\Delta}}$ by assumption we conclude $\imodels{\It}{\inv}$. 
\end{proofatend}

Besides safety, monitoring for existence of an unobservable value $y$ that fits to the present measurement $\hat{y}$ also guarantees bounded variation between true values:

\begin{proposition}[Bounded variation coincidence]
\label{prop:boundedvariationcoincidence}
Let $\alpha$ be a hybrid program in $\Delta$-measurement normal form.
Assume the system transitions through the sequence of states $\nu_0$, $\nu_1$, \ldots $\nu_n$, which agree on $\scomplement{\boundvars{\alpha}}$, such that
$\irelmodels{\Iw[\nu_{i-1}]}{\Iw[\nu_i]}{\chimp}$ with \eqref{eq:mm-pairwiseuncertainty} for all $1{\leq}i{\leq}n$.
Then there are $\mstateapprox{\omega_{i-1}}{\nu_{i-1}}{y}{\Delta}$, $\mstateapprox{\mu_i}{\nu_i}{y}{\Delta}$ such that $\iaccessible[\alpha]{\Iw[\omega_{i-1}]}{\Iw[\mu_i]}$.
\end{proposition}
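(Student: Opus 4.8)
The plan is to run, separately for each index $i$, the opening extraction argument from the proof of \rref{thm:mm-pairwisecorrectness}, and then to read off the two uncertainty-similarities directly from the shape of the $\Delta$-measurement normal form. No chaining across distinct $i$ is needed here, since \rref{prop:boundedvariationcoincidence} only claims one $\alpha$-transition per $i$ (linking the pieces into a single path through the sets of possible true values is deferred to the next subsection).

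Fix $i$ with $1{\leq}i{\leq}n$. From the hypothesis $\irelmodels{\Iw[\nu_{i-1}]}{\Iw[\nu_i]}{\chimp}$ together with \eqref{eq:mm-pairwiseuncertainty}, I would first pull the inner existential out of the modality: $y^+$ is the fresh next-state copy, so $y^+\notin\boundvars{\alpha}$ and the Barcan formula \cite{DBLP:journals/jar/Platzer17} yields $\irelmodels{\Iw[\nu_{i-1}]}{\Iw[\nu_i]}{\exists \nsimilar{y}{\hat{y}}{\Delta}\,\exists y^+\,\didia{\alpha}\Upsilon^+}$. Unfolding \rref{def:monitorevaluation} and the semantics of the two existentials supplies a real $r_i$ for $y$ with $r_i\in\neighborhood{\nu_{i-1}(\hat{y})}{\Delta}$ (the guard $\nsimilar{y}{\hat{y}}{\Delta}$ reads $\hat{y}$ off $\nu_{i-1}$) and a real $s_i$ for $y^+$; let $\omega_{i-1}$ agree with $\nu_{i-1}$ on $\scomplement{\{y\}}$ with $\omega_{i-1}(y){=}r_i$, and let $\mu_i$ agree with $\nu_i$ on $\scomplement{\{y\}}$ with $\mu_i(y){=}s_i$. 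Reading off \rref{def:monitorevaluation} again gives $\irelmodels{\Iw[\omega_{i-1}]}{\Iw[\mu_i]}{\didia{\alpha}\Upsilon^+}$, and since $\nu_{i-1},\nu_i$ agree on $\scomplement{\boundvars{\alpha}}$ while $y\in\boundvars{\alpha}$, also $\omega_{i-1},\mu_i$ agree on $\scomplement{\boundvars{\alpha}}$; hence \rref{lem:logicalrelation} delivers $\iaccessible[\alpha]{\Iw[\omega_{i-1}]}{\Iw[\mu_i]}$, exactly as in the first part of the proof of \rref{thm:mm-pairwisecorrectness}.

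It then remains to verify $\mstateapprox{\omega_{i-1}}{\nu_{i-1}}{y}{\Delta}$ and $\mstateapprox{\mu_i}{\nu_i}{y}{\Delta}$ against \rref{def:uncertaintysimilarity}. Both states agree with their $\nu$-counterpart on $\scomplement{\{y\}}$ by construction, and $\hat{y}\neq y$ gives $\omega_{i-1}(\hat{y}){=}\nu_{i-1}(\hat{y})$, $\mu_i(\hat{y}){=}\nu_i(\hat{y})$; so $\nsimilar{\omega_{i-1}(y)}{\nu_{i-1}(\hat{y})}{\Delta}$ is immediate from $\omega_{i-1}(y){=}r_i\in\neighborhood{\nu_{i-1}(\hat{y})}{\Delta}$ above. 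The one place needing genuine care -- and the step I expect to be the main (if modest) obstacle -- is $\mstateapprox{\mu_i}{\nu_i}{y}{\Delta}$: unlike the target-state reasoning of \rref{thm:mm-pairwisecorrectness}, which leans on the non-faulty-sensor assumption and the contraction holding in the \emph{actual} measured state $\nu_i$, here the witness $s_i{=}\mu_i(y)$ for the true value is \emph{not} the real one, so uncertainty-similarity of $\mu_i$ cannot be inherited from a sensor hypothesis and must come out of the dynamics of $\alpha$ itself. I would extract it from \rref{def:measurementnf}: the last subprogram of $\alpha$ in $\Delta$-measurement normal form is $\measure{y}{\Delta}$, which modifies only $\hat{y}$ (in particular leaves $y$ fixed) and whose test clause forces $\nsimilar{\hat{y}}{y}{\Delta}$ in its post-state; since $\mu_i$ is $\alpha$-reachable from $\omega_{i-1}$, this gives $\imodels{\Iw[\mu_i]}{\nsimilar{\hat{y}}{y}{\Delta}}$, equivalently $\nsimilar{\mu_i(y)}{\mu_i(\hat{y})}{\Delta}$, which by $\mu_i(\hat{y}){=}\nu_i(\hat{y})$ is the desired $\nsimilar{\mu_i(y)}{\nu_i(\hat{y})}{\Delta}$. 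This settles index $i$; as $i$ was arbitrary, \rref{prop:boundedvariationcoincidence} follows.
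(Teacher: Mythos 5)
Your argument is correct and takes essentially the same route as the paper's proof: both apply Barcan to commute $\exists y^+$ past the modality, unfold \rref{def:monitorevaluation} to extract witness states $\omega_{i-1}$ and $\mu_i$, and conclude $\iaccessible[\alpha]{\Iw[\omega_{i-1}]}{\Iw[\mu_i]}$ via \rref{lem:logicalrelation}. The one place you go beyond the paper --- explicitly deriving $\mstateapprox{\mu_i}{\nu_i}{y}{\Delta}$ from the test in the trailing $\measure{y}{\Delta}$ of the normal form together with $\hat{y}^+{=}\hat{y}$ from $\Upsilon^+$, rather than from any sensor hypothesis --- is a sound filling-in of a step the paper's proof asserts without comment.
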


\begin{proof}
By logical state relation, see \rref{app:proofs}.
\end{proof}
\begin{proofatend}
The proof follows the sketch below.
Note that the true evolution measured in $\nu_{i-1}\rightarrow\nu_i\rightarrow\nu_{i+1}$ is not necessarily a \emph{connected} path since the measurements allow jumps: the endpoint $\mu_i$ of the $\alpha$-run $\iaccessible[\alpha]{\Iw[\omega_{i-1}]}{\Iw[\mu_i]}$ might be different from the start of the next 
$\iaccessible[\alpha]{\Iw[\omega_i]}{\Iw[\mu_{i+1}]}$.

\includegraphics{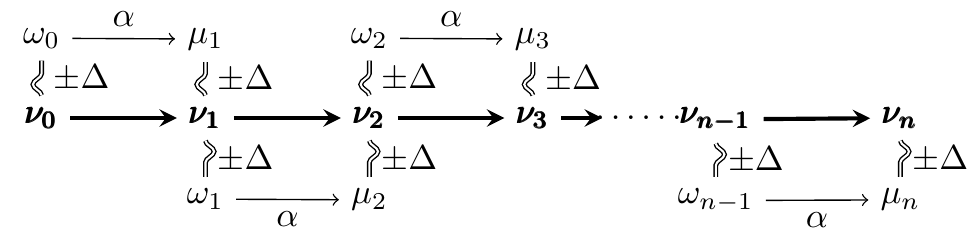}

\noindent 
From $\irelmodels{\Iw[\nu_{i-1}]}{\Iw[\nu_i]}{\chimp}$, \ie, $\irelmodels{\Iw[\nu_{i-1}]}{\Iw[\nu_i]}{\sensormonitordef{\alpha}}$ we get
$\irelmodels{\Iw[\nu_{i-1}]}{\Iw[\nu_i]}{\exists\nsimilar{y}{\hat{y}}{\Delta}\,\exists y^+\, \didia{\alpha}\Upsilon^+}$ by Barcan \cite{DBLP:journals/jar/Platzer17} with $y^+ \not\in \vars{\alpha}$.
Hence there exist $\mstateapprox{\omega_{i-1}}{\nu_{i-1}}{y}{\Delta}$ and $\mstateapprox{\mu_i}{\nu_i}{y}{\Delta}$ with $\irelmodels{\Iw[\omega_{i-1}]}{\Iw[\mu_i]}{\didia{\alpha}\Upsilon^+}$ and we conclude $\iaccessible[\alpha]{\Iw[\omega_{i-1}]}{\Iw[\mu_i]}$ by \rref{lem:logicalrelation}.
\end{proofatend}

The true evolution measured in $\nu_{i-1}\rightarrow\nu_i\rightarrow\nu_{i+1}$ is not necessarily a \emph{connected} path since the true values are freshly estimated from only the last measurement, which allows jumps: the true value $\mu_i(y)$ of the $\alpha$-run $\iaccessible[\alpha]{\Iw[\omega_{i-1}]}{\Iw[\mu_i]}$ might be different from the start $\omega_i(y)$ of the next 
$\iaccessible[\alpha]{\Iw[\omega_i]}{\Iw[\mu_{i+1}]}$, but the values are reasonably close $\nsimilar{\omega_i(y)}{\mu_i(y)}{2\Delta}$ by \rref{def:uncertaintysimilarity}.
\rref{prop:mm-singlestepdistance} and \rref{cor:mm-npairwiseuncertaintycorrectness} bound the variation in the possible true values $y$ of $\alpha$-steps when a monitor is satisfied over measured values $\hat{y}$.

\begin{proposition}[Single-step variation]
\label{prop:mm-singlestepdistance}
Let $\alpha$ be a program in $\Delta$-measurement normal form.
Let $z{-}z_0$ denote the interpolated effect of $\plantDur{\ctrlOut}{\varepsilon}$ in $\alpha$.
If $\irelmodels{\I}{\It}{\chimp}$ with \eqref{eq:mm-pairwiseuncertainty}, then the variation distance in a single $\alpha$-step is bounded: 
$\nsimilar{\stnew(y)}{\stold(y) + z{-}z_0}{{2\Delta}}$. 
\end{proposition}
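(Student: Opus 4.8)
The plan is to reduce the single-step bound to one structural fact about the measurement normal form $\senseCtrl;\plantDur{\ctrlOut}{\varepsilon};~\measure{y}{\Delta}$ --- that within a single $\alpha$-step the unobservable $y$ is changed \emph{only} by the plant, hence its increment is exactly the interpolated plant effect $z{-}z_0$ --- and then to pay the two remaining layers of measurement slack on the new state by the triangle inequality.

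First I would unpack the monitor exactly as in the proof of \rref{prop:boundedvariationcoincidence}: from $\irelmodels{\I}{\It}{\chimp}$, \ie $\irelmodels{\I}{\It}{\sensormonitordef{\alpha}}$, move $\exists y^+$ out past $\didia{\alpha}$ by Barcan \cite{DBLP:journals/jar/Platzer17} (valid as $y^+\notin\vars{\alpha}$), obtaining $\irelmodels{\I}{\It}{\exists\nsimilar{y}{\hat{y}}{\Delta}\,\exists y^+\,\didia{\alpha}\Upsilon^+}$; instantiating the two existentials and using \rref{lem:logicalrelation} yields witness states $\omega,\mu$ with $\mstateapprox{\omega}{\stold}{y}{\Delta}$, $\mstateapprox{\mu}{\stnew}{y}{\Delta}$, and $\iaccessible[\alpha]{\Iw[\omega]}{\Iw[\mu]}$.

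Second I would split $\iaccessible[\alpha]{\Iw[\omega]}{\Iw[\mu]}$ along the sequential composition into $\Iw[\omega]\to\Iw[\omega_1]\to\Iw[\omega_2]\to\Iw[\mu]$ with $\iaccessible[\senseCtrl]{\Iw[\omega]}{\Iw[\omega_1]}$, $\iaccessible[\plantDur{\ctrlOut}{\varepsilon}]{\Iw[\omega_1]}{\Iw[\omega_2]}$, and $\iaccessible[\measure{y}{\Delta}]{\Iw[\omega_2]}{\Iw[\mu]}$. Because $y\notin\boundvars{\senseCtrl}$ (the controller only writes $\ctrlOut$) and $y\notin\boundvars{\measure{y}{\Delta}}$ (the measurement only writes $\hat{y}$), we get $\omega_1(y)=\omega(y)$ and $\mu(y)=\omega_2(y)$; and by the definition of the interpolated plant effect on the middle run, $\omega_2(y)=\omega_1(y)+(z{-}z_0)$ (the clock reset inside $\plantDur{\ctrlOut}{\varepsilon}$ is irrelevant to $y$). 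Chaining these equalities yields the key identity $\mu(y)=\omega(y)+(z{-}z_0)$.

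Finally I would convert uncertainty-similarity and non-faulty sensors into the bound: $\mstateapprox{\mu}{\stnew}{y}{\Delta}$ gives $\nsimilar{\mu(y)}{\stnew(\hat{y})}{\Delta}$ (which is in fact already built into the semantics of $\measure{y}{\Delta}$ along the witness run), while the standing non-faulty-sensor assumption $\imodels{\It}{y{\in}\neighborhood{\hat{y}}{\Delta}}$ gives $\nsimilar{\stnew(y)}{\stnew(\hat{y})}{\Delta}$; the triangle inequality then yields $\nsimilar{\stnew(y)}{\mu(y)}{2\Delta}$, and substituting the key identity gives $\nsimilar{\stnew(y)}{\omega(y)+(z{-}z_0)}{2\Delta}$ --- the claim, with $\stold(y)$ understood as the possible prior true value $\omega(y)$ that the monitor exhibits (which by $\mstateapprox{\omega}{\stold}{y}{\Delta}$ is consistent with the prior measurement $\stold(\hat{y})$). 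I expect the real obstacle to be exactly this uncertainty bookkeeping: the bound stays at $2\Delta$ rather than $4\Delta$ only because a single $\alpha$-step touches $y$ solely through the plant, so the accumulated slack is just one measurement radius ($\mu(y)$ vs.\ $\stnew(\hat{y})$) plus one sensor radius ($\stnew(\hat{y})$ vs.\ $\stnew(y)$); anchoring the left-hand reference at the \emph{actual} prior true value rather than the monitor-witnessed $\omega(y)$ would additionally cost $\nsimilar{\stold(y)}{\omega(y)}{2\Delta}$, so one must be careful which prior value the statement's $\stold(y)+z{-}z_0$ is read against.
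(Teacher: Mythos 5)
Your proposal follows the paper's proof essentially step for step: obtain witness states $\omega,\mu$ connected by an $\alpha$-run from the satisfied pairwise monitor (the paper does this by directly invoking \rref{prop:boundedvariationcoincidence}), decompose that run along the sequential composition, use that $y$ is bound only in the plant to get the key identity $\mu(y)=\omega(y)+(z{-}z_0)$, and then account for $2\Delta$ of measurement slack. The one divergence is the final bookkeeping: the paper charges one $\Delta$ at each end (witness-to-state via $\hat{\approx}$ at both $\stold$ and $\stnew$), whereas you charge both $\Delta$'s at the new state and read the statement's $\stold(y)$ as the monitor-witnessed $\omega(y)$ --- and your explicit observation that anchoring both endpoints at the \emph{actual} true values would cost $4\Delta$ rather than $2\Delta$ is a fair point about an ambiguity that the paper's own proof glosses over.
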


\begin{proof}
By bounded variation coincidence and uncertainty similarity (\rref{app:proofs}).
\end{proof}
\begin{proofatend}
From $\irelmodels{\I}{\It}{\chimp}$ and non-faulty measurement we know there exist $\mstateapprox{\tilde{\stold}}{\stold}{y}{\Delta}$ and $\mstateapprox{\tilde{\stnew}}{\stnew}{y}{\Delta}$ with $\iaccessible[\alpha]{\tI}{\tIt}$ by \rref{prop:boundedvariationcoincidence} and so there are intermediate states $\stintermediate$ and $\tilde{\stintermediate}$ such that $\iaccessible[\senseCtrl]{\tI}{\Ii}$, $\iaccessible[\plantDur{\ctrlOut}{\varepsilon}]{\Ii}{\tIi}$, and $\iaccessible[\measure{y}{\Delta}]{\tIi}{\tIt}$.
Thus, the interpolated effect $z-z_0$ of $\plantDur{\ctrlOut}{\varepsilon}$ is $\tilde{\stintermediate}(y)-\stintermediate(y)$.
Since $y \not\in \boundvars{\senseCtrl}$ and $y \not\in \boundvars{\measure{y}{\Delta}}$ we get $\stintermediate(y)=\tilde{\stold}(y)$ and $\tilde{\stnew}(y)=\tilde{\stintermediate}(y)$.
Hence, $\tilde{\stnew}(y) = \tilde{\stold}(y) + z-z_0$ and in turn
$\nsimilar{\tilde{\stnew}(y)}{\stold(y) + z-z_0}{\Delta}$
from $\mstateapprox{\tilde{\stold}}{\stold}{y}{\Delta}$. 
We conclude $\nsimilar{\stnew(y)}{\stold(y) + z-z_0}{2\Delta}$ from $\mstateapprox{\tilde{\stnew}}{\stnew}{y}{\Delta}$.
\end{proofatend}

\newcommand{\multistepvariation}{\ensuremath{2\Delta(n + 1)}}

\begin{corollary}[Multi-step variation]
\label{cor:mm-npairwiseuncertaintycorrectness}
Assume the model transitions through a sequence of states $\nu_0$, $\nu_1$, \ldots $\nu_n$ with interpolated plant effects $z_i{-}z_{i-1}$.
If $\irelmodels{\Iw[\nu_{i-1}]}{\Iw[\nu_i]}{\chimp}$ with \eqref{eq:mm-pairwiseuncertainty} for all $1{\leq}i{\leq}n$ then variation is bounded: $\nsimilar{\nu_n(y)}{\nu_0(y) + \sum_{i=1}^n(z_i{-}z_{i-1})}{{\multistepvariation}}$.
\end{corollary}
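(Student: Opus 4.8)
The plan is to prove this by induction on $n$, with \rref{prop:mm-singlestepdistance} (single-step variation) doing the real work and only the triangle inequality needed to accumulate the per-step errors. First I would record the one elementary fact about the shortcut $\nsimilar{\cdot}{\cdot}{\cdot}$ that the argument rests on: from $\nsimilar{a}{b}{r}$ and $\nsimilar{b}{c}{s}$ one gets $\nsimilar{a}{c}{r+s}$, and adding a fixed term $d$ to both sides preserves similarity, i.e. $\nsimilar{a}{b}{r}$ implies $\nsimilar{a+d}{b+d}{r}$; both are immediate from the definition of $\nsimilar{\cdot}{\cdot}{\cdot}$ as membership in a ball.

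For the base case $n=0$ the sum is empty and $\nu_0(y) = \nu_0(y)$, so $\nsimilar{\nu_0(y)}{\nu_0(y)}{0}$ and a fortiori $\nsimilar{\nu_0(y)}{\nu_0(y)}{2\Delta}$, which is the claim. For the inductive step I would assume the claim for $n-1$, that is $\nsimilar{\nu_{n-1}(y)}{\nu_0(y) + \sum_{i=1}^{n-1}(z_i{-}z_{i-1})}{2\Delta n}$. The $n$-th transition satisfies $\irelmodels{\Iw[\nu_{n-1}]}{\Iw[\nu_n]}{\chimp}$ with $\alpha$ in $\Delta$-measurement normal form and non-faulty sensors, the two states agreeing on $\scomplement{\boundvars{\alpha}}$, so \rref{prop:mm-singlestepdistance} applied to $(\nu_{n-1},\nu_n)$ gives $\nsimilar{\nu_n(y)}{\nu_{n-1}(y) + (z_n{-}z_{n-1})}{2\Delta}$, with $z_n{-}z_{n-1}$ the interpolated plant effect of that step. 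Adding the constant $z_n{-}z_{n-1}$ to both sides of the induction hypothesis yields $\nsimilar{\nu_{n-1}(y) + (z_n{-}z_{n-1})}{\nu_0(y) + \sum_{i=1}^n(z_i{-}z_{i-1})}{2\Delta n}$, and chaining this with the single-step bound by the triangle inequality produces $\nsimilar{\nu_n(y)}{\nu_0(y) + \sum_{i=1}^n(z_i{-}z_{i-1})}{2\Delta(n+1)}$, completing the induction. (Equivalently one can present this non-inductively as chaining the $n$ relations $\nsimilar{\nu_i(y)}{\nu_{i-1}(y) + (z_i{-}z_{i-1})}{2\Delta}$, $1{\leq}i{\leq}n$, end to end.)

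There is no genuinely hard step here: the corollary is simply a chain of $n$ applications of \rref{prop:mm-singlestepdistance}. The one point that warrants care is the bookkeeping of the interpolated plant effects — for step $i$ the witnessing $\alpha$-run, and hence $z_i{-}z_{i-1}$, originates from a state that is only $\Delta$-uncertainty-similar to $\nu_{i-1}$, and as observed after \rref{prop:boundedvariationcoincidence} these witness runs need not join up into one connected trajectory. This is harmless here, because \rref{prop:mm-singlestepdistance} already phrases its conclusion directly in terms of $\nu_{i-1}$, $\nu_i$ and the per-step effect, so the accumulation never touches the internal witnesses. The same reasoning in fact yields the tighter bound $2\Delta n$; the stated $2\Delta(n+1)$ is just a convenient overestimate.
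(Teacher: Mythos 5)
Your derivation is logically fine as a chain of $n$ applications of \rref{prop:mm-singlestepdistance} plus the triangle inequality, but it is a genuinely different route from the paper's, and the difference is not cosmetic. The paper never chains the single-step \emph{true-value} bound: it chains the $2\Delta$ variation of \emph{consecutive measurements} (i.e., $\nsimilar{\hat{y}^+}{\hat{y}}{{2\Delta}}$ relative to the interpolated plant effect), costing $2\Delta n$ over the $n$ steps, and converts measurements to true values via $\nsimilar{y}{\hat{y}}{\Delta}$ only at the two endpoints of the whole sequence, adding $\Delta+\Delta$; the conversions at intermediate states overlap and are never paid. That is precisely where the $n{+}1$ in $\multistepvariation$ comes from, so your closing claim that the same reasoning ``in fact yields the tighter bound $2\Delta n$'' and that the stated bound is mere slack is wrong. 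Already at $n=1$ the worst case realizes a deviation of $4\Delta$: by \rref{def:uncertaintysimilarity} the witness value lies within $\Delta$ of the \emph{measurement}, not of the true value, so $\nu_0(y)=0$, $\nu_0(\hat{y})=\Delta$, witness start $2\Delta$, zero plant effect, witness end $2\Delta$, $\nu_1(\hat{y})=3\Delta$, $\nu_1(y)=4\Delta$ satisfies the monitor and non-faulty sensing while $\abs{\nu_1(y)-\nu_0(y)}=4\Delta>2\Delta$. This also exposes the fragility of your route: a careful re-derivation of the single-step bound from \rref{def:uncertaintysimilarity} charges $2\Delta$ at \emph{each} endpoint of the witness run, so chaining it pays the intermediate sensor uncertainty twice per interior state and yields only $4\Delta n$, which is weaker than $\multistepvariation$ for $n\geq 2$. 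The measurement-level accounting is the actual content of the paper's proof, and your argument never touches it.
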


\begin{proof}
By single-step variation of consecutive measurements, see \rref{app:proofs}.
\end{proof}
\begin{proofatend}
Follows from $2\Delta$ variation of consecutive measurements, \ie, $\nsimilar{\hat{y}^+}{\hat{y}}{{2\Delta}}$ and the additional sensor uncertainty $\nsimilar{y_i}{\hat{y}_i}{\Delta}$ that is relevant at the beginning and end of the sequence of states but not at intermediate steps since overlapping.
\end{proofatend}

\newcommand{\quantifiedsensorviolationdef}[1]{\ensuremath{\forall \nsimilar{y}{\hat{y}}{\Delta}\,\dbox{#1}\left(\forall y^+\,\neg \Upsilon^+\right)}}
\newcommand{\sensorviolationdef}[1]{\ensuremath{\forall \nsimilar{y}{\hat{y}}{\Delta}\,\dbox{#1}\left(\Upsilon_{\boundvars{\ctrl{x};\plant{\ctrlOut}}\backslash y} \limply y\not\in\neighborhood{\hat{y}^+}{\Delta}\right)}}
\newcommand{\metricsensorviolationdef}[1]{\ensuremath{\forall \nsimilar{y}{\hat{y}}{\Delta}\,\dbox{#1}\forall y^+\,(\nsimilar{y^+}{y}{\violationbound})}}

As a consequence of \rref{prop:mm-singlestepdistance} and \rref{cor:mm-npairwiseuncertaintycorrectness}, \rref{thm:mm-pairwisecorrectness} is useful for single-step consistency checks: such a monitor 
\begin{inparaenum}[(i)]
\item keeps at least $\Delta \geq 0$ safety margin with a contraction-safe controller because otherwise freshly estimating true $y$ from measurements on every iteration does not preserve invariants,  and 
\item detects ``large'' deviations that occur in a single monitoring step. 
\end{inparaenum}
However, pairwise consistency has to be safeguarded for its entire $\multistepvariation$ deviation over $n$ steps to ensure safety despite gradual drift, as illustrated in the following example, which inhibits motion and cannot exploit improving the safety margin $\Delta$ over a history of measurements.

\newcommand{\vUncertainty}{\ensuremath{\Delta}}

\begin{mdframed}[style=example]
\begin{example}
The flight protocol so far assumes perfect knowledge about the intruder's linear velocity $\vintruder$.
Here, we extend the protocol when the ownship takes measurements $\hat{\vintruder}$ of the intruder velocity $\vintruder$ with a sensor that might be off by uncertainty $\vUncertainty$.
\vspace{-.5\baselineskip}
\[
\alpha \equiv \bigl(\pchoice{(\humod{\wownship}{0};\ptest{\forall \nsimilar{\vintruder}{\hat{\vintruder}}{{\vUncertainty}}\, \straightinv})}{(\humod{\wownship}{1};\ptest{\forall \nsimilar{\vintruder}{\hat{\vintruder}}{{\vUncertainty}}\,\flightinv{\wownship}})}\bigr);~\plantPrg;~\measure{\vintruder}{{\vUncertainty}}
\]
Now the true linear intruder velocity $\vintruder$ is unobservable, so the monitor condition $\exists \nsimilar{\vintruder}{\hat{\vintruder}}{{\vUncertainty}}\, \didia{\alpha}\exists \vintruder^+(\flightUpsilon \land \hat{\vintruder}^+{=}\hat{\vintruder})$ extends $\Upsilon^+$ with $\hat{\vintruder}^+{=}\hat{\vintruder}$ and existentially quantifies away both unobservable speeds $\vintruder$ and $\vintruder^+$.
Because intruder speed in the model is constant, all
measurements $\hat{\vintruder}$ and $\hat{\vintruder}^+$ are at most $2\vUncertainty$ apart.
\rref{fig:pairwiseillustration} illustrates the monitor behavior when measuring the constant intruder speed $\vintruder=0.5$ with uncertainty $\vUncertainty=0.1$: the thick error bars represent the range of possible measurements $\hat{\vintruder}$ according to the true $\vintruder$, the thin error bars the measurement range allowed per $\nsimilar{\hat{\vintruder}^+}{\hat{\vintruder}}{{2\vUncertainty}}$ from the previous measurement.
The measurements $\hat{\vintruder}$ at $t=0$ to $t=5$ vary around the true values as would be expected from a sensor, which includes the worst case of two consecutive measurements hitting opposite bounds of the uncertainty.  
Since the monitor does not keep history, the true $\vintruder$ is allowed to drift at times $t=6$ and $t=7$.
The monitor detects violations if the $2\vUncertainty$ uncertainty is exceeded in a single step at $t=8$.
\end{example}
\end{mdframed}

\begin{figure}[tb]
\begin{subfigure}[b]{.48\linewidth}
\centering
\begin{footnotesize}
\begin{tikzpicture}
\begin{axis}[
  compat=newest,x dir=normal,
  legend pos=outer north east,
  width=\columnwidth,height=5cm,
  xlabel={t},ylabel={$\vintruder$},
  yticklabels=\empty,
  axis lines=middle,
  axis line style={->},
  x label style={at={(current axis.right of origin)},anchor=west},
  y label style={at={(current axis.above origin)},anchor=east},
  enlargelimits=true,
  clip=false
]
\addplot [color=lslightblue, thick, mark=square*]
 plot [error bars/.cd, y dir = both, y explicit, error bar style={line width=2pt}]
 table[x =t, y =v, y error =ds]{curvebotmodelmonitorpairwise.dat};
\addplot [color=lsblue, only marks, mark=*
]
 plot 
 table[x =t, y =snow]{curvebotmodelmonitorpairwise.dat};
\addplot [color=lsblue, thick, only marks, mark=off,mark options={scale=2}]
 plot [error bars/.cd, y dir = both, y explicit,error mark options={rotate=90,lslightblue,mark size=5pt}]
 table[x =t, y =spre, y error =dsp]{curvebotmodelmonitorpairwise.dat}; 
\node[anchor=east] at (axis cs: 9,1) {undetected excess drift at $t{=}7$}; 
\node[anchor=east,fill=white,xshift=-.5em] at (axis cs: 8,.1) {detected violation at $t{=}8$}; 

\addplot+[name path=trueupper,mark=none,dashed,smooth,very thick,color=lsblue,forget plot] table[x =t, y expr = {\thisrow{v} + \thisrow{ds}}]{curvebotmodelmonitorpairwise.dat}; 
\addplot+[name path=truelower,mark=none,dashed,smooth,very thick,color=lsblue,forget plot] table[x =t, y expr = {\thisrow{v} - \thisrow{ds}}]{curvebotmodelmonitorpairwise.dat}; 
\addplot+[name path=measureupper,mark=none,dotted,smooth,color=lsblue,forget plot] table[x =t, y expr = {\thisrow{spre} + \thisrow{dsp}}]{curvebotmodelmonitorpairwise.dat}; 
\addplot+[name path=measurelower,mark=none,dotted,smooth,color=lsblue,forget plot] table[x =t, y expr = {\thisrow{spre} - \thisrow{dsp}}]{curvebotmodelmonitorpairwise.dat}; 
\addplot[lsochre!35,opacity=0.5,
] fill between[of=measurelower and measureupper];

\addlegendentry{True speed $s$};
\addlegendentry{Measured speed $\hat{s}$};
\legend{};
\end{axis}
\end{tikzpicture}
\end{footnotesize}
\caption{Bounds from current measurement may accept excess drift}
\label{fig:pairwiseillustration}
\end{subfigure}
\hfill
\begin{subfigure}[b]{.48\linewidth}
\centering
\begin{footnotesize}
\begin{tikzpicture}
\begin{axis}[
  compat=newest,x dir=normal,
  legend pos=outer north east,
  width=\columnwidth,height=5cm,
  xlabel={t},ylabel={$\vintruder$},
  yticklabels=\empty,
  axis lines=middle,
  axis line style={->},
  x label style={at={(current axis.right of origin)},anchor=west},
  y label style={at={(current axis.above origin)},anchor=east},
  enlargelimits=true,
  clip=false
]
\addplot [color=lslightblue, thick, mark=square*]
 plot[error bars/.cd, y dir = both, y explicit, error bar style={line width=2pt}]
 table[x=t, y=v, y error expr={0.1}]{modelmonitorestimator.dat};
\addplot [color=lsblue, thick, only marks, mark=*
]
 plot
 table[x =t, y =s, y error minus=v-l, y error plus=v+u]{modelmonitorestimator.dat};
\addplot [color=lsblue, thick, only marks, mark=off,mark options={scale=2}]
 plot[error bars/.cd, y dir = both, y explicit] 
 table[x =t, y =sp, y error minus=sp-l, y error plus=sp+u]{modelmonitorestimator.dat};  
\node[anchor=east] at (axis cs: 8.5,1) {detected excess drift at $t{=}7$}; 
\node[anchor=east,fill=white,xshift=-.5em] at (axis cs: 8,.1) {detected violation at $t{=}8$}; 

\addplot+[name path=trueupper,mark=none,dashed,smooth,color=lsblue,forget plot] table[x =t, y expr = {\thisrow{s} + \thisrow{v+u}}]{modelmonitorestimator.dat}; 
\addplot+[name path=truelower,mark=none,dashed,smooth,color=lsblue,forget plot] table[x =t, y expr = {\thisrow{s} - \thisrow{v-l}}]{modelmonitorestimator.dat}; 
\addplot+[name path=measureupper,mark=none,dashed,smooth,very thick,color=lsblue,forget plot] table[x =t, y expr = {\thisrow{v} + 0.1}]{modelmonitorestimator.dat}; 
\addplot+[name path=measurelower,mark=none,dashed,smooth,very thick,color=lsblue,forget plot] table[x =t, y expr = {\thisrow{v} - 0.1}]{modelmonitorestimator.dat}; 
\addplot+[name path=estimateupper,mark=none,dotted,smooth,color=lsblue,forget plot] table[x =t, y expr = {\thisrow{sp} + \thisrow{sp+u}}]{modelmonitorestimator.dat}; 
\addplot+[name path=estimatelower,mark=none,dotted,smooth,color=lsblue,forget plot] table[x =t, y expr = {\thisrow{sp} - \thisrow{sp-l}}]{modelmonitorestimator.dat}; 
\addplot[lsgreen] fill between[of=truelower and trueupper];
\addplot[lsochre!35,opacity=0.5
] fill between[of=estimatelower and estimateupper];

\addlegendentry{True speed $s$};
\addlegendentry{Measured speed $\hat{s}$};

\legend{};
\end{axis}
\end{tikzpicture}
\end{footnotesize}
\caption{Bounds from entire measurement history detects drift}
\label{fig:historyillustration}
\end{subfigure}
\caption{Estimating bounds from measurement pairs and measurement history}.
\end{figure}

Monitoring contractions and single-step conformance guarantees safety up to the current measurement per \rref{thm:mm-pairwisecorrectness} (beyond by Theorems \ref{thm:controlviolationrecoverability}+\ref{thm:modelviolationrecovarability}) and detects large single-step deviations per \rref{prop:boundedvariationcoincidence} (\eg, at $t=8$ in \rref{fig:pairwiseillustration}), but cannot detect gradual drift early.
In order to react to drift in measurements even before contractions are violated, we extend our monitors with state estimation over the entire measurement history.

\subsection{Model Monitors for Rolling Consistency of Measurements}

Even if individual measurement pairs do not trigger the detectable single-step violation of the previous section, the resulting aggregated drift over multiple measurements is still detectable when we keep a history of the control choices.
Instead of an explicit list of measurement and control histories, we only represent in aggregate form what really matters: acceptable bounds for the upcoming true values (solid blue small range in \rref{fig:historyillustration}) and measurements (solid light-brown large range in \rref{fig:historyillustration}).
The bounds are updated on each monitor execution with the current control choice.
The resulting \emph{rolling state estimator} detects gradual violation over the course of multiple measurements.

\begin{definition}[Non-diverging rolling state estimator]
\label{def:nondivergingestimator}
A \emph{rolling state estimator} \(\pumod{[l,u]}{e(\hat{y}_0,\hat{y},y-y_0,\Delta},[l_0,u_0])\) updates the estimate $[l,u]$ from the previous measurement $\hat{y}_0$, current measurement $\hat{y}$, the modeled interpolated plant effect $y{-}y_0$ and the previous estimate $[l_0,u_0]$ with $\nsimilar{y_0}{\hat{y}_0}{[l_0,u_0]}$ and $[l_0,u_0] \subseteq [-\Delta,\Delta]$.
The estimator \(\pumod{[l,u]}{e(\hat{y}_0,\hat{y},y-y_0,\Delta},[l_0,u_0])\)
is \emph{non-diverging} if $u-l \leq u_0-l_0$ for all $\hat{y}_0,\hat{y},y_0,y,\Delta,l,u$ with $\nsimilar{y}{\hat{y}}{[l,u]}$ and $\neighborhood{\hat{y}}{[l,u]} \subseteq \neighborhood{\hat{y}}{\Delta}$.
\end{definition}

The rolling state estimator updates estimates $\nsimilar{y}{\hat{y}}{[l,u]}$ of true $y$ on every measurement such that the history of measurements is preserved in aggregate form.
On each step, the monitor checks for existence of a true state $y$ in the estimate, and uses the rolling state estimator to incorporate the current measurement $\hat{y}$ into the estimate for the next check.
That way, the interpolated plant effect $y{-}y_0$ is reflected in the measurement bounds in the monitor condition \eqref{eq:mm-rollingstateestimator} below, so does not need to be observable.
For a sequence of $n$ measurements, a non-diverging rolling state estimator keeps tighter bounds compared to the $\multistepvariation$ bounds of \rref{cor:mm-npairwiseuncertaintycorrectness} without measurement history.
Note that measurements typically vary due to sensor uncertainty, so the estimate almost surely even improves over time by observing more measurements.

For a hybrid program $\alpha$ in $\Delta$-measurement normal form, the monitor condition \eqref{eq:mm-rollingstateestimator} checks the plausibility of a history of measurements by estimating the true $\nsimilar{y}{\hat{y}}{{[l,u]}}$ from the observations. 
\newcommand{\rollingestimatormonitor}[1]{\exists \nsimilar{y}{\hat{y}}{{[l,u]}}\,\didia{#1}\left(\exists y^+\,\Upsilon^+\right)}%
\newcommand{\longrollingestimatormonitor}{\rollingestimatormonitor{\overbrace{\recallState{y}}^{\text{remember previous state}};~ \alpha;\\ \underbrace{\updateEstimator{y}{\Delta}{l}{u}}_{\text{update estimator}}}}%
\vspace{-\baselineskip}
\begin{multline}\label{eq:mm-rollingstateestimator}
\chimp ~\equiv~ \longrollingestimatormonitor
\end{multline}

The history of $\alpha$ is reflected by the rolling state estimator in its $[l,u]$-bounds for upcoming measurements, which guarantees safety by \rref{thm:mm-rollingstateestimatorcorrectness} with Theorems \ref{thm:controlviolationrecoverability}+\ref{thm:modelviolationrecovarability}.

\begin{theorem}[Monitor with rolling state estimator maintains invariants]
\label{thm:mm-rollingstateestimatorcorrectness}
Let $\alpha$ be a $[l,u]$-contraction safe hybrid program in $\Delta$-measurement normal form and $\estimatorname$ be a non-diverging estimator.
Assume the system transitions from $\stold$ to $\stnew$, which agree on $\scomplement{\boundvars{\alpha}}$, with non-faulty sensors. 
If the $[l,u]$-contraction holds in the beginning $\imodels{\I}{\marginProp{y}{\hat{y}}{{[l,u]}}{\inv}}$ and the monitor condition $\chimp$ with \eqref{eq:mm-rollingstateestimator} is satisfied, \ie, $\irelmodels{\I}{\It}{\chimp}$,
then the invariant $\inv$ is maintained: $\imodels{\It}{\inv}$.
\end{theorem}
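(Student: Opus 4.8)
The plan is to follow the template of the proof of \rref{thm:mm-pairwisecorrectness}, replacing the fixed tolerance $[-\Delta,\Delta]$ by the estimator interval $[l,u]$ and additionally threading the argument through the two extra program fragments $\recallState{y}$ and $\updateEstimator{y}{\Delta}{l}{u}$ that wrap $\alpha$ in monitor condition \eqref{eq:mm-rollingstateestimator}. Abbreviate $\beta \equiv \recallState{y};~\alpha;~\updateEstimator{y}{\Delta}{l}{u}$, so $\chimp \equiv \exists\nsimilar{y}{\hat{y}}{{[l,u]}}\,\didia{\beta}(\exists y^+\,\Upsilon^+)$. From $\irelmodels{\I}{\It}{\chimp}$ and $y^+\notin\vars{\beta}$ I pull the quantifier $\exists y^+$ to the front by the Barcan axiom \cite{DBLP:journals/jar/Platzer17}, then extract witnesses $r$ for $y$ and $s$ for $y^+$, obtaining states with $\imodels{\Iw[\stold_y^r]}{\nsimilar{y}{\hat{y}}{{[l,u]}}}$ and, by \rref{lem:logicalrelation}, a run $\iaccessible[\beta]{\Iw[\stold_y^r]}{\Iw[\stnew_y^s]}$.

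Next I split that $\beta$-run at the states reached after $\recallState{y}$ and after $\alpha$. The prefix $\recallState{y}$ only writes the memory variables recording the previous state and estimate (none free in $\inv$ or in the contraction formula), and $\updateEstimator{y}{\Delta}{l}{u}$ only rewrites $l,u$. Since $y$ is bound by the leading $\forall$ in $\marginProp{y}{\hat{y}}{{[l,u]}}{\inv}$, \rref{lem:coincidence} carries the assumed initial contraction $\imodels{\I}{\marginProp{y}{\hat{y}}{{[l,u]}}{\inv}}$ across the witness modification and over $\recallState{y}$, after which $[l,u]$-contraction-safety of $\alpha$ (\rref{def:contraction}) propagates it across the $\alpha$-step, with $\hat{y}$ now holding the new measurement while $l,u$ still hold the old bounds.

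The crux is transporting the contraction across $\updateEstimator{y}{\Delta}{l}{u}$, which overwrites $l,u$ by the new estimate $[l^+,u^+]=\estimatorname(\hat{y}_0,\hat{y},y-y_0,\Delta,[l_0,u_0])$, and then concluding $\imodels{\It}{\inv}$ at the true state. Here I rely on \rref{def:nondivergingestimator}: the estimator output around the new measurement stays within the sensor bound ($\neighborhood{\hat{y}}{{[l^+,u^+]}}\subseteq\neighborhood{\hat{y}}{\Delta}$) and, by non-divergence, never widens beyond the margin the controller already protected, and its interval-intersection behaviour keeps only candidates that the modeled effect $y-y_0$ carries from the previous estimate into the new $\Delta$-ball. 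From this I obtain $\imodels{\Iw[\stnew_y^s]}{\marginProp{y}{\hat{y}}{{[l^+,u^+]}}{\inv}}$, transfer it to $\It$ by \rref{lem:coincidence} (the states agree off $\{y\}$) to get $\imodels{\It}{\marginProp{y}{\hat{y}}{{[l^+,u^+]}}{\inv}}$, and instantiate the quantified $y$ with the true value $\stnew(y)$, which the non-faulty sensor together with the soundness of $\estimatorname$ place inside $\neighborhood{\stnew(\hat{y})}{{[l^+,u^+]}}$, yielding $\imodels{\It}{\inv}$. I expect this last step to be the main obstacle: unlike the fixed $[-\Delta,\Delta]$ of \rref{thm:mm-pairwisecorrectness}, the neighbourhood $[l,u]$ is state-dependent and rewritten each step, so the delicate point is to prove that $\updateEstimator{y}{\Delta}{l}{u}$ can only shrink the contraction-safe region --- which is precisely where non-divergence of $\estimatorname$ (the estimate width never grows, hence it stays within $[-\Delta,\Delta]$) must be married to the soundness of the estimator (the true $y$ stays inside the reported interval) so that the $[l,u]$-contraction transported across $\alpha$ still covers the new true value. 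Everything else --- Barcan, logical state relation, coincidence, and contraction-safety --- is the same scaffolding as in the pairwise proof.
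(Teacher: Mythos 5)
Your proposal matches the paper's proof essentially step for step: the same abbreviation $\beta \equiv \recallState{y};~\alpha;~\updateEstimator{y}{\Delta}{l}{u}$, Barcan to extract witnesses, \rref{lem:logicalrelation} to obtain the $\beta$-run, coincidence plus $[l,u]$-contraction-safety to propagate $\marginProp{y}{\hat{y}}{{[l,u]}}{\inv}$ up to the estimator update, non-divergence of $\estimatorname$ to carry the contraction across the rewritten bounds, and finally the non-faulty-sensor/estimator-soundness condition $\nsimilar{y}{\hat{y}}{{[l,u]}}$ to instantiate the contraction at the true value and conclude $\imodels{\It}{\inv}$. You also correctly isolate the estimator-update step as the only genuinely new ingredient relative to \rref{thm:mm-pairwisecorrectness}, which is exactly where the paper's proof invokes \rref{def:nondivergingestimator}.
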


\begin{proof}
See \rref{app:proofs}.
\end{proof}
\begin{proofatend}
For a hybrid program $\alpha$ in $\Delta$-measurement normal form, let $\beta$ abbreviate the program $\recallState{y};~ \alpha;\updateEstimator{y}{\Delta}{l}{u}$.
The monitor condition $\chimp$ is satisfied $\irelmodels{\I}{\It}{\rollingestimatormonitor{\beta}}$ by assumption, so we get $\irelmodels{\I}{\It}{\exists \nsimilar{y}{\hat{y}}{{[l,u]}}\,\exists y^+\,\didia{\beta}\Upsilon^+}$ by Barcan \cite{DBLP:journals/jar/Platzer17} with $y^+ \not\in \vars{\beta}$.
Hence there exist $r\in\mathbb{R}$ for $y$ and $s\in\mathbb{R}$ for $y^+$ as well as states $\stold_y^r$ and $\stnew_y^s$ with $\imodels{\Iw[\stold_y^r]}{y{\in}\neighborhood{\hat{y}}{[l,u]}}$ and $\irelmodels{\Iw[\stold_y^r]}{\Iw[\stnew_y^s]}{\didia{\alpha}\Upsilon^+}$ and so $\iaccessible[\beta]{\Iw[\stold_y^r]}{\Iw[\stnew_y^s]}$ by \rref{lem:logicalrelation}.
We get $\imodels{\Iw[\stold_y^r]}{\marginProp{y}{\hat{y}}{[l,u]}{\inv}}$ from assumption $\imodels{\I}{\marginProp{y}{\hat{y}}{[l,u]}{\inv}}$ by \rref{lem:coincidence} since $\stold_y^r=\stold$ on $\scomplement{\{y\}}$ and $y \not\in \freevars{\marginProp{y}{\hat{y}}{[l,u]}{\inv}}$.
Program $\alpha$ is $[l,u]$-contraction-safe by assumption and therefore by \rref{def:contraction} using $\imodels{\Iw[\stold_y^r]}{\marginProp{y}{\hat{y}}{[l,u]}{\inv}}$ we get
\[\imodels{\Iw[\omega_y^r]}{\dibox{\recallState{y};\senseCtrl;\plantDur{\ctrlOut}{\varepsilon};\measure{y}{\Delta}}{\marginProp{y}{\hat{y}}{[l,u]}{\inv}}} \enspace .\]
Now $\imodels{\Iz}{\marginProp{y}{\hat{y}}{[l,u]}{\inv}}$ for all reachable states $\iget[state]{\Iz}$ such that \[\iaccessible[\recallState{y};\senseCtrl;\plantDur{\ctrlOut}{\varepsilon};\measure{y}{\Delta}]{\Iw[\stold_y^r]}{\Iz}\enspace.\] 
In particular, since the monitor condition is satisfied there exists a state $\iget[state]{\Iz}$ such that also $\iaccessible[\updateEstimator{y}{\Delta}{l}{u}]{\Iz}{\Iw[\stnew_y^s]}$.
The estimator $\updateEstimator{y}{\Delta}{l}{u}$ is non-diverging (\rref{def:nondivergingestimator}) by assumption, so we get $\imodels{\Iw[\stnew_y^s]}{\marginProp{y}{\hat{y}}{{[l,u]}}{\inv}}$ and in turn also $\imodels{\It}{\marginProp{y}{\hat{y}}{{[l,u]}}{\inv}}$ by \rref{lem:coincidence} with $\stnew=\stnew_y^s$ on $\scomplement{\{y\}}$ and $y \not\in \freevars{\marginProp{y}{\hat{y}}{{[l,u]}}{\inv}}$. 
Hence, from $\imodels{\It}{y{\in}\neighborhood{\hat{y}}{{[l,u]}}}$ by assumption we conclude $\imodels{\It}{\inv}$. 
\end{proofatend}

\begin{mdframed}[style=example]
\begin{example}
We extend the flight protocol with a velocity estimator $e(\hat{\vintruder}_0,\hat{\vintruder},0,[l_0,u_0])$ that updates lower and upper bounds on the deviation between the next measurement $\hat{\vintruder}$ and the true velocity $\vintruder$ from the current measurement $\hat{\vintruder}_0$ and the current estimation bounds $[l_0,u_0]$.
Since $\vintruder$ is constant (but not perfectly known), the interpolated plant effect $\vintruder-{\vintruder}_0$ is $0$:
\(
l=\max{\bigl({-}\Delta,\hat{\vintruder}_0-\hat{\vintruder}+l_0\bigr)}\qquad
u=\min{\bigl(\Delta,\hat{\vintruder}_0-\hat{\vintruder}+u_0\bigr)}
\)

\end{example}
\end{mdframed}

\section{Implementation and Evaluation}
\label{sec:evaluation}

Based on the monitor characterizations developed here, the process for synthesizing model monitors from hybrid systems models is systematic and correct by construction \cite{DBLP:journals/fmsd/MitschP16}, implemented as a synthesis tactic in \KeYmaeraX~\cite{DBLP:conf/cade/FultonMQVP15} see \rref{app:implementation}.
A crucial additional step in the process is to eliminate the remaining (existential) quantifiers that describe possible unobservable true values, but the complexity and duration hinges on the performance of external solvers.
Additional arithmetical simplifications beyond \cite[Opt. 1]{DBLP:journals/fmsd/MitschP16} help overcome the limitations \cite{Davenport1988} of quantifier elimination procedures.

\paragraph{Quantifier Elimination Preprocessing.}

In order to make quantifier elimination tractable even in complex models, we exploit two important observations about the typical shape of input programs that is consequently reflected in the synthesized monitors: 
\begin{inparaenum}[(i)]
\item controllers have different control choices over multiple control branches, which result in alternative paths through the program, reflected as disjunctions in the monitor, and 
\item controllers do not mention unobservable variables (only the measured quantities), hence unobservable variables occur only in few subformulas of a monitor.
\end{inparaenum}
We rewrite monitor conditions into disjunctive normal form by proof to split a single quantifier elimination proof obligation into several smaller ones with the lemmas \eqref{eq:existsOr} and \eqref{eq:existsVacuous} corresponding to these observations.
These preprocessing steps in the tactic before quantifier elimination help scale the synthesis tactic to larger models, as discussed next.
\begin{align}
\exists x\, \bigl(p(x) \lor q(x)\bigr) & \lbisubjunct \exists x\, p(x) \lor \exists x\, q(x) \label{eq:existsOr}\\
\exists x\, \bigl(p() \land q(x)\bigr) & \lbisubjunct p() \land \exists x\, q(x) \label{eq:existsVacuous}
\end{align}

\paragraph{Synthesis Performance.}

We ran the synthesis tactic on the running example and \dL case studies, a water tank \cite{DBLP:journals/fmsd/MitschP16}, train control \cite{DBLP:conf/icfem/PlatzerQ09}, road traffic control \cite{DBLP:conf/iccps/MitschLP12}, and robot collision avoidance \cite{DBLP:journals/ijrr/MitschGVP17}. 
The steps of the synthesis process, its duration, and the size of the resulting monitor condition in terms of operators are summarized in \rref{tab:evaluation}.
The duration measurements were taken on a 2.4 GHz Intel Core i7 with 16GB of memory.

\begin{table}[tb]
\caption{Monitor synthesis case studies}
\label{tab:evaluation}
\begin{footnotesize}
\begin{tabularx}{\columnwidth}{
	X@{\hskip 1ex}
	l@{\hskip -1ex}
	r@{\hskip 1ex}
	r@{\hskip 2ex}
	r@{\hskip 1ex}
	r@{\hskip 1ex}
	r@{\hskip 1ex}
	r@{\hskip 1ex}
	r@{\hskip 1ex}
	r@{\hskip 1ex}
	r@{\hskip 1ex}
	r@{\hskip 1ex}
}
\toprule    
    &&
    & \multicolumn{2}{@{}c@{}}{Monitor Size}
    &
    & \multicolumn{4}{c}{Synthesis Duration [s]}
    & \multicolumn{2}{c}{Simulation}
\tabularnewline
\cmidrule{4-5}
\cmidrule(r){7-10}
\cmidrule(l){11-12}
\multicolumn{2}{l}{Case Study} & Dim. & $\forall\exists$ & $\forall\exists$-free & \parbox{2em}{\raggedleft Proof Steps} & \parbox{3em}{\raggedleft Proof Check} & \parbox{3em}{\raggedleft Dis\-co\-very} & QE & Ext. & P & R
\tabularnewline
\midrule
\multirow{3}{*}{\parbox{1.5cm}{\raggedright Horizontal flight}}
& original     & 8 & --  & 46 & 6259  & 33  & 4 & --& 1 & 1 & 1 \\
& + actuator   & 9 & 124 & 43 & 10369 & 33 & 6 & 3 & 4 & 1 & 1 \\
& + sensor     & 9 & 102 & 169 & 11804 & 12  & 8 & 2 & 3 & 1 & 0.81$^\text{a}$
\tabularnewline
\midrule
\multirow{3}{*}{\parbox{1.5cm}{\raggedright Water tank~\cite{DBLP:journals/fmsd/MitschP16}}}
& original     & 3 & -- & 19  & 676  & 2 & 1 & -- & 0 & 1 & 1 \\
& + actuator   & 4 & 25 & 30  & 1864 & 2  & 2 & 1 & 1 & 1 & 1 \\
& + sensor     & 4 & 35 & 100 & 1383 & 2  & 2 & 0 & 0 & 0.94 & 0.93
\tabularnewline
\midrule
\multirow{3}{*}{\parbox{1.5cm}{\raggedright Train control~\cite{DBLP:conf/icfem/PlatzerQ09}}}
& original     & 8  & --  & 74  & 1309 & 6 & 3 & -- & 0 & 1 & 1 \\
& + actuator   & 10 & 148 & 109  & 3703 & 8 & 6 & 1 & 2 & 1 & 1 \\
& + sensor     & 10 & 137 & 206 & 3051 & 6 & 5 & 0 & 0 & 1 & 0.98
\tabularnewline
\midrule
\multirow{3}{*}{\parbox{1.5cm}{\raggedright Road traffic control~\cite{DBLP:conf/iccps/MitschLP12}}}
& original     & 9 & --  & 197 & 11692 & 146 & 17 & -- & 104 & \multicolumn{2}{c}{\multirow{3}{*}{\parbox{1.5cm}{\centering Not\\ simulated}}} \\
& + actuator   & 11 & 218 & 803 & 12551 & 193 & 16 & 0 & 179 \\
& + sensor     & 11 & 208 & 877 & 22384 &  83 & 25 & 0 & 69
\tabularnewline
\midrule
\multirow{3}{*}{\parbox{1.55cm}{Robot collision avoidance~\cite{DBLP:journals/ijrr/MitschGVP17}}}
& original     &  & --  & 231 & 26405 & 332 & 61 & -- & 16 & \multicolumn{2}{c}{\multirow{3}{*}{\parbox{1.5cm}{\centering Not\\ simulated}}} \\
& + actuator   & 15 & 535 & 275 & 4786 &  668 & 95 & 36$^\text{b}$ &  38\\
& + sensor     & 15 & 513 & 10980$^\text{c}$ & 152810 & 675 & 89 & 1142$^\text{b}$ & 1145
\tabularnewline
\bottomrule
\end{tabularx}
$^\text{a}$ 5 runs with 15 loop iterations \quad $^\text{b}$ with preprocessing \eqref{eq:existsOr} and \eqref{eq:existsVacuous} \quad $^\text{c}$ simplification aborted (2min timeout)
\end{footnotesize}
\end{table}

For each case study, we synthesized monitors for the original model and analyzed extended models that include sensor uncertainty and actuator disturbance.
The column ``Dim.'' gives an intuition on the complexity of the case study in terms of the number of model variables.
We list the monitor size in terms of the number of arithmetical, comparison, and logical operators in the intermediate quantified form (column ``$\forall\exists$'') and the final quantifier-free fully simplified form (column ``$\forall\exists$-free''), as well as the duration of the synthesis steps: column ``Proof Check'' lists the duration of checking the safety proof, column ``Discovery'' lists the duration of discovering the intermediate quantified monitor form, and column ``QE'' the duration of obtaining the quantifier-free form.
Finally, column ``Ext.'' lists how much of the total synthesis duration (Proof Check + Discovery + QE) is spent in external solvers.
The main insight is that the synthesis and discovery of monitor conditions with the techniques in this paper processes CPS models with modest computation and time resources.

\vspace{-\baselineskip}
\paragraph{Monitor Performance.}

We ran simulations based on hybrid program image computations \cite{DBLP:conf/hybrid/PlatzerC07} with randomly injected actuator disturbance recorded as ground truth, and measured the monitoring outcome in terms of precision ($\frac{\text{true non-alarms}}{\text{all non-alarms}}$, column ``P'') and recall ($\frac{\text{true non-alarms}}{\text{true non-alarms}+\text{false alarms}}$, column ``R''), see \rref{tab:evaluation}.
Simulation runs executed 50 loop iterations, and the measurements were averaged over 100 runs.
As expected, all simulation steps in lines ``original'' and ``actuator'' with full observability (\ie, sensors work perfectly) are correctly classified by the monitors.
Simulation steps in lines ``sensor'' randomly chose sensor uncertainty i.i.d.\ on each step.

\section{Related Work}\label{sec:relatedwork}

Runtime verification and monitoring for finite state discrete systems has received significant attention (\eg, \cite{DBLP:conf/time/DAngeloSSRFSMM05,DBLP:journals/sttt/HavelundR04,DBLP:journals/sttt/MeredithJGCR12}).
Some approaches monitor continuous-time signals (\eg, \cite{DBLP:conf/cav/DonzeFM13,DBLP:conf/formats/NickovicM07}).
We focus on hybrid systems models of CPS to combine both, and our methods are robust to the crucial effects of sensor uncertainty and actuator disturbance.

Several tools for formal verification of hybrid systems are actively developed (\eg, SpaceEx \cite{DBLP:conf/cav/FrehseGDCRLRGDM11}, dReach \cite{DBLP:conf/tacas/KongGCC15}, and extended NuSMV/MathSat \cite{DBLP:conf/cav/CimattiMT11}). 
Provably correct monitor synthesis, however, crucially relies on the rewriting capabilities and flexibility of combining $\dbox{\cdot}$ and $\didia{\cdot}$ modalities in \dL~\cite{Platzer18,DBLP:journals/jar/Platzer17} and \KeYmaeraX~\cite{DBLP:conf/cade/FultonMQVP15}.

\vspace{-.5\baselineskip}
\paragraph{Combined Offline and Runtime Verification.}

In \cite{DBLP:conf/rv/DesaiDS17,DBLP:journals/corr/abs-1808-07921}, offline model checking is combined with runtime monitoring for robot path planning.
For offline verification, the methods \emph{assume} that motion of the robot stays inside a tube around the planned path; staying inside the tube is monitored at runtime and enforced with fallback control synthesized using the techniques in \cite{DBLP:conf/cdc/HerbertCHBFT17}.
The approach uses STL to model motion primitives for runtime monitoring and learns parameters from example trajectories.
We, in contrast, use hybrid systems models including sensor uncertainty and actuator perturbation and check physical model compliance instead of assuming it when characterizing monitor conditions and fallback requirements with provable safety guarantees.

Reachset conformance testing \cite{DBLP:conf/hybrid/RoehmOWA16} computes reachable sets of hybrid automata at runtime to to falsify simulations or recorded data. 
The crucial benefit of our methods is to \emph{perform expensive computations offline} (provably correct fast reachable set computation online in realtime is hard) and \emph{provably guarantee safety} from offline proofs when the monitor conditions are satisfied at runtime of the monitored system \emph{from sensor measurements} and control decisions that are subject to \emph{actuator disturbance}.

\vspace{-.5\baselineskip}
\paragraph{Monitoring and Sandboxing.}
Owing to their practical significance, numerous monitoring and sandboxing techniques have been proposed for CPSs \cite{DBLP:conf/tacas/LiSSS14,DBLP:journals/fmsd/KonighoferABHKT17,DBLP:journals/acta/BloemCGHHJKK14,DBLP:conf/ecrts/KimVBKLS99,DBLP:conf/hybrid/SankaranarayananF12,DBLP:conf/rv/DokhanchiHF14,DBLP:journals/fmsd/DeshmukhDGJJS17}.
Because these approaches, with the notable exception of ModelPlex \cite{DBLP:journals/fmsd/MitschP16} that this work is based on, do not ship their monitors with correctness proofs, they may not check all conditions that are needed to discover all model violations and so no guarantee can be given that they always reliably engage fallback mitigation when necessary.

Specification mining techniques for LTL can be adapted to monitor for safety violations \cite{DBLP:conf/tacas/LiSSS14} and intervene, \emph{assuming} that the \emph{next environment input} is available to the monitor.
In CPSs, this is feasible only when the next input can be prevented from becoming actuated, see \rref{thm:controlviolationrecoverability} and, as presented in this paper, with means to detect gradual deviation from the model that accumulates to violation over time.

Shields \cite{DBLP:journals/fmsd/KonighoferABHKT17} and robust reactive system synthesis \cite{DBLP:journals/acta/BloemCGHHJKK14} are approaches to detect and correct erroneous control output in discrete models, which however ignore the continuous behavior that is crucial in CPS and explicit in our differential equations.
Monitoring based on discrete models is useful for high-level planning tasks (\eg, waypoint planning), but gives no guarantees about the resulting continuous physical motion and is unable to detect effects related to disturbance or sensing, such as gradual sensor drift.

Languages for modeling runtime monitors based on sensor events \cite{DBLP:conf/ecrts/KimVBKLS99} are purely discrete (e.g., speed lower than threshold), come without correctness guarantees on the mapping between monitor and inputs/outputs and without correctness guarantees on the safety properties and alarms.
In contrast, our methods \emph{provably guarantee} that satisfied monitors at runtime imply system safety (and in particular safety of the resulting physical effects) by relating the observed dynamics to the safe models verified offline.

Robustness estimation methods \cite{DBLP:conf/hybrid/SankaranarayananF12,DBLP:conf/rv/DokhanchiHF14,DBLP:journals/fmsd/DeshmukhDGJJS17} measure the degree to which a monitor given as a signal/metric temporal logic specification is satisfied in order to allow bounded perturbation akin to our actuator disturbance, but cannot detect gradual drift in sensor measurements. 
The methods assume a finite time horizon, compact inputs and outputs, and restrictions on the dynamics (\eg, piecewise constant between sampling points \cite{DBLP:journals/fmsd/DeshmukhDGJJS17}), but do not support the predictive model of continuous dynamics and sensors/actuators that is needed for system safety at runtime, which we handle explicitly.

\vspace{-.5\baselineskip}
\paragraph{Summary.}

In summary, our approach improves over existing runtime monitoring techniques with provably correct monitor conditions, explicit dynamics with sensor uncertainty and actuator disturbance, and shifts expensive computation offline:

\begin{itemize}
\item Other methods \cite{DBLP:conf/tacas/LiSSS14,DBLP:journals/fmsd/KonighoferABHKT17,DBLP:conf/ecrts/KimVBKLS99,DBLP:conf/hybrid/SankaranarayananF12,DBLP:conf/rv/DokhanchiHF14,DBLP:journals/fmsd/DeshmukhDGJJS17,DBLP:journals/acta/BloemCGHHJKK14} start from discrete specifications and leave the \emph{continuous dynamics implicit and unchecked}.
We, in contrast, start from \emph{hybrid systems models with continuous dynamics} and therefore characterize monitors provably correct with respect to the dynamics model. 
Additionally, unlike \cite{DBLP:conf/rv/DesaiDS17,DBLP:journals/corr/abs-1808-07921,DBLP:conf/hybrid/RoehmOWA16,DBLP:journals/fmsd/MitschP16}, we model hybrid systems with sensors and actuators and therefore:
\begin{inparaenum}[(i)]
\item detect when uncertainty accumulates to unsafe deviations, and
\item can distinguish between violations caused by uncertainty in our own system (sensors, actuators) vs. unsafe environments, which makes our approach better suited to dynamic environments.
\end{inparaenum}
\item
Methods that include discrete models (\eg, \cite{DBLP:conf/ecrts/KimVBKLS99,DBLP:conf/memocode/LiDS11,DBLP:conf/fmcad/AlurMT13,DBLP:conf/tacas/LiSSS14,DBLP:conf/rv/DesaiDS17,DBLP:journals/fmsd/KonighoferABHKT17,DBLP:conf/rv/DokhanchiHF14,DBLP:journals/fmsd/DeshmukhDGJJS17,DBLP:journals/acta/BloemCGHHJKK14}) would require additional assumptions on the continuous dynamics between sampling points in order to be sound \cite{DBLP:conf/hybrid/PlatzerC07}, which we handle explicitly like ModelPlex \cite{DBLP:journals/fmsd/MitschP16}, but characterize partial controllability and partial observability and distinguish between deviation caused by mere uncertainty vs.\ actually unsafe environment behavior.
\item
Some methods \cite{DBLP:conf/hybrid/RoehmOWA16} rely on extensive runtime computations (\eg, reachable sets, whose runtime is hard to predict).
We, in contrast, perform expensive computations offline. At runtime, we only evaluate the resulting formula in real arithmetic for concrete sensor values and control decisions, which enables fast enough responses.
\end{itemize}

Crucially, we \emph{prove correctness properties that correctly link satisfied monitors to offline safety proofs}, result in monitors that \emph{warn ahead of time}, and \emph{account for partial controllability and partial observability} so that the monitored system inherits the provable safety guarantees about the model despite the inevitable presence of uncertainty.

\section{Conclusion}
\label{sec:conclusion}

Provable guarantees about the safety of cyber-physical systems at runtime are crucial as systems become increasingly autonomous. 
Formal verification techniques provide an important basis by proving safety of CPS models, which then requires \emph{transferring the guarantees of offline proofs} to system execution.
We answer the key question of \emph{how offline proofs transfer by runtime monitoring}, and, crucially, \emph{what property needs to be runtime-monitored from sensor measurements} to provably imply safety of the monitored system. 
Our techniques significantly extend previous methods to models of practical interest by characterizing monitors in differential dynamic logic and implementing proof tactics that correctly synthesize monitor conditions that are robust to bounded sensor uncertainty and bounded actuator disturbance, which are the two most fundamental sources of \emph{partial observability} in CPS.

\bibliographystyle{splncs03}      
\bibliography{modelplexsandbox}   

\appendix

\section{Proofs}\label{app:proofs}

This section lists all proofs for the theorems, propositions, and corollaries of this paper.

These proofs use the following characteristics of hybrid programs and \dL formulas:
\begin{inparaenum}[(i)]
\item The truth of a formula $\phi$ only depends on its free variables $\freevars{\phi}$;
\item Hybrid programs only change their bound variables $\boundvars{\alpha}$ but not the complement $\scomplement{\boundvars{\alpha}}$;
\item Similar states (that agree on the free variables) have similar transitions according to the transition relation $\iaccess[\cdot]{}$ of hybrid programs.
\end{inparaenum}
We use $\vars{\alpha} = \freevars{\alpha} \cup \boundvars{\alpha}$ to denote the set of all variables of $\alpha$.
Specifically, our proofs use the following versions of
\cite[Lemmas 9, 11, 12 with 17]{DBLP:journals/jar/Platzer17}.
Hybrid programs only change their bound variables $\boundvars{\alpha}$ but not any other $\scomplement{\boundvars{\alpha}}$:
\begin{lemma}[Bound effect lemma \cite{DBLP:journals/jar/Platzer17}] \label{lem:bound}
If $\iaccessible[\alpha]{\I}{\It}$, then $\iget[state]{\I}=\iget[state]{\It}$ on $\scomplement{\boundvars{\alpha}}$.
\end{lemma}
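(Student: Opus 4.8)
The plan is to establish the bound effect lemma by structural induction on the hybrid program $\alpha$, showing in each case that every variable $z \in \scomplement{\boundvars{\alpha}}$ keeps its value along any transition $\iaccessible[\alpha]{\I}{\It}$. For the atomic programs the claim is read off directly from the transition semantics: for a deterministic assignment $\humod{x}{\theta}$ and a nondeterministic assignment $\humod{x}{*}$ we have $\boundvars{\alpha} = \{x\}$ and the successor state agrees with the initial state everywhere except on $x$, hence on all of $\scomplement{\{x\}}$; for a test $\ptest{F}$ we have $\boundvars{\ptest{F}} = \emptyset$ and the transition relation is contained in the identity, so nothing changes; for a differential equation $\{\D{x} = \theta ~\&~ F\}$, $\boundvars{\alpha}$ consists of $x$ together with its differential symbol $\D{x}$, and the continuous semantics forces the flow to keep every variable and differential symbol outside the ODE literally constant, so every $z \in \scomplement{\boundvars{\alpha}}$ is unchanged. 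These are exactly the arguments underlying \cite[Lemmas 9 and 11]{DBLP:journals/jar/Platzer17}.

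For the inductive step I would treat the three program combinators. For sequential composition $\alpha;\beta$, where $\boundvars{\alpha;\beta} = \boundvars{\alpha} \cup \boundvars{\beta}$, any transition $\iaccessible[\alpha;\beta]{\I}{\It}$ factors through an intermediate state $\stintermediate$ with $\iaccessible[\alpha]{\I}{\Ii}$ and $\iaccessible[\beta]{\Ii}{\It}$; for $z$ outside $\boundvars{\alpha} \cup \boundvars{\beta}$ the induction hypothesis applied first to $\alpha$ and then to $\beta$ yields $\stold(z) = \stintermediate(z) = \stnew(z)$. For nondeterministic choice $\pchoice{\alpha}{\beta}$, again with bound variables $\boundvars{\alpha} \cup \boundvars{\beta}$, every transition is already an $\alpha$- or a $\beta$-transition, so the claim follows directly from the respective induction hypothesis. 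For repetition $\prepeat{\alpha}$, where $\boundvars{\prepeat{\alpha}} = \boundvars{\alpha}$, every transition of $\prepeat{\alpha}$ is a finite chain of $\alpha$-transitions, and a short secondary induction on the number of iterations — using the induction hypothesis for $\alpha$ at each link — shows that every $z \notin \boundvars{\alpha}$ is preserved from the first state of the chain to the last. This matches \cite[Lemma 12 with 17]{DBLP:journals/jar/Platzer17}.

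The only step that needs genuine care is the differential equation case, which has to be discharged from the precise definition of the continuous semantics: a flow of duration $r$ is a solution on $[0,r]$ of $\D{x} = \theta$ that stays within the evolution domain $F$ and agrees with the initial state on all variables and all differential symbols it does not name, so that $z$ being neither $x$ nor $\D{x}$ genuinely pins its value. The combinator cases, by contrast, are routine bookkeeping once the induction hypothesis is in hand. Since the statement is precisely \cite[Lemmas 9, 11, 12 with 17]{DBLP:journals/jar/Platzer17}, the paper simply cites it; the induction sketched above is the argument on which that citation rests.
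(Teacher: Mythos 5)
Your proof is correct, but note that the paper does not prove this lemma at all: it is stated purely as a citation of \cite[Lemmas 9, 11, 12 with 17]{DBLP:journals/jar/Platzer17}, so there is no in-paper argument to compare against. The structural induction you sketch is the standard argument behind that citation, and you correctly isolate the one non-routine case --- the differential equation, where the continuous semantics must be invoked to pin down every variable and differential symbol not occurring bound in the ODE --- while the combinator cases follow by factoring transitions through intermediate states (with the secondary induction on iteration count for $\prepeat{\alpha}$, including the zero-iteration identity case).
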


\noindent The truth of a formula $\phi$ only depends on its free variables $\freevars{\phi}$:
\begin{lemma}[Coincidence lemma \cite{DBLP:journals/jar/Platzer17}]
\label{lem:coincidence}
If $\iget[state]{\I}=\iget[state]{\tI}$ on $\freevars{\phi}$ then $\imodels{\I}{\phi}$ iff $\imodels{\tI}{\phi}$.
\end{lemma}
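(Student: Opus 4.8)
The plan is to prove \rref{lem:coincidence} by a simultaneous structural induction on the formula $\phi$ together with two auxiliary coincidence statements that feed into it: one for \dL terms (the value of a term $\theta$ depends only on $\freevars{\theta}$) and one for hybrid programs (if $\iget[state]{\I}=\iget[state]{\tI}$ on $\freevars{\alpha}$ and $\iaccessible[\alpha]{\I}{\It}$, then there is a state $\tilde{\mu}$ with $\iaccessible[\alpha]{\tI}{\Iw[\tilde{\mu}]}$ and $\iget[state]{\It}=\tilde{\mu}$ on $\vars{\alpha}$). These three statements must be proved in one combined induction, because formulas contain programs through the modalities $\dibox{\cdot},\didia{\cdot}$ while programs contain formulas through tests $\ptest{\cdot}$ and evolution domain constraints.

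First I would dispatch the term statement by structural induction on $\theta$ (variables are immediate since $x\in\freevars{x}$; the operators $+,-,\cdot,/$ follow from the induction hypothesis on subterms). Then, for $\phi$: atomic comparisons $\theta_1\sim\theta_2$ reduce to term coincidence since $\freevars{\theta_i}\subseteq\freevars{\phi}$; the propositional cases $\lnot,\land,\lor,\limply$ are immediate because free variables only grow under these connectives; the quantifier cases $\forall x\,\psi$ and $\exists x\,\psi$ use $\freevars{\forall x\,\psi}=\freevars{\psi}\setminus\{x\}$, so for any value $r$ the modified states $\iget[state]{\I}_x^r$ and $\iget[state]{\tI}_x^r$ still agree on $\freevars{\psi}$ and the induction hypothesis applies. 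For the modalities $\dibox{\alpha}\psi$ and $\didia{\alpha}\psi$ I would use the program statement together with the induction hypothesis on $\psi$: since $\freevars{\dibox{\alpha}\psi}=\freevars{\alpha}\cup(\freevars{\psi}\setminus\boundvars{\alpha})$, any $\alpha$-run from $\iget[state]{\I}$ is mirrored by one from $\iget[state]{\tI}$ whose endpoint agrees with the original endpoint on $\vars{\alpha}$ and, by the bound effect \rref{lem:bound}, on $\scomplement{\boundvars{\alpha}}$ as well, which together cover $\freevars{\psi}$. The program statement itself is an induction on the structure of $\alpha$: assignments use term coincidence for the assigned value, tests use formula coincidence, and sequential composition, nondeterministic choice, and repetition thread the hypothesis through intermediate states (with an inner induction on the iteration count for repetition).

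The main obstacle is the differential equation case, where $\alpha$ is an ODE system $\D{x_i}=\theta_i$ with evolution domain $F$. Here I would show that the solutions starting from $\iget[state]{\I}$ and $\iget[state]{\tI}$, restricted to the ODE variables, solve the same initial value problem, because each right-hand side $\theta_i$ depends only on its free variables (term coincidence applied pointwise along the flow), hence they coincide on those variables for the whole common duration by existence and uniqueness of solutions; the domain constraint $F$ is maintained by both by formula coincidence at every time instant; and the remaining variables stay constant in both. This continuous step is the delicate part and is precisely where the full coincidence machinery is needed rather than a naive state-by-state argument. Since the result is standard, in the paper it suffices to cite \cite{DBLP:journals/jar/Platzer17}; the argument above reconstructs that proof.
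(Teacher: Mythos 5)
The paper does not actually prove this lemma: it is imported verbatim, with a citation, as one of Lemmas 9, 11, 12 (with 17) of the cited reference, so there is no in-paper argument to compare against. Your reconstruction---a simultaneous structural induction over terms, formulas, and programs, with the ODE case settled by term/formula coincidence along the flow plus uniqueness of solutions---is indeed the strategy of the cited source, and the term, atomic, propositional, and quantifier cases are right.

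However, your modality case contains a step that would fail as written. You compute $\freevars{\dibox{\alpha}\psi}=\freevars{\alpha}\cup(\freevars{\psi}\setminus\boundvars{\alpha})$ and claim the mirrored run's endpoint agrees with the original endpoint on $\vars{\alpha}$ and on $\scomplement{\boundvars{\alpha}}$---but $\vars{\alpha}\cup\scomplement{\boundvars{\alpha}}=\allvars$, which already signals that the bookkeeping is off. Concretely, for $\alpha\equiv\pchoice{\humod{x}{1}}{\ptest{1\geq 0}}$ and $\psi\equiv x{=}1$ your formula gives $\freevars{\dibox{\alpha}\psi}=\emptyset$, yet $\dibox{\alpha}\psi$ is true when initially $x=1$ and false when initially $x=0$; likewise, a run through the test branch from $x=0$ has no counterpart from $x=5$ whose endpoint agrees on $x\in\vars{\alpha}$. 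The repair, which is exactly what the cited reference does, is to use \emph{must-bound} variables $\text{MBV}(\alpha)$ (variables written on \emph{every} execution path, so $\text{MBV}(\alpha)\subseteq\boundvars{\alpha}$, and the choice above has $\text{MBV}=\emptyset$) in place of $\boundvars{\alpha}$: define $\freevars{\dibox{\alpha}\psi}=\freevars{\alpha}\cup(\freevars{\psi}\setminus\text{MBV}(\alpha))$, and strengthen the program coincidence statement so that the mirrored endpoints agree on $V\cup\text{MBV}(\alpha)$, where $V\supseteq\freevars{\alpha}$ is the \emph{initial} agreement set---not on $\vars{\alpha}$. Then $V\cup\text{MBV}(\alpha)\supseteq\freevars{\psi}$ and the induction hypothesis for $\psi$ applies. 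With that adjustment (and the corresponding $\text{MBV}$ clauses in the program induction, where the bound effect \rref{lem:bound} is used), your induction goes through.
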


\noindent Similar states (that agree on the free variables) have similar transitions according to the transition relation of hybrid programs:
\begin{lemma}[Coincidence lemma \cite{DBLP:journals/jar/Platzer17}]
\label{lem:coincidence-HP}
If $\iget[state]{\I}=\iget[state]{\tI}$ on $\hpvars \supseteq \freevars{\alpha}$ and $\iaccessible[\alpha]{\I}{\It}$, then there is a $\iget[state]{\tIt}$ such that $\iaccessible[\alpha]{\tI}{\tIt}$ and $\iget[state]{\It}=\iget[state]{\tIt}$ on $\hpvars$.
\end{lemma}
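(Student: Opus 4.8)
The plan is to prove \rref{lem:coincidence-HP} by structural induction on the hybrid program $\alpha$, using the already-available coincidence property for formulas (\rref{lem:coincidence}), together with the analogous fact that the value of a term depends only on its free variables, in the base cases and for tests and evolution domains. Throughout, fix a set $V \supseteq \freevars{\alpha}$, assume $\iget[state]{\I} = \iget[state]{\tI}$ on $V$ and $\iaccessible[\alpha]{\I}{\It}$, and construct the claimed $\iget[state]{\tIt}$ with $\iaccessible[\alpha]{\tI}{\tIt}$ and $\iget[state]{\It} = \iget[state]{\tIt}$ on $V$. The uniform recipe for the atomic cases is to \emph{replay} from $\iget[state]{\tI}$ the same transition $\alpha$ took from $\iget[state]{\I}$: the real value written into a bound variable is either transferred across by term coincidence (that term's free variables lie in $\freevars{\alpha} \subseteq V$, where the two states agree) or, for a nondeterministic assignment, is literally the same witnessing real.

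\textbf{Atomic cases.} For $\alpha \equiv (\humod{x}{\theta})$, take $\iget[state]{\tIt}$ to be $\iget[state]{\tI}$ with $x$ overwritten by the value of $\theta$ in $\iget[state]{\tI}$; by term coincidence this equals the value of $\theta$ in $\iget[state]{\I}$, namely $\iget[state]{\It}(x)$, and the two states agree on $V \setminus \{x\}$ by hypothesis, so $\iget[state]{\It} = \iget[state]{\tIt}$ on $V$. For $\alpha \equiv (\humod{x}{*})$, write the value $\iget[state]{\It}(x)$ into $x$ of $\iget[state]{\tI}$. For $\alpha \equiv (\ptest{\psi})$ we have $\iget[state]{\It} = \iget[state]{\I}$ and $\imodels{\I}{\psi}$, hence $\imodels{\tI}{\psi}$ by \rref{lem:coincidence} since $\freevars{\psi} \subseteq V$, so $\iget[state]{\tIt} := \iget[state]{\tI}$ works. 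For a differential equation $\D{x} = \theta ~\&~ \psi$, take a solution $\varphi$ of some duration $r$ witnessing $\iaccessible[\alpha]{\I}{\It}$ and build a corresponding solution $\tilde\varphi$ from $\iget[state]{\tI}$ that agrees with $\varphi$ on $V$ and on the equation's variables at every time; term coincidence applied to the right-hand side $\theta$ and \rref{lem:coincidence} applied pointwise to the evolution domain $\psi$ show $\tilde\varphi$ is again a solution that stays in $\psi$, so $\iget[state]{\tIt} := \tilde\varphi(r)$ is as desired.

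\textbf{Compound cases.} Nondeterministic choice $\pchoice{\beta}{\gamma}$ is immediate: apply the induction hypothesis to whichever branch produced the run (its free variables lie in $\freevars{\alpha} \subseteq V$). For sequential composition $\beta;\gamma$ with intermediate state $\iget[state]{\Ii}$ (so $\iaccessible[\beta]{\I}{\Ii}$ and $\iaccessible[\gamma]{\Ii}{\It}$), apply the induction hypothesis to $\beta$ to obtain $\iget[state]{\tIi}$ with $\iaccessible[\beta]{\tI}{\tIi}$ and $\iget[state]{\Ii} = \iget[state]{\tIi}$ on $V$, then apply it to $\gamma$ (using the strengthening of the induction hypothesis discussed below to cover $\freevars{\gamma}$) to reach $\iget[state]{\tIt}$, and glue the two transferred runs into a $(\beta;\gamma)$-run from $\iget[state]{\tI}$ to $\iget[state]{\tIt}$. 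Nondeterministic repetition $\prepeat{\beta}$ follows by an inner induction on the number of iterations, iterating the sequential-composition step, with the empty run giving $\iget[state]{\tIt} := \iget[state]{\tI}$ in the base case.

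\textbf{Main obstacle.} The one delicate point is the free/bound-variable bookkeeping in sequential composition (and hence in loops): $\freevars{\beta;\gamma}$ is in general a \emph{proper} subset of $\freevars{\beta} \cup \freevars{\gamma}$, since a variable that is free in $\gamma$ but written on every branch of $\beta$ need not be free in $\beta;\gamma$ and so need not lie in $V$. Thus having $\iget[state]{\Ii} = \iget[state]{\tIi}$ only on $V$ does not license a direct appeal to the induction hypothesis for $\gamma$. The clean fix is to strengthen the inductive claim so that it additionally asserts $\iget[state]{\It} = \iget[state]{\tIt}$ on the must-bound variables of $\alpha$; then the replayed run of $\beta$ can be chosen to match $\iget[state]{\Ii}$ also on those must-bound variables, which cover exactly the part of $\freevars{\gamma}$ missing from $V$, while \rref{lem:bound} pins down the variables $\beta$ never touches. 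Making this strengthening precise and carrying out the solution transfer in the differential-equation case is the substantive work; everything else is a routine replay of transitions combined with \rref{lem:coincidence}.
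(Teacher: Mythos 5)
The paper does not actually prove this lemma---it is imported verbatim from the cited reference \cite{DBLP:journals/jar/Platzer17} (where it appears as the coincidence lemma for hybrid programs, stated there with the conclusion strengthened to agreement on $V$ together with the must-bound variables)---so there is no in-paper proof to compare against. Your structural induction is precisely the standard argument from that source, and you correctly isolate the one genuine subtlety, namely that the inductive claim must be strengthened to assert agreement of the final states on the must-bound variables so that the sequential-composition (and hence loop) case goes through; the proof is correct.
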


\noindent Logical state relation $\ddiamond{\alpha}{\Upsilon^+}$ from \cite[Def. 3]{DBLP:journals/fmsd/MitschP16} captures runs of hybrid programs $\alpha$ \cite{DBLP:journals/fmsd/MitschP16}.

\begin{lemma}[Logical state relation \cite{DBLP:journals/fmsd/MitschP16}]\label{lem:logicalrelation}
  Let \(\allvars\) be the set of all variables.
  Two states $\stold,\stnew$ that agree on $\allvars \setminus \boundvars{\alpha}$, \ie, $\stold(z)=\stnew(z)$ for all $z\in\allvars\setminus \boundvars{\alpha}$,
  satisfy
  \(\iaccessible[\alpha]{\I}{\It}\)
  iff
  \(\irelmodels{\I}{\It}{\ddiamond{\alpha}{\Upsilon^+}}\).
\end{lemma}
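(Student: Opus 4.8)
This is essentially \cite[Def.~3]{DBLP:journals/fmsd/MitschP16} and the characterization accompanying it, so the plan is to reconstruct that argument here by unfolding two definitions and invoking the coincidence and bound-effect lemmas. Recall that $\Upsilon^+$ abbreviates $\bigwedge_{x\in\boundvars{\alpha}} x^+{=}x$, that $\irelmodels{\I}{\It}{\phi}$ means $\phi$ holds in the state obtained from $\stold$ by reassigning each primed variable $x^+$ to $\stnew(x)$ (\rref{def:monitorevaluation}), and --- crucially --- that the primed copies $x^+$ are \emph{fresh}, \ie $x^+\notin\vars{\alpha}$, so that $\alpha$ neither reads nor writes them. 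I would record this freshness assumption at the outset, since the whole argument hinges on it.

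For the direction from $\iaccessible[\alpha]{\I}{\It}$ to $\irelmodels{\I}{\It}{\ddiamond{\alpha}{\Upsilon^+}}$, I would first unfold \rref{def:monitorevaluation}, turning the goal into $\imodels{\Iw[\stold_{x^+}^{\stnew(x)}]}{\ddiamond{\alpha}{\Upsilon^+}}$. Since $\stold_{x^+}^{\stnew(x)}$ agrees with $\stold$ on $\vars{\alpha}$ (they differ only on the fresh primed variables), the coincidence lemma for hybrid programs (\rref{lem:coincidence-HP}) transports the run $\iaccessible[\alpha]{\I}{\It}$ to a run $\iaccessible[\alpha]{\Iw[\stold_{x^+}^{\stnew(x)}]}{\Iw[\mu]}$ whose endpoint state $\mu$ agrees with $\stnew$ on $\vars{\alpha}$; hence $\mu(x)=\stnew(x)$ for every $x\in\boundvars{\alpha}$. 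By the bound effect lemma (\rref{lem:bound}) this run leaves the fresh $x^+$ untouched, so $\mu(x^+)=\stnew(x)$ as well; therefore $\imodels{\Iw[\mu]}{\Upsilon^+}$, which witnesses the diamond formula.

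For the converse, I would assume $\irelmodels{\I}{\It}{\ddiamond{\alpha}{\Upsilon^+}}$, \ie $\imodels{\Iw[\stold_{x^+}^{\stnew(x)}]}{\ddiamond{\alpha}{\Upsilon^+}}$, and unfold the Kripke semantics of the diamond to obtain a state $\mu$ with $\iaccessible[\alpha]{\Iw[\stold_{x^+}^{\stnew(x)}]}{\Iw[\mu]}$ and $\mu(x^+)=\mu(x)$ for all $x\in\boundvars{\alpha}$. Again \rref{lem:bound} gives $\mu(x^+)=\stnew(x)$, hence $\mu(x)=\stnew(x)$ on $\boundvars{\alpha}$; and \rref{lem:coincidence-HP} transports this run back to a run $\iaccessible[\alpha]{\I}{\Iw[\nu]}$ with $\nu$ agreeing with $\mu$ on $\vars{\alpha}$, in particular $\nu=\stnew$ on $\boundvars{\alpha}$. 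Finally \rref{lem:bound} gives $\nu=\stold$ on $\scomplement{\boundvars{\alpha}}$, and the standing hypothesis $\stold=\stnew$ on $\scomplement{\boundvars{\alpha}}$ then forces $\nu=\stnew$ on all of $\allvars$, so $\iaccessible[\alpha]{\I}{\It}$.

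The only real subtlety --- and the step I would be most careful with --- is the bookkeeping around the fresh primed variables: one must ensure $x^+\notin\vars{\alpha}$ so that neither application of the coincidence lemma disturbs them and so that $\mu(x^+)=\stnew(x)$ is maintained throughout. Beyond that, the proof is purely by unfolding the transition satisfaction relation and the Kripke semantics of $\ddiamond{\cdot}{}$.
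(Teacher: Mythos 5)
Your proof is correct and matches the intended argument: the paper itself states \rref{lem:logicalrelation} without proof, importing it from the ModelPlex journal paper, whose proof is exactly this unfolding of \rref{def:monitorevaluation} and the Kripke semantics of $\ddiamond{\alpha}{}$ combined with the coincidence lemma (\rref{lem:coincidence-HP}) and the bound effect lemma (\rref{lem:bound}), with the freshness condition $x^+\notin\vars{\alpha}$ doing the work you identify. You also correctly isolate the one place where the hypothesis $\stold=\stnew$ on $\allvars\setminus\boundvars{\alpha}$ is actually needed, namely to upgrade $\nu=\stnew$ on $\boundvars{\alpha}$ to $\nu=\stnew$ in the converse direction.
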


\printproofs

\section{Derive Monitor Condition in Arithmetical Form}
\label{app:arithmeticalform}

This section gives a simple example illustrating how an (executable) arithmetical formula can be obtained from a monitor characterization \cite{DBLP:journals/fmsd/MitschP16}.

The monitor condition $\didia{\pchoice{\humod{a}{a+1}}{\humod{b}{*};\ptest{b\leq 3}}}(a^+=a\land b^+=b)$ for program $\pchoice{\humod{a}{a+1}}{\humod{b}{*};\ptest{b\leq 3}}$ is turned into arithmetical form with a \dL proof \cite{DBLP:journals/jar/Platzer17}.

\vspace{\baselineskip}
\begin{sequentdeduction}[array]
\linfer[existsr]
{\lsequent{}{(a^+=a+1 \land b^+=b) \lor (a^+=a \land b^+\leq 3)}}
{
\linfer[randomd+testd]
{\lsequent{}{(a^+=a+1 \land b^+=b) \lor \exists b{\leq}3\,(a^+=a \land b^+=b)}}
{
\linfer[assignd]
{\lsequent{}{(a^+=a+1 \land b^+=b) \lor \didia{\humod{b}{*};\ptest{b \leq 3}}(a^+=a \land b^+=b)}}
{
\linfer[choiced]
{\lsequent{}{\didia{\humod{a}{a+1}}(a^+=a \land b^+=b) \lor \didia{\humod{b}{*};\ptest{b \leq 3}}(a^+=a \land b^+=b)}}
{\lsequent{}{\didia{\pchoice{\humod{a}{a+1}}{\humod{b}{*};\ptest{b\leq 3}}}(a^+=a\land b^+=b)}}
}
}
}
\end{sequentdeduction}

The resulting monitor formula $a^+=a+1\land b^+=b \lor a^+=a \land b^+ \leq 3$ of the sequent proof means that either the output $a^+$ is the input $a$ incremented by $1$ while $b$ stayed unchanged, or that the output $b^+\leq 3$ while $a$ stayed unchanged.
The monitor formula is satisfied over states $\stold$ and $\stnew$ with $\stold(a)=2, \stold(b)=3$ and $\stnew(a)=3, \stnew(b)=3$, \ie $\irelmodels{\I}{\It}{a^+=a+1\land b^+=b \lor a^+=a \land b^+ \leq 3}$; it is violated on $\stnew(a)=2, \stnew(b)=4$.

\section{Measurement Modeling Patterns}

In safety proofs, a useful modeling pattern for representing measurements takes measurements before the controller $\measure{y}{\Delta};~\senseCtrl;~\plant{\ctrlOut}$, \eg, as used for modeling speed and position sensors of ground robots in \cite{DBLP:journals/ijrr/MitschGVP17}.
That way, the loop invariants $\inv$ in a safety proof are less cluttered with information about the measurements, which are of temporary nature for making a control decision.
For the purpose of deriving monitoring conditions as introduced in \rref{sec:sensoruncertainty}, however, it is beneficial to have access to a pair of measurements in the loop body, so the program shape becomes $\ctrl{\hat{y}};~\plant{\ctrlOut};~\measure{y}{\Delta}$.
In order to avoid duplicate safety analyses, \rref{lem:measurementrefactoring} provides a way of transferring invariant properties between these two shapes.

\begin{restatable}[Measurement rollover]{lemma}{lem:measurementrefactoring}
\label{lem:measurementrefactoring}
Assume formula \[\init\limply\dibox{\prepeat{(\measure{y}{\Delta};~\senseCtrl;~\plant{\ctrlOut})}}\safe\] is proven with invariant $\inv$, \ie, $\init \limply \inv$, $\inv \limply \dibox{\measure{y}{\Delta};~\senseCtrl;~\plant{\ctrlOut}}\inv$, and $\inv \limply \safe$ are valid.
The measurement $\hat{y}$ is bound only in $\measure{y}{\Delta}$ but nowhere else and assume $\hat{y} \not\in \freevars{\inv}$.
Then, measurement after $\plant{\ctrlOut}$ is equivalent to measurement before $\senseCtrl$:
\begin{align*}
&\underbrace{\inv \limply \dbox{(\measure{y}{\Delta};~\senseCtrl;~\plant{\ctrlOut}}\inv}_{F} \\
\mequiv~ &
\underbrace{\inv \land \nsimilar{\hat{y}}{y}{\Delta} \limply \dbox{(\senseCtrl;~\plant{\ctrlOut};~\measure{y}{\Delta})}\inv}_{G}
\end{align*}
\end{restatable}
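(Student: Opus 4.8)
The plan is to establish the equi-validity $F\mequiv G$, i.e.\ that $F$ is valid iff $G$ is valid, working directly from the \dL semantics (an equivalence of the two formulas at every state cannot hold, since the programs and preconditions differ). Three facts drive the argument: $\boundvars{\measure{y}{\Delta}}=\{\hat{y}\}$, so a measurement step changes only $\hat{y}$; $\hat{y}\not\in\freevars{\inv}$ and $\hat{y}$ is bound nowhere but in $\measure{y}{\Delta}$, so an extra measurement step cannot affect $\inv$; and $\measure{y}{\Delta}$ is always enabled, since (with $\Delta\ge 0$) the run that assigns $\hat{y}$ the current value of $y$ passes its range test. Both $F$ and $G$ have the shape (precondition)$\limply\dbox{\cdot}\inv$, hence are valid exactly when every run of the program from a state satisfying the precondition ends in a state satisfying $\inv$; I would treat the two directions separately.

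For the direction ``$F$ valid implies $G$ valid'', take a state $\omega$ with $\omega\models\inv\land\nsimilar{\hat{y}}{y}{\Delta}$ and a run $\iaccessible[\senseCtrl;\plant{\ctrlOut};\measure{y}{\Delta}]{\Iw[\omega]}{\Iw[\tau]}$; I must show $\tau\models\inv$. Split the run at the state $\mu$ after $\senseCtrl$ and the state $\nu$ after $\plant{\ctrlOut}$. Since $\omega\models\nsimilar{\hat{y}}{y}{\Delta}$, the value $\omega(\hat{y})$ is itself a legal measurement at $\omega$, so $\iaccessible[\measure{y}{\Delta}]{\Iw[\omega]}{\Iw[\omega]}$; prepending this self-loop yields $\iaccessible[\measure{y}{\Delta};\senseCtrl;\plant{\ctrlOut}]{\Iw[\omega]}{\Iw[\nu]}$, so validity of $F$ together with $\omega\models\inv$ gives $\nu\models\inv$. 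Because $\nu$ and $\tau$ agree off $\{\hat{y}\}$ by \rref{lem:bound} and $\hat{y}\not\in\freevars{\inv}$, \rref{lem:coincidence} gives $\tau\models\inv$.

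For the direction ``$G$ valid implies $F$ valid'', take $\omega\models\inv$ and a run $\iaccessible[\measure{y}{\Delta};\senseCtrl;\plant{\ctrlOut}]{\Iw[\omega]}{\Iw[\tau]}$, and split it at the state $\omega_1$ after $\measure{y}{\Delta}$ and the state $\mu$ after $\senseCtrl$. The measurement step makes its range test true, so $\omega_1\models\nsimilar{\hat{y}}{y}{\Delta}$, and $\omega_1$ agrees with $\omega$ off $\{\hat{y}\}$ so $\omega_1\models\inv$ by \rref{lem:coincidence}. Since $\measure{y}{\Delta}$ is always enabled, append a fresh measurement step $\iaccessible[\measure{y}{\Delta}]{\Iw[\tau]}{\Iw[\tau_1]}$, so $\iaccessible[\senseCtrl;\plant{\ctrlOut};\measure{y}{\Delta}]{\Iw[\omega_1]}{\Iw[\tau_1]}$; validity of $G$ with $\omega_1\models\inv\land\nsimilar{\hat{y}}{y}{\Delta}$ gives $\tau_1\models\inv$, and since $\tau$ and $\tau_1$ agree off $\{\hat{y}\}$ by \rref{lem:bound}, \rref{lem:coincidence} gives $\tau\models\inv$.

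An alternative route, if a calculus-level argument is preferred, is to push $\dbox{\measure{y}{\Delta}}$ inward: the random-assignment and test axioms turn $\dbox{\measure{y}{\Delta}}P$ into $\marginProp{\hat{y}}{y}{\Delta}{P}$, and since $\hat{y}\not\in\freevars{\inv}$ and the range test is satisfiable in $\hat{y}$, the wrapper $\dbox{\measure{y}{\Delta}}\inv$ simplifies to $\inv$; this rewrites $F$ into $\marginProp{\hat{y}}{y}{\Delta}{(\inv\limply\dbox{\senseCtrl;\plant{\ctrlOut}}\inv)}$ and $G$ into $\inv\land\nsimilar{\hat{y}}{y}{\Delta}\limply\dbox{\senseCtrl;\plant{\ctrlOut}}\inv$, and the former is the universal closure over $\hat{y}$ of the latter, so the two are equi-valid. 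Either way, the one point needing care is the left/right mismatch of the two program shapes: in the first direction the hypothesis $\nsimilar{\hat{y}}{y}{\Delta}$ is exactly what lets the relocated measurement act as the identity at the front, and in the second direction enabledness of $\measure{y}{\Delta}$ is what lets it be appended at the back, while throughout one must use $\boundvars{\measure{y}{\Delta}}=\{\hat{y}\}$ with $\hat{y}\not\in\freevars{\inv}$ to see that the extra measurement step never disturbs $\inv$. This is careful bookkeeping rather than a substantive obstacle.
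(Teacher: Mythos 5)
Your proof is correct, but it reaches the result by a genuinely different route than the paper. The paper stays entirely inside the \dL proof calculus: for the forward direction it unpacks the trailing $\dbox{\measure{y}{\Delta}}\inv$ with the test and random-assignment axioms into $\forall \nsimilar{\hat{y}}{y}{\Delta}\,\inv$, removes that quantifier vacuously using $\hat{y}\not\in\freevars{\inv}$, and then reintroduces the measurement at the front of the program by instantiating the leading universal of $F$ at the current value of $\hat{y}$; for the converse it appends an exact ghost measurement $\humod{\hat{y}^+}{y}$ and relaxes it to a noisy one via an approximation axiom together with bound renaming. Your argument is the semantic counterpart of exactly these moves: the identity run of $\measure{y}{\Delta}$, enabled precisely by $G$'s hypothesis $\nsimilar{\hat{y}}{y}{\Delta}$, plays the role of the universal instantiation, and appending an always-enabled fresh measurement followed by \rref{lem:bound} and \rref{lem:coincidence} plays the role of the ghost-plus-approximation step; your closing calculus-level sketch essentially recovers the paper's derivation. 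Your version is more elementary and makes explicit something the paper leaves implicit, namely that $\mequiv$ must be read as equi-validity rather than state-by-state equivalence, since the pointwise implication from $G$ to $F$ can fail at states where $\hat{y}$ lies outside the $\Delta$-ball around $y$ (there $G$ is vacuously true while $F$ still quantifies over all admissible measurements). What the paper's syntactic version buys in exchange is a derivation replayable in \KeYmaeraX, in keeping with its everything-by-proof methodology; the one side condition you rightly flag but the paper uses only tacitly is $\Delta\geq 0$ for enabledness of the appended measurement.
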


\irlabel{approximate|$[:*] \limply [:=]$}
\irlabel{ghost|$[:=]~\text{ghost}$}
\irlabel{BR|$\text{BR}$}
\irlabel{imply1|$P\limply (Q\limply P)$}
\irlabel{allvacuous|$\forall x P \lbisubjunct P ~(x{\not\in}P)$}
\begin{proof}
The proof uses the axioms $\irref{randomb}$ ($\dbox{\humod{x}{*}}P \lbisubjunct \forall x\,P$) and $\irref{testb}$ ($\dbox{\ptest{Q}}P \lbisubjunct (P\limply Q)$) \cite{DBLP:journals/jar/Platzer17} together with propositional reasoning and bound renaming to split off and introduce measurements:
\begin{description}
\item[$\limply$] First, the test $\irref{testb}$ and random assignment $\irref{randomb}$ split off the measurement after plant $\plant{\ctrlOut}$, then $\irref{testb}$ and $\irref{randomb}$ are applied in the inverse direction to introduce the measurement before $\senseCtrl$:
\begin{sequentdeduction}[array]
\linfer[id]
{\lclose}
{
\linfer[testb+randomb+composeb+weakenl]
{\lsequent{F}{\inv \limply \dbox{\measure{y}{\Delta};~\senseCtrl;~\plant{\ctrlOut}}\inv}}
{
\linfer[allvacuous]
{\lsequent{F}{\inv \land \nsimilar{\hat{y}}{y}{\Delta} \limply \dbox{\senseCtrl;~\plant{\ctrlOut}}\inv}}
{
\linfer[composeb+randomb+testb]
{\lsequent{F}{\inv \land \nsimilar{\hat{y}}{y}{\Delta} \limply \dbox{\senseCtrl;~\plant{\ctrlOut}}(\forall \nsimilar{\hat{y}}{y}{\Delta}\,\inv)}}
{
\lsequent{F}{
\inv \land \nsimilar{\hat{y}}{y}{\Delta} \limply \dbox{\senseCtrl;~\plant{\ctrlOut};~\measure{y}{\Delta}}\inv}
}
}
}
}
\end{sequentdeduction}
\item[$\leftarrow$] The converse direction introduces an exact measurement $\humod{\hat{y}^+}{y}$ with a fresh ghost variable $\hat{y}^+$ first and then introduces measurement error using \irref{approximate}, which is defined as $\dbox{\underbrace{\prandom{\hat{y}};\ptest{y-\Delta \leq \hat{y} \leq y+\Delta}}_{\measure{y}{\Delta}}}P \limply \dbox{\humod{\hat{y}}{y}}P$:
\begin{sequentdeduction}[array]
\linfer[id]
{\lclose}
{
\linfer[BR]
{\lsequent{G}{\inv \land \nsimilar{\hat{y}}{y}{\Delta} \limply \dbox{\senseCtrl;~\plant{\ctrlOut};~\measure{y}{\Delta}}\inv}}
{
\linfer[approximate]
{\lsequent{G}{\inv \land \nsimilar{\hat{y}}{y}{\Delta} \limply \dbox{\senseCtrl;~\plant{\ctrlOut};~\measure{y}{\Delta}}\inv}}
{
\linfer[ghost+composeb]
{\lsequent{G}{\inv \land \nsimilar{\hat{y}}{y}{\Delta} \limply \dbox{\senseCtrl;~\plant{\ctrlOut};~\humod{\hat{y}^+}{y}}\inv}}
{
\linfer[randomb+allr+testb]
{\lsequent{G}{\inv \land \nsimilar{\hat{y}}{y}{\Delta} \limply \dbox{\senseCtrl;~\plant{\ctrlOut}}\inv}}
{
\lsequent{G}{\inv \limply \dbox{\measure{y}{\Delta};~\senseCtrl;~\plant{\ctrlOut}}\inv}
}
}
}
}
}
\end{sequentdeduction}
\end{description}
\end{proof}

\rref{lem:measurementrefactoring} allows us to switch between the measurement modeling patterns for safety proofs and model monitors.

\section{Proof-Guided Model Monitors for Nonlinear Dynamics}

Existing techniques for model monitor synthesis \cite{DBLP:journals/fmsd/MitschP16} require symbolic closed-form polynomial solutions to characterize differential equations.
Hybrid systems with nonlinear dynamics, however, do not necessarily have symbolic closed-form solutions, or their solutions are not expressible in first-order real arithmetic.
In such cases, hybrid systems safety proofs employ a more general technique based on invariant properties of differential equations \cite{DBLP:conf/lics/PlatzerT18} to abstract away the concrete trajectories of nonlinear differential equations by describing \emph{invariant regions} that confine the trajectories.
The challenge with such overapproximations is that they are \emph{not enough} to conclude the existence of a model run as required for correct monitor synthesis \cite{DBLP:journals/fmsd/MitschP16}. 

The crucial insight that we gain from a successful safety proof for monitoring and model validation is that the potentially complicated exact behavior is not relevant in detail for safety, but that it was enough to stay inside a safety-relevant region.
We exploit this observation to synthesize monitoring conditions that only check these safety-critical invariant regions and thereby allow for a wider safe variety of actual system behavior instead of insisting on the specific modeled behavior.

In this section, we discuss techniques to translate models to make sure that their differential invariants are expressed explicitly and their specific dynamics are abstracted to any behavior inside the invariant regions. 
That way, the invariant conditions are picked up during monitor synthesis in a provably correct way and become represented in the monitoring conditions that are derived from these models.
As a main insight, we exploit the fact that the desired monitoring conditions must concisely capture paths to safety violation. 
Specifically, a model monitor for a model $\alpha$ of the form $\ctrlPrg;\plantPrg$ uses a model $\tilde{\alpha}$ that overapproximates the plant $\plantQ{\ctrlOut}$ with nondeterministic assignments $\prandom{x}$. 
These assignments are guarded by the evolution domain $\ivr$ and differential invariants $R(x,x_0)$ before and after the nondeterministic assignments to conservatively preserve the semantics of evolution domain constraints in differential equations, so $\tilde{\alpha}$ has the shape $\humod{x_0}{x};\ptest{\ivr};\prandom{x};\ptest{(\ivr \land R(x,x_0)}$.

\rref{thm:nondetplantimpliessafety} guarantees that a satisfied monitor for program $\tilde{\alpha}$ preserves an inductive safety property for one control step, so can be extended to loops in a straightforward way as in \cite{DBLP:journals/fmsd/MitschP16}. 
For simplicity, we assume that the world outside $\tilde{\alpha}$ is unmodified on $\scomplement{\boundvars{\tilde{\alpha}}}$, which can be lifted easily with techniques in \cite{DBLP:journals/fmsd/MitschP16}.

\begin{theorem}[Nonlinear model monitor correctness]
\label{thm:nondetplantimpliessafety}
Let $\alpha$ be a hybrid program of the form $\ctrl{x};\plant{\ctrlOut}$ with evolution domain constraint $\ivr$.
Let $\prepeat{\alpha}$ be provably safe with invariant $\inv$, so $A \limply \dbox{\prepeat{\alpha}}S$, $A\limply \inv$, $\inv \limply\dbox{\alpha}\inv$, and $\inv\limply S$ are valid.
Let $x=\boundvars{\plant{\ctrlOut}}$ and let $x_0$ be fresh variables not in $\vars{\alpha}$.
Let the differential invariants $R(x,x_0)$ be provable, so $\inv \limply \dbox{\ctrl{x};\humod{x_0}{x};\plant{\ctrlOut}}R(x,x_0)$ is valid.
Assume the system transitions from state $\stold$ to state $\stnew$, which agree on $\scomplement{\boundvars{\alpha}}$, and assume $\stold \models \inv$.
If the nonlinear model monitor 
\[\chimpp \equiv \didia{\prepeat{\bigl(\ctrl{x};\humod{x_0}{x};\ptest{\ivr};\prandom{x};\ptest{(\ivr \land R(x,x_0)\bigr)}}}\Upsilon^+\]
is satisfied, \ie, $\irelmodels{\I}{\It}{\chimpp}$,
then the invariant $\inv$ is preserved, \ie, $\imodels{\It}{\inv}$.
\end{theorem}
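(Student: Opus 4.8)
\emph{Proof plan.}~The plan is to argue, as for \rref{thm:mm-disturbance} and \rref{thm:mm-pairwisecorrectness}, that the satisfied monitor exhibits an actual run of the monitored program along which the inductive invariant $\inv$ is transported, the novelty being that the continuous step is the nondeterministic $R$-guarded overapproximation of $\plant{\ctrlOut}$. Write $\alpha_R \equiv \ctrl{x};\humod{x_0}{x};\ptest{\ivr};\prandom{x};\ptest{\ivr \land R(x,x_0)}$ for the loop body of $\chimpp$, so $\chimpp \equiv \didia{\prepeat{\alpha_R}}\Upsilon^+$. First I would unfold $\irelmodels{\I}{\It}{\chimpp}$ via \rref{def:monitorevaluation} and \rref{lem:logicalrelation} to obtain an actual run $\iaccessible[\prepeat{\alpha_R}]{\I}{\Iw[\bar\nu]}$ whose endpoint $\bar\nu$ agrees with $\stnew$ on every variable except the fresh ghost $x_0$; this step is routine and uses \rref{lem:bound}, the assumption that $\stold$ and $\stnew$ agree on $\scomplement{\boundvars{\alpha}}$, and that $\Upsilon^+$ records $\boundvars{\alpha}$ but not $x_0$. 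Since $x_0\notin\vars{\alpha}$ it is fresh, so $x_0\notin\freevars{\inv}$ and by \rref{lem:coincidence} the conclusion $\imodels{\It}{\inv}$ follows once $\imodels{\Iw[\bar\nu]}{\inv}$ is shown; for that it suffices to prove that the loop body preserves $\inv$, \ie that $\inv\limply\dbox{\alpha_R}\inv$ is valid, because then the loop induction rule with the assumption $\imodels{\I}{\inv}$ gives $\imodels{\I}{\dbox{\prepeat{\alpha_R}}\inv}$, hence $\imodels{\Iw[\bar\nu]}{\inv}$ along the run.

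The crux is proving $\inv\limply\dbox{\alpha_R}\inv$. I would take any state satisfying $\inv$ and any $\alpha_R$-run from it: $\ctrl{x}$ followed by $\humod{x_0}{x}$ reaches an intermediate state $\stintermediate$ (the ghost assignment disturbs neither $\ivr$, nor $\inv$, nor the control, since $x_0$ is fresh), and the two remaining tests force the run's final state to agree with $\stintermediate$ off $x$ and to satisfy $\ivr\land R(x,x_0)$ with $x_0$ bound to $\stintermediate(x)$. The key observation is the meaning of the hypothesis that $R(x,x_0)$ are \emph{the} differential invariants of the offline safety proof of $\inv\limply\dbox{\alpha}\inv$: that proof establishes $R(x,x_0)$ as an invariant region along $\plant{\ctrlOut}$ (this is exactly what the provable formula $\inv\limply\dbox{\ctrl{x};\humod{x_0}{x};\plant{\ctrlOut}}R(x,x_0)$ records) and then discharges the continuous postcondition \emph{by confinement}, so that $\ivr\land R(x,x_0)\limply\inv$ is valid. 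Hence the run's final state, satisfying $\ivr\land R(x,x_0)$, already satisfies $\inv$; this proves $\inv\limply\dbox{\alpha_R}\inv$ and closes the argument. The vector case $x=\boundvars{\plant{\ctrlOut}}$ is handled componentwise, and under the paper's simplification that nothing outside $\boundvars{\alpha}\cup\{x_0\}$ changes, the bound-variable and coincidence bookkeeping is unaffected.

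I expect this confinement step to be the main obstacle. A differential invariant is in general only an \emph{over}approximation of the reachable set, so $\alpha_R$ may reach states that no genuine $\plant{\ctrlOut}$-run reaches; one cannot recover $\imodels{\It}{\inv}$ by reconstructing a true trajectory (as in the closed-form solution case of \cite{DBLP:journals/fmsd/MitschP16}), but only by observing that the region $\ivr\land R(x,x_0)$ to which $\alpha_R$ is confined already lies inside the invariant region of $\inv$ --- which is precisely the structural requirement a differential-weakening-closed safety proof imposes on the chosen differential invariant $R$, and which the $R$-provability hypothesis is meant to encode (that hypothesis also keeps the monitor a sound overapproximation that still accepts genuine model behavior, but that part is not needed for invariant preservation). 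Everything else --- the monitor unfolding, the ghost-variable bookkeeping for $x_0$, and lifting the one-step result to the loop --- is routine and mirrors the earlier proofs of this paper and \cite{DBLP:journals/fmsd/MitschP16}.
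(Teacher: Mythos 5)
Your skeleton is right and you have correctly located the crux, but the step you flag as ``the main obstacle'' is a genuine gap, and the paper resolves it in the opposite direction. You try to \emph{derive} $\inv\limply\dbox{\alpha_R}\inv$ from the stated hypotheses by claiming that $\ivr\land R(x,x_0)\limply\inv$ (confinement) holds because the offline proof ``discharges the continuous postcondition by confinement.'' Neither the hypothesis $\inv\limply\dbox{\alpha}\inv$ nor the $R$-provability hypothesis $\inv\limply\dbox{\ctrl{x};\humod{x_0}{x};\plant{\ctrlOut}}R(x,x_0)$ implies this: a provable differential invariant only says $R$ holds \emph{along the flow}, not that the region $\ivr\land R$ lies inside $\inv$. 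Concretely, take $\D{x}=0$ with $\inv\equiv x=0$ and the provable differential invariant $R(x,x_0)\equiv x\geq x_0$; the original program preserves $\inv$, yet $\alpha_R$ reaches $x=5$ from $x=0$ through the test $\ptest{x\geq x_0}$, so $\inv\limply\dbox{\alpha_R}\inv$ fails, and so does your confinement claim even in the relativized form where $x_0$ carries a state satisfying $\inv$. Your assertion that the $R$-provability hypothesis ``is meant to encode'' the confinement requirement is therefore not tenable as a proof step: what is actually needed is the additional fact that the safety proof closes by differential weakening after cutting in $R$, which the theorem does not state.

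The paper's proof goes the other way around. It never attempts to derive $\inv\limply\dbox{\alpha_R}\inv$ from the original invariance; instead it proves the implication $\bigl(\inv\limply\dbox{\alpha_R}\inv\bigr)\limply\bigl(\inv\limply\dbox{\ctrl{x};\{\D{x}=f(x,\ctrlOut)~\&~\ivr\}}\inv\bigr)$ by a sequent derivation: ghost introduction of $x_0$, a differential cut with the provable $R$ (this is where the $R$-provability hypothesis is actually used), differential weakening, and then generalization of the ODE to $\prandom{x}$ guarded by $\ivr\land R(x,x_0)$. In other words, the invariance proof of the nondeterministic approximation is taken as \emph{the} safety proof one conducts, the derivation certifies that this is a sound way to also establish safety of the original model, and ModelPlex's Theorem~1 is then applied to $\prepeat{\alpha_R}$ with that proof. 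The confinement information you are missing is exactly what such a proof of $\inv\limply\dbox{\alpha_R}\inv$ supplies (read literally, the theorem's hypotheses do not supply it either; the paper presupposes the offline proof is conducted in the overapproximated form). The monitor-unfolding, ghost-variable, and coincidence bookkeeping in your first paragraph matches the paper and is fine.
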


\begin{proof}
Follows from \cite[Theorem 1]{DBLP:journals/fmsd/MitschP16} by reducing the complicated dynamics $\inv \limply \dbox{\left(\ctrl{x};\{\D{x}=\theta ~\&~ \ivr\}\right)}\inv$ to the safety proof of the nondeterministic approximation, \ie, formula \eqref{eq:nondetoverapproximation} is valid.
\begin{equation}\label{eq:nondetoverapproximation}
\begin{aligned}
&\Bigl(\underbrace{\inv \limply \dbox{\ctrl{x};\humod{x_0}{x};\ptest{\ivr};\prandom{x};\ptest{\ivr \land R(x,x_0)}}\inv}_{C}\Bigr)\\
&\limply \Bigl(\inv \limply \dbox{\ctrl{x};\{\D{x}=\theta ~\&~ \ivr\}}\inv\Bigr)
\end{aligned}
\end{equation}

\irlabel{GVR|$\text{GVR}$}
\irlabel{loopify|$[*] \text{intro}$}
\irlabel{DQinitial|$[?\ivr]$}
\irlabel{randomb|$[{:}{*}]$}

\begin{sequentdeduction}[array]
\linfer[id]
{\lclose}
{
\linfer[randomb+composeb]
{
\lsequent
{C}
{\inv \limply \dbox{\ctrl{x};~\humod{x_0}{x};\ptest{\ivr};\prandom{x};\ptest{\ivr\land R(x,x_0)}}\inv }
}
{
\linfer[GVR]
{
\lsequent
{C}
{\inv \limply \dbox{\ctrl{x};~\humod{x_0}{x};\ptest{\ivr}}\forall x\dbox{\ptest{\ivr\land R(x,x_0)}}\inv }
}
{
\linfer[composeb]
{
\lsequent
{C}
{\inv \limply \dbox{\ctrl{x};~\humod{x_0}{x};~\ptest{\ivr}}\dbox{\plantQ{\ctrlOut}\land R(x,x_0)}\dbox{\ptest{\ivr\land R(x,x_0)}}\inv }
}
{
\linfer[DW+testb]
{
\lsequent
{C}
{\inv \limply \dbox{\ctrl{x};~\humod{x_0}{x};~\ptest{\ivr};~\plantQ{\ctrlOut}\land R(x,x_0)}\dbox{\ptest{\ivr\land R(x,x_0)}}\inv }
}
{
\linfer[dC]
{
\lsequent
{C}
{\inv \limply \dbox{\ctrl{x};~\humod{x_0}{x};~\ptest{\ivr};~\plantQ{\ctrlOut} \land R(x,x_0)}\inv \hfill\triangleright}
}
{
\linfer[DQinitial]
{
\lsequent
{C}
{\inv \limply \dbox{\ctrl{x};~\humod{x_0}{x};~\ptest{\ivr};~\plantQ{\ctrlOut}}\inv }
}
{
\linfer[ghost]
{
\lsequent
{C}
{\inv \limply \dbox{\ctrl{x};~\humod{x_0}{x};~\plantQ{\ctrlOut}}\inv }
}
{
\lsequent
{C}
{\inv \limply \dbox{\ctrl{x};~\plantQ{\ctrlOut}}\inv}
}
}
}
}
}
}
}
}
\end{sequentdeduction}

We use $\irref{ghost}$ to introduce assignments to fresh variables $x_0$ that store the state of $x$ before the plant, so that it is available when describing the differential equation with the differential invariant $R(x,x_0)$ in step $\irref{dC}$.
The side condition closes by assumption $\inv \limply \dbox{\ctrl{x};\humod{x_0}{x};\plant{\ctrlOut}}R(x,x_0)$.
Differential invariants hold throughout the entire evolution of a differential equation, from beginning to end: step $\irref{DW},\irref{testb}$ makes it available after the differential equation, step $\irref{DQinitial}$ at the beginning using the axiom $\dbox{\ptest{q(x)}}\dbox{\{\D{x}=f(x)\&q(x)\}}p(x) \lbisubjunct \dbox{\{\D{x}=f(x)\&q(x)\}}p(x)$ derived from $\irref{DI}$ \cite{DBLP:journals/jar/Platzer17}.
Now that the differential equation is safeguarded by its differential invariants, in step $\irref{GVR}$ and $\irref{randomb}$ we abstract from the concrete dynamics by allowing any evolution that satisfies the differential invariants.
\end{proof}

\begin{mdframed}[style=example]
\begin{example}[Model monitor]
The flight protocol with a nondeterministic plant turns into the following monitor condition, where $\Upsilon^+$ is $\bigwedge_{z\in\{\xrel,\yrel,\theta,\wownship\xrel_0,\yrel_0,\theta_0\}}z^+{=}z$:
\begin{multline*}
\bigl\langle \ctrlPrg;~
\humod{(\xrel_0}{\xrel};\humod{\yrel_0}{\yrel};\humod{\theta_0}{\theta};~
\ptest{\top};
\prandom{\xrel};\prandom{\yrel};\prandom{\theta};
\\
\ptest{(\wownship{=}0 \limply \straightinvlhs=\straightinvlhs(0)) \land (\wownship{=}1 \limply \flightinvlhs=\flightinvlhs(0))}\Upsilon^+
\end{multline*}
\noindent with 
\begin{align*}
\straightinv &\equiv \underbrace{\straightinvlhsdef}_{\straightinvlhs} > \vownship+\vintruder \qquad \text{and}\\ 
\flightinv & \equiv \underbrace{\flightinvlhsdef{\wangular}}_{\flightinvlhs} > \vownship\vintruder + \vintruder \wangular
\end{align*}

\noindent The synthesis steps in \rref{sec:evaluation} (\rref{fig:modelmonitorpreprocessing}) produce this arithmetical monitor condition:
\begin{multline*}
\straightinv \land \theta^+{=}\theta \land \sintheta^+\xrel^+ - (\costheta^+-1)\yrel^+ = \sintheta\xrel - (\costheta-1)\yrel \land \wownship^+{=}0 \land \wintruder^+{=}0\\
\lor
\flightinv{\wownship^+} \land \sintheta^+\xrel^+ - \costheta^+\yrel^+ + \costheta^+ = \sintheta\xrel - \costheta\yrel + \costheta \land \wownship^+{=}1 \land \wintruder^+{=}0
\end{multline*}
\end{example}
\end{mdframed}

The model monitors derived by abstracting from the specific continuous dynamics to the invariant regions identified in the safety proof allow monitoring of models without symbolic closed-form solution.
The invariant regions allow for some deviation from the specific modeled dynamics, but still assume that control choices are perfectly turned into physical effects and that the real world is perfectly observable.
The techniques in the main part of this paper present two fundamental extensions with different flavors of uncertainty: actuator disturbance due to partial controllability and sensor uncertainty due to partial observability.

\section{Implementation}
\label{app:implementation}

\begin{figure}[htb]
\normalsize
\begin{equation*}
\rotatebox[origin=c]{90}{\begin{minipage}{3.5cm}\centering \textbf{Offline}\\[-.5ex]transformation by proof\end{minipage}}
\left\downarrow
\begin{aligned}
\text{Safety} && && A &\limply \dbox{\prepeat{\alpha}}S \\
\text{Safety} && \raisebox{2ex}[0ex][0ex]{by \rref{thm:nondetplantimpliessafety} $\Uparrow$} && A &\limply \dbox{\prepeat{\tilde{\alpha}}}S\\
\text{Semantical} && \raisebox{2ex}[0ex][0ex]{by \cite[Theorem 1]{DBLP:journals/fmsd/MitschP16} $\Uparrow$} && (\stold,\stnew) &\in \iaccess[\prepeat{\tilde{\alpha}}]{} \\
\text{Logical} && \raisebox{2ex}[0ex][0ex]{by \cite[Lemma 4]{DBLP:journals/fmsd/MitschP16} $\Updownarrow$} && (\stold,\stnew) &\models \didia{\prepeat{\tilde{\alpha}}}\Upsilon^+ \\
\text{Logical} && \raisebox{2ex}[0ex][0ex]{by \cite[Lemma 5]{DBLP:journals/fmsd/MitschP16} $\Uparrow$} && (\stold,\stnew) &\models \didia{\tilde{\alpha}}\Upsilon^+\\
\text{Logical} && \raisebox{2ex}[0ex][0ex]{by \rref{thm:mm-pairwisecorrectness} $\Uparrow$} && (\stold,\stnew) &\models \exists y{\in}\neighborhood{\hat{y}}{\Delta}\didia{\tilde{\alpha}}(\exists y^+,\Upsilon^+)\\
\text{Arithmetical}&& \raisebox{2ex}[0ex][0ex]{by \dL proof $\Uparrow$} && (\stold,\stnew) &\models \fchisp \textbf{ by online monitoring}
\end{aligned}
\right.
\end{equation*}
\caption{Proof-guided model transformation for synthesizing a monitor for nonlinear differential equations with sensor uncertainty by monitoring for pairwise existence of unobservable true values $y$, $y^+$.
Synthesis correct by a chain of semantical representation, logic characterization, and arithmetical form of a monitor.}
\label{fig:modelmonitorpreprocessing}
\end{figure}

The process for synthesizing monitors from hybrid systems models with actuator disturbance and sensor uncertainty is illustrated in \rref{fig:modelmonitorpreprocessing}, with original model $\alpha$ and overapproximated model $\tilde{\alpha}$.
The process is implemented as a tactic in \KeYmaeraX \cite{DBLP:conf/cade/FultonMQVP15}.
The bottom step in \rref{fig:modelmonitorpreprocessing} uses \dL automation and quantifier elimination (QE) to synthesize an easily executable quantifier-free arithmetical model monitoring condition $F(x,x^+)$ as illustrated in the following sequent proof.
The steps are a straightforward application of \dL axioms from the innermost formula going outwards.
The existential quantifier can be instantiated by applying \cite[Opt. 1]{DBLP:journals/fmsd/MitschP16}, since $\Upsilon^+\equiv x^+{=}x$.
Any remaining quantified sub-formulas are turned into their equivalent quantifier-free form using preprocessing by lemmas \eqref{eq:existsOr} and \eqref{eq:existsVacuous} with external solvers for QE, which are connected to \KeYmaeraX (\eg, Mathematica). 

\irlabel{randomd|$\didia{{:}{*}}$}

\begin{sequentdeduction}[array]
\linfer[RCFp]
{\lsequent{}{F(x\setminus y,x^+\setminus y^+)}}
{
\linfer[existsr]
{
\lsequent{}{
\exists y{\in}\neighborhood{\hat{y}}{\Delta} \Bigl(\ivr(\ctrlOut) \land \ivr(x^+)\land R(x^+,\ctrlOut)\land(\exists y^+\, \Upsilon^+)\Bigr)}
}
{
\linfer[assignd]
{
\lsequent{}{
\exists y{\in}\neighborhood{\hat{y}}{\Delta} \Bigl(\ivr(\ctrlOut)\land \exists x\, \bigl(\ivr\land R(x,\ctrlOut)\land(\exists y^+\, \Upsilon^+)\bigr)\Bigr)}
}
{
\linfer
{
\lsequent{}{
\exists y{\in}\neighborhood{\hat{y}}{\Delta}\ddiamond{\humod{x_0}{\ctrlOut}}\Bigl(\ivr(\ctrlOut) \land \exists x\, \bigl(\ivr\land R(x,x_0)\land(\exists y^+\, \Upsilon^+)\bigr)\Bigr)}
}
{
\linfer[testd]
{
\lsequent{}{
\exists y{\in}\neighborhood{\hat{y}}{\Delta}\ddiamond{\senseCtrl}\ddiamond{\humod{x_0}{x}}\Bigl(\ivr\land \exists x\, \bigl(\ivr\land R(x,x_0)\land(\exists y^+\, \Upsilon^+)\bigr)\Bigr)}
}
{
\linfer[randomd]
{
\lsequent{}{
\exists y{\in}\neighborhood{\hat{y}}{\Delta}\ddiamond{\senseCtrl}\ddiamond{\humod{x_0}{x}}\ddiamond{\ptest{\ivr}}\exists x\, \bigl(\ivr\land R(x,x_0)\land(\exists y^+\, \Upsilon^+)\bigr)}
}
{
\linfer[testd]
{
\lsequent{}{
\exists y{\in}\neighborhood{\hat{y}}{\Delta}\ddiamond{\senseCtrl}\ddiamond{\humod{x_0}{x}}\ddiamond{\ptest{\ivr}}\ddiamond{\prandom{x}}\bigl(\ivr\land R(x,x_0)\land(\exists y^+\,\Upsilon^+)\bigr)}
}
{\lsequent{}{\exists y{\in}\neighborhood{\hat{y}}{\Delta}\ddiamond{\senseCtrl}\ddiamond{\humod{x_0}{x}}\ddiamond{\ptest{\ivr}}\ddiamond{\prandom{x}}\ddiamond{\ptest{\ivr\land R(x,x_0)}}(\exists y^+\,\Upsilon^+)}
}
}
}
}
}
}
}
\end{sequentdeduction}

\KeYmaeraX applies time-bounded heuristics to reduce formula size before calling external solvers, since the performance of external solvers in this final step depends on the number of variables and quantifier alternations in the monitor condition.
As a result, the proof step numbers reported here may slightly differ when rerunning the synthesis on other platforms.

\end{document}